\documentclass[preprint]{sigplanconf}

\newcommand{\pponly}[1]{}
\newcommand{\rronly}[1]{{#1}}

\newtheorem{example}{Example}

\newtheorem{proposition}{Proposition}

\newtheorem{theorem}{Theorem}
\newtheorem{definition}{Definition}
\newenvironment{proof}
{\par\noindent\textsc{Proof.}\;\sl}
{\hfill$\Box$\par}

\usepackage[latin1]{inputenc}
\usepackage[T1]{fontenc}
\usepackage{array,multirow,tabularx}
\usepackage{setspace}
\usepackage{latexsym,stmaryrd,amsmath,amssymb,mathtools}
\usepackage{mysymb}
\usepackage{epsfig,graphics,subfigure,wrapfig,alltt,verbatim}
\usepackage{pstricks,pstricks-add,pst-node,pst-plot}
\usepackage{color,myxcolor}
\usepackage{xspace,url}
\usepackage{paralist}
\usepackage{galois}
\usepackage{listings}
\usepackage{hyperref}
\usepackage{todonotes}

\lstdefinestyle{default}{%
  basicstyle={\small\ttfamily},%
  commentstyle={\slfamily},%
  keywordstyle=\bf,%
  columns=fullflexible,%
  keepspaces=true,%
  mathescape,%
  escapechar=\#,%
  escapeinside={/*}{*/},%
  escapebegin={ \color{black!70} -- },%
  escapeend={ },%
  xleftmargin=2em%
}
\lstset{style=default}

\pdfcompresslevel=9
\hypersetup{
  colorlinks=true,
  breaklinks=true,
  urlcolor= blue,
  linkcolor= blue,
  bookmarksopen=true,
}

\newcommand\ie{\textit{i.e.}}
\newcommand\eg{\textit{e.g.}}

\newcommand\vx{\vec{x}}
\newcommand\vd{\vec{d}}
\newcommand\vb{\vec{b}}
\newcommand\vc{\vec{c}}
\newcommand\vv{\vec{v}}
\newcommand\reals{\mathbb{R}}
\newcommand\realsi{\bar{\mathbb{R}}}
\newcommand\true{\mathit{true}}
\newcommand\false{\mathit{false}}
\newcommand\denotation[1]{[\![ #1 ]\!]}
\newcommand\polyhed{\mathcal{CP}}
\newcommand\vr{\vec{r}}

\newcommand\vm{\vec{m}}

\newenvironment{spmatrix}
{\begin{small}\begin{pmatrix}}
{\end{pmatrix}\end{small}}

\newcommand{\vecb}[2]{\begin{pmatrix} #1 \\ #2 \end{pmatrix}}
\newcommand{\vect}[3]{\begin{pmatrix} #1 \\ #2 \\ #3 \end{pmatrix}}

\newcommand{\ceil}[1]{\lceil #1 \rceil}

\newcommand{\vceil}[1]{\left\lceil #1 \right\rceil}
\newcommand{\vfloor}[1]{\left\lfloor #1 \right\rfloor}

\newcommand{\diag}{\mathrm{Diag}}

\newcommand{\defeq}{\stackrel{\vartriangle}{=}}
\newcommand{\rpmatrix}[2]{
  \begin{pmatrix}
    #1 & - #2 \\
    #2 & #1
  \end{pmatrix}
}
\newcommand{\decomp}{\mathit{Frac}}

\newcommand\scrM{\mathcal{M}}
\newcommand\scrG{\mathcal{G}}

\title{Abstract Acceleration of General Linear Loops}

\authorinfo{Bertrand Jeannet}{INRIA}{bertrand.jeannet@inria.fr}
\authorinfo{Peter Schrammel}{University of Oxford}{peter.schrammel@cs.ox.ac.uk}
\authorinfo{Sriram Sankaranarayanan}{University of Colorado, Boulder}{srirams@colorado.edu}




\begin{document}

\maketitle


\begin{abstract}
  We present abstract acceleration techniques for computing loop
  invariants for numerical programs with linear assignments and
  conditionals. Whereas abstract interpretation techniques typically
  over-approximate the set of reachable states iteratively, abstract
  acceleration captures the effect of the loop with a single,
  non-iterative transfer function applied to the initial states at the
  loop head.  In contrast to previous acceleration
  techniques, our approach applies to any linear loop without
  restrictions. Its novelty lies in the use of the \emph{Jordan normal
    form} decomposition of the loop body to derive symbolic
  expressions for the entries of the matrix modeling the effect of $n
  \geq 0$ iterations of the loop. The entries of such a matrix depend
  on $n$ through complex polynomial, exponential and trigonometric
  functions. Therefore, we introduces an \emph{abstract
    domain for matrices} that captures the linear inequality relations
  between these complex expressions.  This results in an abstract
  matrix for describing the fixpoint semantics of the loop.

 Our approach integrates smoothly into standard abstract interpreters
 and can handle programs with nested loops and loops containing
 conditional branches.  
 We evaluate it over small but complex loops that are
 commonly found in control software, comparing it with other tools for
 computing linear loop invariants. The loops in our benchmarks
 typically exhibit polynomial, exponential and oscillatory behaviors
 that present challenges to existing approaches.  Our approach finds
 non-trivial invariants to prove useful bounds on the values of
 variables for such loops, clearly outperforming the existing
 approaches in terms of precision while exhibiting good performance.
\end{abstract}

\section{Introduction}
\label{sec:intro}

We present a simple yet effective way of inferring accurate loop
invariants of \emph{linear loops}, \ie~ loops containing linear
assignments and guards,
as exemplified by the programs shown in  Figs.~\ref{fig:parabola-code} and%
~\ref{fig:thermostat}. Such loops are particularly common in control
and digital signal processing software due to the presence of
components such as filters, integrators, iterative loops for equation
solving that compute square roots, cube roots and loops that
interpolate complex functions using splines. Static analysis of such
programs using standard abstract interpretation theory over polyhedral
abstract domains often incurs a significant loss of precision due to
the use of extrapolation techniques (widening) to force termination of
the analysis. However, widening is well-known to be too imprecise for
such loops. In fact, specialized domains such as ellipsoids and
arithmetic-geometric progressions were proposed to deal with two
frequently occurring patterns that are encountered in control loops
\cite{Feret04,Feret05b}.  These domains enable static analyzers for
control systems, \eg, \textsc{Astr\'ee}, to find the
strong loop invariants that can establish bounds on the variables or
the absence of run-time errors~\cite{CCFMMR09}.

In this paper, we present a promising alternative approach to such
loops by capturing the effect of a linear loop by means of a
so-called \emph{meta-transition} \cite{Boigelot96} that maps an
initial set of states to an invariant at the loop head. This process
is commonly termed \emph{acceleration}.
%
%
The idea of accelerations was first studied for communicating
finite-state machines \cite{Boigelot96} and counter automata
\cite{FL02}. Such accelerations can be either \emph{exact} (see
\cite{BFLP08} for a survey), or \emph{abstract}~\cite{GH06}.  Abstract
acceleration seeks to devise a transformer that maps initial sets of
states to the best correct over-approximation of
the invariant at the loop head for a given abstract domain, typically
the convex polyhedra domain.  Abstract acceleration enables static
analyzers to avoid widening for the innermost loops of the program by
replacing them by meta-transitions. As discussed in \cite{SJ12b} and
observed experimentally in \cite{SJ11a}, abstract acceleration
presents the following benefits w.r.t. widening:
\begin{compactenum}[(i)]
\item It is locally \emph{more precise} because it takes into
  account the loop body in the extrapolation it performs, whereas
  widening considers only sequences of invariants.
\item It performs \emph{more predictable} approximations because it is
  monotonic (unlike widening).
\item It makes the analysis \emph{more efficient} by speeding up
  convergence to a fixed point. For programs without nested loops, our
  acceleration renders the program loop-free.
\end{compactenum}
Apart from abstract interpretation, techniques such as
symbolic execution and bounded model-checking, that are
especially efficient over loop-free systems, can benefit from loop
acceleration.

\begin{figure}[t]
\psset{dotsize=3pt 0,arrows=->,nodesep=0.6ex}
\begin{tabular}{@{}l@{\hspace{1.5em}}r@{\hspace{0.1em}}l@{\hspace{6em}}l@{}}
&& \texttt{real x,y,z,t;} & \\
&& \multicolumn{2}{@{}l@{}}{\texttt{assume(-2<=x<=2 and -2<=y<=2 ~and -2<=z<=2);}} \\
&& \texttt{t := 0;} \\
\Rnode{lh1}{loop head} && \Rnode{lh}{$\bullet$} \\
&& \texttt{while \rnode{lg}{(x+y <= 30)}} & \rnode{lg1}{loop guard} \\
&& \multicolumn{2}{@{}l@{}}{\hspace*{1em} \texttt{\{ x := x+y; y := y+z; z := z+1; }} \\
&& \multicolumn{2}{@{}l@{}}{\hspace*{1em} \texttt{~~t := t+1; \}}} \\
\Rnode{le1}{loop exit}  && \Rnode{le}{$\bullet$}
\end{tabular}
  \ncline{lh1}{lh}
  \ncline{le1}{le}
  \ncline{lg1}{lg}
  \caption{Linear loop having a cubic behavior.}
  \label{fig:parabola-code}
\end{figure}



In this paper we present a novel approach to computing abstract
accelerations.  Our approach is \emph{non-iterative}, avoiding
widening. 
We focus on the linear transformation induced by a linear
loop body modeled by a square matrix $A$. We seek to
approximate the set of matrices $\{ I, A, A^2, \ldots \}$, which
represent the possible linear transformations that can be applied to
the initial state of the program to obtain the current state. This set
could be defined as fixed point equations on matrices and solved
iteratively on a suitable abstract domain for matrices. However, such
an approach does not avoid widening and suffers from efficiency
issues, because a matrix for a program with $n$ variables has $n^2$
entries, thus requiring a matrix abstract domain with $n^2$
different dimensions.


\paragraph{Contributions. } The overall contribution of this paper is
an abstract acceleration technique for computing the precise effect of
any linear loop on an input predicate. It relies on the
computation of the Jordan normal form of the square matrix~$A$ for the
loop body. Being based on abstract acceleration, it integrates
smoothly into an abstract interpretation-based analyzer and can be
exploited for the analysis of general programs with nested loops and
conditionals by transforming them into multiple loops around a 
program location.

The first technical contribution is an abstract acceleration
method for computing, non-iteratively, an approximation of the set
$\{ I, A, A^2, \ldots \}$ in an \emph{abstract domain for
  matrices}. It enables the analysis of any infinite, non-guarded
linear loop. The
main idea is to consider the Jordan normal form~$J$ of the
transformation matrix~$A$. Indeed, the particular
structure of the Jordan normal form~$J$  has two advantages:
\begin{compactenum}[(i)]
\item It results in closed-form expressions for the coefficients of
  $J^n$, on which asymptotic analysis techniques can be applied that
  remove the need for widening,
\item It reduces the number of different coefficients of $J^n$ to at
  most the dimension of the vector space (efficiency issue).
\end{compactenum}

This first contribution involves a conceptually simple but
technically involved
derivation that we omit in this paper and which can be found in
\pponly{the extended version~\cite{JSS13}}\rronly{the appendix}.

The second technical contribution addresses loops with guards that
are conjunctions of linear inequalities. We present an original
technique for bounding the number of loop iterations. Once again, we
utilize the Jordan normal form. These two techniques together make our
approach more powerful than ellipsoidal methods (\eg~\cite{RJGF12})
that are restricted to stable loops, because the guard is only weakly
taken into account.

We evaluate our approach by comparing efficiency and the
precision of the invariants produced with other invariant synthesis
approaches, including abstract interpreters and constraint-based
approaches.  The evaluation is carried out over a series of simple
loops, alone or inside outer loops (such as in
Fig.~\ref{fig:thermostat}), exhibiting behaviors such as polynomial,
stable and unstable exponentials, and inward spirals (damped
oscillators). We show the ability of our approach to \emph{discover
polyhedral invariants} that are sound over-approximations of
the reachable state space.  
%
For such systems, any inductive reasoning \emph{in} a linear domain as
performed by, \eg, standard abstract interpretation with Kleene
iteration and widening is often unable to find a linear invariant other than
$\true$. In contrast, our approach is shown to find useful bounds
for many of the program variables that appear in such loops.
To our knowledge, our method is the first one able to bound the
variables in the \textit{convoyCar} example of Sankaranarayanan et al. \cite{SSM04}.

\paragraph{Outline. } We introduce some basic notions in
\S\ref{sec:preliminaries}.  \S\ref{sec:overview} gives an overview of
the ideas of this paper.  \S\S\ref{sec:matrix} to \ref{sec:guard}
explain the contributions in detail.  \S\ref{sec:exp} summarizes our
experimental results and \S\ref{sec:rw} discusses related work
before \S\ref{sec:concl} concludes.

\section{Preliminaries}
\label{sec:preliminaries}

In this section, we recall the notions of linear assertions and
convex polyhedra, and we define the model of linear loops for
which we will propose acceleration methods.

\subsection{Linear assertions and convex polyhedra}
\label{sec:linear_assert_poly}

Let $x_1,\ldots,x_p$ be real-valued variables, collectively
forming a $p \times 1$ column vector $\vx$. A linear expression
is written as an inner product $\vc \cdot \vx$, wherein $\vc \in \reals^p$.
A linear inequality is of the
form $\vc \cdot \vx \leq d$ with $d \in \reals$.
A \emph{linear assertion} is a conjunction of linear
inequalities:
$
\varphi(\vx):\ \mathop{\bigwedge}_{i=1}^q \vc_i \cdot \vx  \leq
d_i
$.
The assertion $\varphi$ is succinctly written as $C \vx \leq \vec{d}$,
where $C$ is an $q \times p$ matrix whose $i^\text{th}$ row is
$\vc_i$. Likewise, $\vec{d}$ is an $q \times 1$ column vector whose
$j^\text{th}$ coefficient is $d_j$.
The linear assertion consisting of the single inequality $0 \leq 0$
represents the assertion $\true$ while the assertion $ 1 \leq 0$
represents the assertion $\false$.

Given a linear assertion $\varphi$, the set $\denotation{\varphi} = \{
\vx \in \reals^p\ |\ \varphi(\vx) \} $ is a convex polyhedron. The set
of all convex polyhedra contained in $\reals^p$ is denoted by
$\polyhed(\reals^p)$.  We recall that a convex polyhedron $P \in
\polyhed(\reals^p)$ can be represented in two ways:
\begin{enumerate}[(a)]
\item The \emph{constraint representation} $C \vx \leq \vd$ with matrix
$C$ and vector $\vd$.
\item The \emph{generator representation} with  a set of vertices
  $V= \{ \vv_1, \ldots, \vv_k \}$ and rays $R =
  \{\vr_1,\ldots,\vr_l\}$, wherein $\vx \in P$ iff
  $$
  \vx = \sum_{i=1}^k \lambda_i \vv_i  + \sum_{j=1}^l \mu_j
  \vr_j \text{ with } \lambda_i,\mu_j \geq 0 \text{ and }
  \textstyle\sum_i \lambda_i = 1
  \vspace*{-0.4ex}
  $$
\end{enumerate}

\subsection{Linear loops}
\label{sec:linearloop}

We consider \emph{linear loops} consisting of a
while loop, the body of which is a set of assignments without
tests and the condition is a linear assertion.
\begin{definition}[Linear loop]
A linear loop $(G,\vec{h}, A,\vb)$ is a program fragment of the form
\vspace*{-0.4ex}
$$
\mathsf{while}(G \vx \leq \vec{h})\ \vx := A \vx + \vec{b};
\vspace*{-0.4ex}
$$
where $\varphi:\ G\vec{x} \leq \vec{h}$ is a linear assertion over
the state variables
$\vx$ representing the loop condition and $(A,\vb)$ is the linear
transformation associated with the loop body.

\end{definition}



\begin{figure}[t]
  \begin{lstlisting}
real t,te,time;
assume(te=14 and 16<=t and t<=17);
while true {
  time := 0; /* timer measuring  duration in each mode */
  while (t<=22) { /* heating mode */
    t := 15/16*t-1/16*te+1; time++;
  }
  time := 0;
  while (t>=18){ /* cooling mode */
    t := 15/16*t-1/16*te; time++;
  }
}\end{lstlisting}
  \caption{A thermostat system, composed of two simple loops
    inside a outer loop.}
  \label{fig:thermostat}
\end{figure}

Figure~\ref{fig:parabola-code} shows an example of a linear
loop with a guard that computes $y=x(x+1)/2$ by the successive difference
method.  We give another example below.
\begin{example}[Thermostat]\label{Ex:iir-filter-example}
Figure~\ref{fig:thermostat} models the operation of a thermostat that
switches between the heating and cooling modes over time. The
variables $\mathtt{t}, \mathtt{te}$ model the room and outside
temperatures, respectively. We wish to show that the value of $t$
remains within some bounds that are close to the switch points $18,
22$ units.
\end{example}

Any linear loop $(G,\vec{h},A,\vb)$ can be \emph{homogenized} by
introducing a new variable $\xi$ that is a place holder for the
constant $1$ to a loop of the form
\vspace*{-0.4ex}
$$
\begin{array}{l}
  \mathsf{while}
  \left(
    \begin{pmatrix} G & \vec{h} \end{pmatrix}
    \begin{pmatrix} \vx \\ \xi \end{pmatrix}
    \leq \vec{0}
\right)\ \left\{
  \ \ \  \begin{pmatrix} \vx \\ \xi \end{pmatrix} := \begin{pmatrix} A  & \vb \\
    0 & 1 \end{pmatrix} \begin{pmatrix} \vx \\ \xi \end{pmatrix}; \right\}
\end{array}
$$ Henceforth, we will use the notation $(G \rightarrow A)$ to denote
the \emph{homogenized linear loop} $\mathsf{while}\ (G\vx\leq
\vec{0})\{\ \vx' := A\vx;\ \}$.

\begin{definition}[Semantic function]
The semantic function of a linear loop $(G \rightarrow A)$ over sets
of states is the functional
\vspace*{-0.4ex}
$$
(G \rightarrow A)(X) \defeq
A(X\cap\denotation{G\vx\leq 0}) \quad,\quad X\subseteq \mathbb{R}^p
\vspace*{-0.4ex}
$$
where $A(Y)$ denotes the image of a set $Y$ by the
transformation $A$. 
\end{definition}

\subsection{Convex and template polyhedra abstract domains}
\label{sec:polyabsdom}

The set of convex polyhedra $\polyhed(\reals^p)$ ordered by inclusion
is a lattice with the greatest lower bound $\sqcap$ being the set
intersection and the least upper bound $\sqcup$ being the convex hull.
The definition of the domain includes an abstraction function
$\alpha$ that maps sets of states to a polyhedral abstraction and a corresponding
concretization function $\gamma$.
We refer the reader to the original work of Cousot and Halbwachs for
a complete description~\cite{Cousot+Halbwachs/78/Automatic}.

It is well-known that the abstract domain operations such as
join and transfer function across non-invertible assignments
are computationally expensive. As a result, many \emph{weakly-relational}
domains such as octagons and templates have been proposed~\cite{SSM05,Mine/01/Octagon}.
Given a matrix $T\in\reals^{q\times p}$ of $q$ linear
expressions,  $\polyhed_T(\reals^p)\subsetneq \polyhed(\reals^p)$ denotes the set of
\emph{template polyhedra} on $T$:
\vspace*{-0.4ex}
$$
\polyhed_T(\reals^n) =
\Bigl\{ P\in\polyhed(\reals^p) \;|\; \exists \vec{u}\in \realsi^q :
P = \{ \vec{x} \;|\; T\vec{x} \leq \vec{u} \} \Bigr\}
\vspace*{-0.4ex}
$$
where $\realsi$ denotes $\reals\cup\{\infty\}$. A template
polyhedron will be denoted by $(T,\vec{u})$. If $T$ is is fixed,
it is uniquely defined by the vector $\vec{u}$.
$\polyhed_T(\reals^p)$ ordered by inclusion is a complete lattice.
The abstraction  $\alpha_T$ and concretization $\gamma_T$
are defined elsewhere~\cite{SSM05}.
%

\section{Overview}
\label{sec:overview}

This section provides a general overview of the ideas in this paper, starting
with abstract acceleration techniques.

\paragraph{Abstract Acceleration}
\label{sec:abstractaccel}
\label{sec:absacc:principle}
Given a set of initial states
$X_0$ and a loop with the semantic function $\tau$,
the smallest loop invariant $X$ containing $X_0$ can be formally written as
\vspace*{-0.4ex}
$$
X = \tau^*(X_0) \defeq \bigcup_{n\geq 0} \tau^n(X_0)
\vspace*{-0.4ex}
$$

Abstract acceleration seeks an ``optimal'' approximation of $\tau^*$
in a given abstract domain with abstraction
function $\alpha$~\cite{GH06}.
Whereas the standard abstract interpretation approach seeks
to solve the  fix point equation
$
Y' = \alpha(X_0)\sqcup
\alpha\bigl(\tau(Y')\bigr)
$
by iteratively computing
\vspace*{-0.0ex}
\begin{equation} \label{eq:absfixpoint}
Y=(\alpha\circ\tau)^*(\alpha(X_0))
\vspace*{-0.4ex}
\end{equation}
the abstract acceleration approach uses $\tau^*$ to compute
\vspace*{-0.4ex}
\begin{equation} \label{eq:absacc}
Z = \alpha\circ\tau^*(\alpha(X_0))
\vspace*{-0.4ex}
\end{equation}
Classically, Eqn.~(\ref{eq:absfixpoint}) is known as the minimal fixed
point (MFP) solution of the reachability problem whereas
Eqn.~(\ref{eq:absacc}) is called the Merge-Over-All-Paths (MOP)
solution. The latter is known to yield more precise results \cite{KU77}.

The technical challenge of abstract acceleration is thus to obtain
a closed-form approximation of $\alpha\circ\tau^*$ that avoids
both inductive reasoning in the abstract domain and the use of
widening.

\begin{figure*}[t]
  \centering

  \hfill
  \parbox{0.19\linewidth}{
    \centering
    \psset{unit=1.3em,griddots=5,subgriddiv=1,gridlabels=0pt,labelsep=0.2pt}
    \begin{pspicture}(0,-1)(5,5.5)
      \pspolygon[fillstyle=solid,fillcolor=LightGrey,linestyle=none]
      (5,0)(1,0)(1,0.75)(5,4.75)(5,4.75)
      \psdots(1,0)(1,0.75)
      \psline[arrowsize=3pt 4,arrowinset=0,arrows=->](1,0)(5,0)
      \psline[arrowsize=3pt 4,arrowinset=0,arrows=->](3,2.75)(5,4.75)
      \pspolygon[fillstyle=solid,fillcolor=darkgray]
      (1,0)(1,0.75)(3.25,3)(3.375,3)(3.375,2.375)
      \psgrid(0,0)(0,0)(5,5)
      \psaxes[arrows=->](0,0)(0,0)(5,5)
      \rput(5.8,0){{\small $m_1$}}
      \rput(0,5.35){{\small $m_2$}}
    \end{pspicture}

    \caption{
      Octagons defined by $\varphi_\scrM$ in
      Example~\ref{ex:running1} (light gray),
      and by $\varphi_{\scrM'}$ in Ex.~\ref{ex:running4} (dark
      gray).
    }
    \label{fig:running1}
  }\hfill
  \parbox{0.35\linewidth}{
    \centering
    \psset{unit=1.1em,griddots=5,subgriddiv=1,gridlabels=0pt,Dx=2,labelsep=0.2pt}
    \vspace{1ex}
    \begin{pspicture}(0,-1)(17,7.5)
      \pspolygon[fillstyle=solid,fillcolor=LightGrey,linestyle=none](1,0)(1,2)(1.5,3.75)(6.25,7)(17,7)(17,0)
      \psline(17,0)(1,0)(1,2)(1.5,3.75)(6.25,7)
      \psdots(1,0)(1,2)(1.5,3.75)
      \psline[arrowsize=3pt 4,arrowinset=0,arrows=->](1.5,3.75)(3,4.75)
      \psline[arrowsize=3pt 4,arrowinset=0,arrows=->](3,0)(5,0)
      \psframe[fillstyle=solid,fillcolor=darkgray](1,0)(3,2)
      \psline[linewidth=0.5pt,linestyle=dashed](1,2)(1.5,3)(2.25,4)(3.375,5)(5.06,6)(7.59,7)
      \psline[linewidth=0.5pt,linestyle=dashed](2,1)(3,2)(4.5,3)(6.75,4)(10.125,5)(15.1875,6)(17,6.2)
      \psline[linewidth=0.5pt,linestyle=dashed](3,0)(4.5,1)(6.75,2)(10.125,3)(15.1875,4)(17,4.2)
      \psdots[dotstyle=pentagon](1,2)(1.5,3)(2.25,4)(3.375,5)(5.06,6)
      \psdots[dotstyle=pentagon](7.59,7)
      \psdots[dotstyle=pentagon](2,1)(3,2)(4.5,3)
      \psdots[dotstyle=pentagon](6.75,4)(10.125,5)(15.1875,6)
      \psdots[dotstyle=pentagon](3,0)(4.5,1)
      \psdots[dotstyle=pentagon](6.75,2)(10.125,3)(15.1875,4)
      \psgrid(0,0)(0,0)(17,7)
      \psaxes[arrows=->](0,0)(0,0)(17,7)
      \rput(17.7,0){{\small $x$}}
      \rput(0,7.45){{\small $y$}}
    \end{pspicture}
    \caption{Trajectories (dashed) starting from $(1,2)$, $(2,1)$ and $(3,0)$ in Ex.~\ref{ex:running1},
      initial set of states $X$ (dark gray), and invariant
      $Y$ approximating $A^*X$ (light gray) in
      Ex.~\ref{ex:running2}.
    }
    \label{fig:running2}
  }\hfill
  \parbox{0.39\linewidth}{
    \centering
    \psset{unit=1.1em,griddots=5,subgriddiv=1,gridlabels=0pt,Dx=2,labelsep=0.2pt}
    \vspace{1ex}
    \begin{pspicture}(0,-1)(17,5.5)
      \pspolygon[fillstyle=solid,fillcolor=LightGrey,linestyle=none](1,0)(1,2)(1.928,4)(17,4)(17,0)
      \psline(17,0)(1,0)(1,2)(1.928,4)(17,4)
      \psdots(1,0)(1,2)(1.928,4)
      \psline[arrowsize=3pt 4,arrowinset=0,arrows=->](3,0)(5,0)
      \pspolygon[fillstyle=solid,fillcolor=DarkGrey](1,0)(1,2)(1.5,3.75)(1.875,4)(15.19,4)(15.19,3.375)(3,0)
      \psdots(1,0)(1,2)(1.5,3.75)(1.875,4)(15.19,4)(15.19,3.375)(3,0)
      \psframe[fillstyle=solid,fillcolor=darkgray](1,0)(3,2)
      \psline[linewidth=0.5pt,linestyle=dashed](1,2)(1.5,3)(2.25,4)
      \psline[linewidth=0.5pt,linestyle=dashed](2,1)(3,2)(4.5,3)(6.75,4)
      \psline[linewidth=0.5pt,linestyle=dashed](3,0)(4.5,1)(6.75,2)(10.125,3)(15.1875,4)
      \psdots[dotstyle=pentagon](1,2)(1.5,3)(2.25,4)
      \psdots[dotstyle=pentagon](2,1)(3,2)(4.5,3)
      \psdots[dotstyle=pentagon](6.75,4)
      \psdots[dotstyle=pentagon](3,0)(4.5,1)(6.75,2)(10.125,3)(15.1875,4)
      \psgrid(0,0)(0,0)(17,5)
      \psaxes[arrows=->](0,0)(0,0)(17,5)
      \rput(17.7,0){{\small $x$}}
      \rput(0,5.45){{\small $y$}}
      \psline[linestyle=dashed](-0.5,3)(17.5,3)
      \rput(15.5,2.3){$\mathbf{y\leq 3}$}
    \end{pspicture}

    \caption{Initial set of states $X$ (dark gray), invariant
      $Y$
      approximating $(G\srightarrow A)^*(X)$
      in Ex.~\ref{ex:running3} (light gray), and the better
      invariant $Z$ (medium gray)
      discovered in Ex.~\ref{ex:running4} by exploiting the number of iterations.
    }
    \label{fig:running3}
  }
\end{figure*}

\paragraph{Abstract acceleration without guards using
  matrix abstract domains}
\label{sec:overview:noguard}
We  now present an overview for linear loop without guards, with semantic
function $\tau=(\true\rightarrow A)$. For any set $X$, we have
$$
\tau^*(X)=\bigcup_{n\geq 0} \tau^n(X) = \bigcup_{n\geq 0} A^n X
$$ Our approach computes a finitely representable approximation
$\scrM$ of the countably infinite set of matrices $\bigcup_{n\geq 0}
A^n$. Thereafter, abstract acceleration simply applies $\scrM$ to $X$.

The following example illustrates the first step.
\begin{example}[Exponential 1/4]\label{ex:running1}
  We consider the program
  \begin{quote}\tt
    while(true)\{ x=1.5*x; y=y+1 \}
  \end{quote}
  of which Fig.~\ref{fig:running1} depicts some trajectories.
  After homogenization, the loop's semantic function is
  \vspace*{-0.4ex}
  $$
  G=
  \begin{pmatrix}
      0 & 0 & 0
    \end{pmatrix}
  \rightarrow
  A=
    \begin{pmatrix}
      1.5 & 0 & 0 \\
      0 & 1 & 1 \\
      0 & 0 & 1
    \end{pmatrix}
    \vspace*{-0.4ex}
  $$
  Here, it is easy to obtain a closed-form symbolic
  expression of $A^n$:
  \vspace*{-0.4ex}
  $$
  A^n=
  \begin{pmatrix}
    1.5^n & 0 & 0 \\
    0 & 1 & n \\
    0 & 0 & 1
  \end{pmatrix}
  \vspace*{-0.4ex}
  $$
  The idea for approximating $\bigcup_{n\geq 0} A^n$ is to
  consider a set of matrices of the form %
  \vspace*{-0.4ex}
  $$
  \scrM=\left\{
    \begin{pmatrix}
      m_1 & 0 & 0 \\
      0 & 1 & m_2 \\
      0 & 0 & 1
    \end{pmatrix} \left|\;
      \varphi_\scrM(m_1,m_2)
    \right.
  \right\}
  \vspace*{-0.4ex}
  $$
  with $\varphi_\scrM$ a linear assertion in a template domain
  such that $\forall n\sgeq 0: A^n\subseteq \scrM$.
 Using an octagonal template, for instance, the following assertion satisfies the condition above:
  \vspace*{-0.4ex}
  \begin{equation*}
    \varphi_M:\left\{\begin{array}{*{5}{@{\,}c}@{\,}}
	m_1 &\in&  [ 1, +\infty] &=& [\inf\limits_{n\geq 0} 1.5^n, \sup\limits_{n\geq 0} 1.5^n]\\
	m_2 &\in& [0,+\infty] &=& [\inf\limits_{n\geq 0} n, \sup\limits_{n\geq 0} n] \\
	m_1\sadd m_2 &\in& [1,+\infty] &=& [\inf\limits_{n\geq 0} (1.5^n\sadd n), \sup\limits_{n\geq 0} (1.5^n\sadd n)] \\
	m_1\ssub m_2 &\in& [0.25,+\infty] &=& [\inf\limits_{n\geq 0} (1.5^n\ssub n), \sup\limits_{n\geq 0} (1.5^n\ssub n)]
      \end{array}\right.
    \vspace*{-0.4ex}
  \end{equation*}
  These constraints actually define the smallest octagon on entries
  $m_1,m_2$ that makes $\scrM$ an overapproximation of $A^*\seq\{ A^n
  \,|\, n\sgeq 0\}$. It is depicted in Fig.~\ref{fig:running1}. The
  technique to evaluate the non-linear $\inf$ and $\sup$ expressions
  above is described in \S\ref{sec:accelerating:bound}.
\end{example}
This is the first important idea of the paper.
\S\ref{sec:matrix} formalizes the notion of abstract matrices, whereas \S\ref{sec:accelerating} will exploit the
Jordan normal form of $A$ to effectively compute $\alpha(A^*)$ for
any matrix $A$, \ie~to accelerate the loop body.

\paragraph{Applying the abstraction to acceleration}
\label{sec:overview:matmul}
The next step is to apply the matrix abstraction $\scrM=\alpha(A^*)$
to an abstract element $X$. For illustration, assume that both $\scrM$
and $X$ are defined by linear assertions $\varphi_\scrM$ and
$\varphi_{X}$ from the polyhedral domain or some sub-polyhedral
domains. Applying the set of matrices $\scrM$ to $X$ amounts to
computing (an approximation of)
\vspace*{-0.4ex}
\begin{equation}\label{eq:running:matmul}
\left\{
  \begin{pmatrix}
    m_1 & 0 & 0 \\
    0 & 1 & m_2 \\
    0 & 0 & 1
  \end{pmatrix}
  \begin{pmatrix}
    x \\ y \\ 1
  \end{pmatrix}
  \;\Bigg|
  \begin{array}{c@{\,}c@{\,}ll}
    \varphi_{\scrM}(m_1,m_2) & \wedge \\
    \varphi_{X}(x,y)
  \end{array}
\right\}
\vspace*{-0.4ex}
\end{equation}
This is not trivial, as the matrix multiplication generates bilinear
expressions.  \S\ref{sec:matrix} proposes a general approach for
performing the abstract multiplication. The result of the procedure
is illustrated by the example that follows:

\begin{example}[Exponential 2/4]\label{ex:running2}
  Assume that in Ex.~\ref{ex:running1} and
  Eqn.~\eqref{eq:running:matmul}, $\varphi_X=(x\in[1,3]\wedge
  y\in[1,2])$. We compute the abstract matrix multiplication
  \vspace*{-0.4ex}
  \begin{align*}
    \scrM X =&
    \left\{
      \begin{pmatrix}
	m_1\cdot x \\
	1 + m_2\cdot y \\
	1
      \end{pmatrix}
      \;\Bigg|
      \begin{array}{ll}
      1\sleq x\sleq 3\wedge 0\sleq y\sleq 2 & \wedge \\
      m_1\sgeq 1 \wedge m_2\sgeq 0 & \wedge \\
      m_1\ssub m_2\sgeq 0.25
    \end{array}
    \right\} \\
  \end{align*}
  \begin{align*}
    \subseteq&
    \left\{
      \begin{pmatrix}
	x' \\
	y' \\
	1
      \end{pmatrix}
      \;\Bigg|
      \begin{array}{ll}
      x'\sgeq 1 \wedge y'\sgeq 0 & \wedge \\
      x'\ssub 1.5y'\sgeq-4.125 & \wedge \\
      3.5x'\ssub y' \sgeq 1.5
    \end{array}
    \right\} = Y
    \vspace*{-0.4ex}
  \end{align*}
  where $Y$ is the result obtained by the method described in
  \S\ref{sec:matrix} and is depicted in Fig.~\ref{fig:running2}.
\end{example}

\paragraph{Handling Guards}
\label{sec:overview:guard}
We consider loops of the form $\tau=G\rightarrow A$ and illlustrate
how the loop condition (guard) $G$ is handled.  A simple approach
takes the guard into account after the fixpoint of the loop without
guard is computed:
\vspace*{-0.4ex}
$$
(G\rightarrow A)^*(X) \;\subseteq\; X \;\cup\; (G\rightarrow A)\circ
(G\rightarrow A^*)(X)
\vspace*{-0.4ex}
$$
which is then abstracted with $X\sqcup (G\rightarrow
A)\circ(G\rightarrow \alpha(A^*))(X)$.
However, such an approach is often unsatisfactory.
\begin{example}[Exponential 3/4]\label{ex:running3}
  We add the guard $y\leq 3$ to our running
  Ex.~\ref{ex:running1}. Using the approximation above with $X$ as
  in Ex.~\ref{ex:running2}, we obtain the invariant $Y$
  depicted in Fig.~\ref{fig:running3}.  In this result $y$ is
  bounded but $x$ remains unbounded. 
\end{example}

Our idea is based on the observation that the bound on $y$ induced by the
guard implies a bound $N$ on the maximum number of iterations for
any initial state in $X$. Once this bound is known, we can exploit
the knowledge $\tau^*(X)=\bigcup_{n=0}^{N}\tau^n(X)$ and consider
the better approximation
\vspace*{-0.4ex}
$$
\textstyle
(G\rightarrow A)^*(X)\subseteq X\sqcup (G\rightarrow
A)\circ\left(G\rightarrow \alpha\left(\bigcup\limits_{n=0}^{N-1} A^n\right)\right)(X)
\vspace*{-0.4ex}
$$

The set of matrices $\bigcup\limits_{n=0}^{N-1} A^n$ is then
approximated in the same way as $A^*$ in
\S\ref{sec:overview:noguard}. We could perform an iterative
computation for small $N$, however, a polyhedral analysis without
widening operator is intractably expensive for hundreds or thousands
iterations, while our method is both, precise and efficient.

\begin{example}[Exponential 4/4]\label{ex:running4}
  In our running example, it is easy to see that the initial
  condition $y_0\in[0,2]$ together with the guard $y\sleq 3$
  implies that the maximum number of iteration is $N=4$.  Thus, we
  can consider the set of matrices
  $\scrM'=\alpha(\bigcup\limits_{n=0}^{N-1} A^n)$ defined by the
  following assertion satisfies the condition $\varphi_{\scrM'}$ above:
  \begin{equation*}\label{ex:running:octagon}
    \left\{\begin{array}{*{5}{@{\,}c}@{\,}}
	m_1 &\in&  [ 1, 3.375] &=& [\inf\limits_{0\leq n\leq 3} 1.5^n, \sup\limits_{0\leq n\leq 3} 1.5^n]\\
	m_2 &\in& [0,3] &=& [\inf\limits_{0\leq n\leq 3} n, \sup\limits_{0\leq n\leq 3} n] \\
	m_1+m_2 &\in& [1,6.375] &=& [\inf\limits_{0\leq n\leq 3} (1.5^n\sadd n), \sup\limits_{0\leq n\leq 3} (1.5^n\sadd n)] \\
	m_1-m_2 &\in& [0.25,1] &=& [\inf\limits_{0\leq n\leq 3} (1.5^n\ssub n), \sup\limits_{0\leq n\leq 3} (1.5^n\ssub n)]
      \end{array}\right.
  \end{equation*}
  which is depicted in Fig.~\ref{fig:running1}.  Using the formula
  above, we obtain the invariant $Z$ depicted in
  Fig.~\ref{fig:running3}, which is much more precise than the
  invariant $Y$ discovered with the simple technique.
\end{example}
\S\ref{sec:guard} presents the technique for over-approximating the
number of iterations of a loop where the guard $G$ is a general linear
assertion, $A$ is any matrix and $X$ any polyhedron. This is the third main
contribution of the paper.

\paragraph{An Illustrative Comparison}
\label{sec:summary-exp}
The capability of our method to compute over-approximations of
the reachable state space goes beyond state-of-the-art invariant inference techniques.
The following table lists the bounds obtained on the variables of
the thermostat of
Ex.~\ref{Ex:iir-filter-example}, Fig.~\ref{fig:thermostat} for
some competing techniques:

\smallskip\noindent
\begin{tabular}{|@{}c@{}|@{}c@{}|@{}c@{}|@{}c@{}|}
\hline
\textsc{InterProc} \cite{interproc} & \textsc{Astr\'ee} \cite{BCC+03} &
\textsc{Sting} \cite{CSS03,SSM04} & this paper \\
\hline
\multicolumn{4}{|l|}{heating:} \\
\hline
$16\sleq t$& $16\sleq t\sleq 22.5$&
$16\sleq t\sleq 22.5$ & $16\sleq t\sleq 22.5$\\
$0\sleq \mathit{time}$ & $0\sleq
\mathit{time}$ & $0\sleq \mathit{time}\sleq 13$ & $0\sleq \mathit{time}\sleq 9.76$ \\
\hline
\multicolumn{4}{|l|}{cooling:} \\
\hline
$t\sleq 22.5$ & $17.75\sleq t\sleq 22.5$ & $17.75\sleq t\sleq 22.5$ & $17.75\sleq t\sleq 22.5$\\
$0\sleq \mathit{time}$& $0\sleq
\mathit{time}$& $0\sleq \mathit{time}\sleq 19$ & $0\sleq \mathit{time}\sleq 12.79$\\
\hline
\end{tabular}

\smallskip
There are many other invariant generation techniques and tools for
linear systems (see \S\ref{sec:rw}). Many approaches sacrifice
precision for speed, and therefore are inaccurate on the type of
linear loops considered here. Other, more specialized approaches
require conditions such as Lyapunov-stability, diagonalizability of the
matrix, polynomial behavior (nilpotency or monoidal property), or
handle only integer loops.

\paragraph{Outline of the rest of the paper}
\label{sec:outline}

The rest of the paper develops the ideas illustrated in this section.
\S\ref{sec:matrix} formalizes the notion of matrix abstract domains
and presents a technique for the abstract matrix multiplication
operation.  \S\ref{sec:accelerating} shows how to approximate the set
of matrices $A^*=\bigcup_{n\geq 0} A^n$ for any square matrix $A$ in
order to accelerate loops without guards.  \S\ref{sec:guard} presents
a technique for taking the guard of loops into account by
approximating $N$, the maximum number of iterations possible from a
given set of initial states. \S\ref{sec:exp} presents the experimental
evaluation on various kinds of linear loops, possibly embedded into
outer loops. \S\ref{sec:rw} discusses related work and
\S\ref{sec:concl} concludes.

\section{Matrix abstract domains}
\label{sec:matrix}

In this section we present abstract domains for matrices.  We will use
abstract matrices to represent the accelerated abstract transformer of
a linear loop.  Hence, the main operation on abstract matrices we use
in this paper is abstract matrix multiplication (\S\ref{sec:matmul}).

\subsection{Extending abstract domains from vectors to matrices}
\label{sec:matrix:def}

We abstract sets of square matrices in $\reals^{p\times p}$
by viewing them as vectors in $\reals^{p^2}$ and
by reusing known abstract domains over vectors.
However, since the concrete matrices we will be dealing with
belong to subspaces of $\reals^{p\times p}$, we first introduce
\emph{matrix shapes} that allow us to reduce the number of entries
in abstract matrices.
\begin{definition}[Matrix shape]
  A \emph{matrix shape} $\Psi:\reals^m \rightarrow \reals^{p\times p}$
  is a bijective, linear map from $m$-dimensional vectors to $p \times
  p $ square matrices. Intuitively, matrix shapes represent matrices
  whose entries are linear (or affine) combinations of $m > 0$
  entries.

\end{definition}
\begin{example}\label{ex:running1-matrixshape}
  In Ex.~\ref{ex:running1}, we implicitly considered the
  matrix shape
  $
    \begin{array}[t]{@{\,}c@{\,}c@{\,}c@{\,}c@{\,}c@{\,}}
      \Psi: & \reals^2 &\rightarrow& \mathit{Mat}_\Psi & \subseteq \reals^{3\times 3} \\
      & \begin{pmatrix} m_1 \\ m_2 \end{pmatrix}
      &\mapsto&
      \begin{pmatrix}
        m_1 & 0 & 0 \\
        0 & 1 & m_2 \\
        0 & 0 & 1
      \end{pmatrix}
    \end{array}
  $
\end{example}

The set $\mathit{Mat}_\Psi=\{ \Psi(\vm) \;|\; \vm\in \reals^m\}$
  represents all possible matrices that can be formed by any vector
  $\vm$. It represents a subspace of the vector space of all matrices.

A matrix shape $\Psi$ induces an isomorphism between $\reals^m$ and
$\mathit{Mat}_\Psi$: $\Psi(a_1\vm_1+ a_2\vm_2) = a_1 \Psi(\vm_1) + a_2
\Psi(\vm_2)$.

Abstract domain for matrices are constructed by (a) choosing
an abstract domain for vectors $\vm$ and (b) specifying a matrix
shape $\Psi$. Given an abstract domain $A$ for vectors $\vm$
(eg, the polyhedral domain) and a shape $\Psi$, the corresponding
matrix abstract domain defines a domain over subsets of $\mathit{Mat}_{\Psi}$.

\begin{example}\label{ex:running1-matrixdomain}
Recall the matrix shape $\Psi: \reals^2 \rightarrow \mathit{Mat}_\Psi
\subseteq \reals^{3\times 3}$ from Ex.~\ref{ex:running1-matrixshape}.
Consider the octagon $P=(m_1\sgeq 1 \wedge m_2\sgeq 0 \wedge m_1\sadd
m_2\sgeq 1 \wedge m_1\ssub m_2\sgeq 0.25) \in \mathit{Oct}(\reals^2)$.
Together they represent an abstract matrix $(P,\Psi)$ which
represents the set of matrices:
  $$ \left\{
    \begin{pmatrix}
      m_1 & 0 & 0 \\
      0 & 1 & m_2 \\
      0 & 0 & 1
    \end{pmatrix} \left|
    \begin{array}{c}
      m_1\sgeq 1,  m_2\sgeq 0, \\ m_1\sadd m_2\sgeq 1, \\ m_1\ssub m_2\sgeq 0.25
    \end{array} \right.
  \right\}.
  $$
\end{example}

\begin{definition}[Abstract domain for matrices induced by $\Psi$]\label{def:APsi} ~\\ %
  Let $A\subseteq \wp(\reals^m)$ be an abstract domain
  for $m$-dimensional vectors ordered by set inclusion and with the
  abstraction function
  $\alpha_A:\wp(\reals^m)\rightarrow A$.
  Then, $\Psi(A)$ ordered by set inclusion
  is an abstract domain for $\wp(\mathit{Mat}_\Psi)$ with the
  abstraction function
  \vspace*{-0.4ex}
  \begin{align*}
    \alpha_{\Psi(A)}(\scrM) &=
    \Psi\scirc\alpha_A\scirc\Psi^{-1}(\scrM) \,.
  \vspace*{-0.4ex}
  \end{align*}

\end{definition}

Note that since $\Psi$ is an isomorphism: the lattices $A$ and
$\Psi(A)$ can be shown to be isomorphic.  For generality, the
\emph{base domain} $A$ can be an arbitrary abstract domain for the
data type of the matrix entries. In our examples, we specifically
discuss common numerical domains such as convex polyhedra, intervals,
octagons and templates.

\subsection{Abstract matrix multiplication}
\label{sec:matmul}

We investigate now the problem of convex polyhedra matrix
multiplication, motivated by the need for applying an
acceleration $\alpha(A^*)$ to an abstract property $X$ as
shown in \S\ref{sec:overview:matmul}.

\paragraph{The problem.}
We consider two convex polyhedra matrices
$\scrM_s=\Psi_s(P_s)$. We aim at computing an approximation of
\begin{equation}
  \label{eq:matmul}
  \scrM=\scrM_1\scrM_2=\{ M_1M_2 \;|\; M_1\in\scrM_1 \wedge  M_2\in\scrM_2 \}
\end{equation}
under the form of a convex polyhedron on the coefficients of the
resulting matrix. Observe that $\scrM$ may be non-convex as shown
by the following example.
\begin{example}\label{ex:matrix-matrix-multiplication-1}
Consider the two abstract matrices
\vspace*{-0.4ex}
$$
\begin{array}{@{}l@{}}
  \scrM_1\seq\left \{
    \left(
    \begin{array}{@{}c@{\,}c@{}}
      1\ssub m & 0\\
      0 & m
    \end{array}\right)
    \Big|\,
    m\in[0,1]
  \right\}
  \;
  \scrM_2\seq
  \left\{
    \begin{pmatrix}
      1\ssub n \\
      n
    \end{pmatrix}
    \Big|\,
    n\in[0,1]
  \right\}
\end{array}
\vspace*{-0.4ex}
$$
We have
\vspace*{-0.4ex}
$$
\scrM_1\scrM_2=\left\{
\begin{pmatrix}
(1-m)(1-n) \\
mn
\end{pmatrix}
\Big|\;
m\in[0,1] \wedge n\in[0,1]
\right\}
\vspace*{-0.4ex}
$$
\parbox{0.64\linewidth}{
\noindent This corresponds to the well-known non-convex set of points
$x\in[0,1]\wedge y\in[0,1]\wedge (x\ssub y)^2\sadd 1\sgeq 2(x\sadd
y)$, depicted to the right.
}\hfill\parbox{0.33\linewidth}{
    \centering
    \psset{unit=6em,griddots=5,subgriddiv=5,gridlabels=7pt}

    \begin{pspicture}(0,0)(1,1)
      \psgrid(0,0)(0,0)(1,1)
      \psline(0,1)(0,0)
      \psline(0,0.9)(0.1,0)
      \psline(0,0.8)(0.2,0)
      \psline(0,0.7)(0.3,0)
      \psline(0,0.6)(0.4,0)
      \psline(0,0.5)(0.5,0)
      \psline(0,0.4)(0.6,0)
      \psline(0,0.3)(0.7,0)
      \psline(0,0.2)(0.8,0)
      \psline(0,0.1)(0.9,0)
      \psline(0,0)(1,0)
    \end{pspicture}
}\null
\vspace*{0.7ex}
\end{example}
We may follow at least two approaches for approximating $\scrM_1\scrM_2$:
\begin{itemize}
\item Either we consider the constraint representations of
  $\scrM_1$ and $\scrM_2$, and we resort to optimization techniques
  to obtain a template polyhedra approximation of the product;
\item Or we consider their generator representations to obtain a
  convex polyhedron approximating the product.
\end{itemize}
We opted in this paper for the second, \ie~the \emph{generator} approach,
which leads to more accurate results:
\begin{compactitem}[--]
\item it delivers general convex polyhedra, more expressive than
template polyhedra obtained by optimization;
\item it computes the best correct approximation in the convex
  polyhedra domain for bounded matrices (Thm.~\ref{thm:matmul}
  below), whereas in the constraint approach the exact
  optimization problem involves bilinear expressions (see
  Eqn.~(\ref{eq:running:matmul}) or
  Ex.~\ref{ex:matrix-matrix-multiplication-1}) and must be
  relaxed in practice.
\end{compactitem}




\paragraph{Multiplying abstract matrices using generators.}
Given two finite sets of matrices $X = \{ X_1,\ldots,X_m\}$ and $Y=\{ Y_1,\ldots,Y_k\}$
we write $X \otimes Y $  to denote the set
$$
X \otimes Y = \{ X_i Y_j\ |\ X_i \in X,\ Y_j \in Y \}
$$
If $\scrM_s$ is expressed as a system of
matrix vertices $V_s=(V_{s,i_s})$ and matrix rays
$R_s=(R_{s,j_s})$, $s=1,2$, then
\vspace*{-0.4ex}
\begin{multline*}\textstyle
\scrM_s =
\Bigl\{ \sum_{i_s} \lambda_{i_s} V_{s,i_s} + \sum_{j_s}\mu_{j_s}
R_{s,j_s} \;\Big|\;
\begin{array}{@{}l@{}}
  \lambda_{i_s},\mu_{j_s}\geq 0 \\
  \sum_{i_s}\lambda_{i_s}=1
\end{array}
\Bigr\}
\end{multline*}
and Eqn.~(\ref{eq:matmul}) can be
rewritten
\vspace*{-0.4ex}
\begin{multline}\label{eq:matmulgen}
  \scrM=\scrM_1\scrM_2 = \\
  \left\{
    \begin{array}{@{\,}l@{\,}}
      \sum_{i_1,i_2} \lambda_{1,i_1}\lambda_{2,i_2}
      V_{1,i_1} V_{2,i_2} \\
      +\sum_{i_1,j_2} \lambda_{1,i_1}\mu_{2,j_2}
      V_{1,i_1} R_{2,j_2} \\
      +\sum_{i_2,j_1} \mu_{1,j_1}\lambda_{2,i_2}
      R_{1,j_1} V_{2,i_2} \\
	+\sum_{j_1,j_2} \mu_{1,j_1}\mu_{2,j_2}
	R_{1,j_1} R_{2,j_2}
      \end{array}
      \left|
     \begin{array}{@{\,}l@{\,}}
       \lambda_{1,i_1}, \mu_{1,j_1}\geq 0 \\
       \lambda_{2,i_2}, \mu_{2,j_2}\geq 0 \\
       \sum_{i_1} \lambda_{1,i_1}=1 \\
       \sum_{i_2} \lambda_{2,i_2}=1 \\
      \end{array}
      \right.\right\}
  \end{multline}
We obtain the following result:
\begin{theorem}\label{thm:matmul}
Let $\scrM_1$ and $\scrM_2$ be two abstract matrices expressed as a
system of vertices and rays: $V_1, R_1$ for $\scrM_1$ and
$V_2, R_2$ for $\scrM_2$.  The matrix polyhedron $\widetilde{\scrM}$
defined by the set of vertices
\centerline{$V = V_1 \otimes V_2\vspace*{-0.4ex}$}
and the set of rays\\
\centerline{$R = (V_1 \otimes R_2) \cup (R_1 \otimes V_2) \cup (R_1 \otimes R_2)$}
is an overapproximation of $\scrM=\scrM_1\scrM_2$.

Moreover, if $\scrM_1$ and $\scrM_2$ are bounded, \ie~ if
$R_1=R_2=\emptyset$, then $\widetilde{\scrM}$ is the smallest
polyhedron matrix containing $\scrM$.
\end{theorem}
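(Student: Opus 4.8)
The plan is to establish the two assertions in turn --- first that $\widetilde{\scrM}$ over-approximates $\scrM$, then that in the bounded case it coincides with the convex hull of $\scrM$ --- both resting on the bilinear expansion of the matrix product already written out in Eqn.~(\ref{eq:matmulgen}) together with the elementary observation that a product of two convex-combination coefficient families is again a convex-combination coefficient family.

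For soundness, I would pick an arbitrary $M = M_1 M_2 \in \scrM$ with $M_1\in\scrM_1$ and $M_2\in\scrM_2$, write each factor in its generator form (coefficients $\lambda_{1,i_1},\mu_{1,j_1}$ for $M_1$ and $\lambda_{2,i_2},\mu_{2,j_2}$ for $M_2$), and expand $M_1 M_2$ by distributivity of matrix multiplication exactly as in Eqn.~(\ref{eq:matmulgen}). The key points are: the coefficients of the terms $V_{1,i_1}V_{2,i_2}$ are the products $\lambda_{1,i_1}\lambda_{2,i_2}$, which are nonnegative and sum to $\big(\sum_{i_1}\lambda_{1,i_1}\big)\big(\sum_{i_2}\lambda_{2,i_2}\big) = 1$; and the coefficients of the terms $V_{1,i_1}R_{2,j_2}$, $R_{1,j_1}V_{2,i_2}$ and $R_{1,j_1}R_{2,j_2}$ are the products $\lambda_{1,i_1}\mu_{2,j_2}$, $\mu_{1,j_1}\lambda_{2,i_2}$, $\mu_{1,j_1}\mu_{2,j_2}$, which are merely nonnegative. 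Since $V_{1,i_1}V_{2,i_2}\in V$ and the three families of ray-products lie in $V_1\otimes R_2$, $R_1\otimes V_2$, $R_1\otimes R_2$, hence in $R$, the generator characterization of convex polyhedra recalled in \S\ref{sec:linear_assert_poly} shows $M\in\widetilde{\scrM}$. As $M$ was arbitrary, $\scrM\subseteq\widetilde{\scrM}$.

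For the bounded case I would note that $R_1=R_2=\emptyset$ forces $R=\emptyset$, so $\widetilde{\scrM}=\mathrm{conv}(V_1\otimes V_2)$ is a polytope. Every vertex $V_{1,i_1}$ belongs to $\scrM_1$ and every $V_{2,i_2}$ to $\scrM_2$ (a vertex satisfies the generator constraints with its own coefficient set to $1$), so $V_{1,i_1}V_{2,i_2}\in\scrM_1\scrM_2=\scrM$; hence $V_1\otimes V_2\subseteq\scrM$ and therefore $\widetilde{\scrM}=\mathrm{conv}(V_1\otimes V_2)\subseteq\mathrm{conv}(\scrM)$. Combining this with the soundness inclusion $\scrM\subseteq\widetilde{\scrM}$ and the convexity of $\widetilde{\scrM}$ gives $\mathrm{conv}(\scrM)\subseteq\widetilde{\scrM}$, so $\widetilde{\scrM}=\mathrm{conv}(\scrM)$. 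Since any convex polyhedron containing $\scrM$ must contain $\mathrm{conv}(\scrM)=\widetilde{\scrM}$, the latter is the smallest one.

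The argument is mostly bookkeeping; the one place deserving care is to confirm that $R$ contains precisely the right products --- that no cross-terms such as $V_1\otimes R_1$ are missing and that the vertex--ray and ray--ray products genuinely act as rays rather than vertices of the result. This is exactly what the coefficient computation above settles: only the coefficients attached to the products in $V_1\otimes V_2$ are subject to a normalization constraint, while all others are unconstrained nonnegative reals. It is also worth stating explicitly that $V_1\otimes V_2$ need not be an irredundant vertex system of $\widetilde{\scrM}$ (some products may fall in the interior), which is harmless since it is used only as a generating set.
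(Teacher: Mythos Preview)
Your proof is correct and follows essentially the same route as the paper's own proof: both expand the product bilinearly via Eqn.~(\ref{eq:matmulgen}), observe that the products $\lambda_{1,i_1}\lambda_{2,i_2}$ form a convex-combination family while the remaining coefficient products are merely nonnegative, and in the bounded case conclude by noting that each $V_{1,i_1}V_{2,i_2}$ lies in $\scrM$. Your write-up is a little more explicit about the bookkeeping (in particular the closing remarks on redundancy of generators and on why the three cross-product families act as rays), but there is no substantive difference in approach.
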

\begin{proof}
  For the first part of the theorem, we observe that
  in Eqn.~(\ref{eq:matmulgen}), $\sum_{i_1,i_2} \lambda_{1,i_1}\lambda_{2,i_2} = 1$ and the
  other similar sums are positive and unbounded. Hence \\[0.5ex]
  $
  \scrM\subseteq \widetilde{\scrM}=
  \left\{
    \begin{array}{@{\,}l@{\,}}
      \sum_{i_1,i_2} \lambda'_{i_1,i_2}
      V_{1,i_1} V_{2,i_2} \\[0.5ex]
      +\sum_{i_1,j_2} \mu'_{i_1,j_2}
      V_{1,i_1} R_{2,j_2} \\[0.5ex]
      +\sum_{i_2,j_1} \mu''_{j_1,i_2}
      R_{1,j_1} V_{2,i_2} \\[0.5ex]
      +\sum_{j_1,j_2} \mu'''_{j_1,j_2}
      R_{1,j_1} R_{2,j_2}
    \end{array}
    \left|
      \begin{array}{@{\,}l@{\,}}
        \lambda'_{i_1,i_2}\geq 0 \\[0.5ex]
        \mu'_{i_1,j_2}\geq 0 \\[0.5ex]
        \mu''_{j_1,i_2}, \mu'''_{j_1,j_2}\geq 0 \\[0.5ex]
        \sum_{i_1,i_2} \lambda'_{i_1,i_2}=1
      \end{array}
    \right.
  \right\}
  $ \\[0.5ex]
  which proves the first statement.
  Now assume that $R_1=R_2=\emptyset$, which means that both
  $\scrM_1$ and $\scrM_2$ are bounded and that $R=\emptyset$.
  We will show that all the generator vertices of $\widetilde{\scrM}$ belong
  to $\scrM$, hence any of their convex combination (\ie~
  any element of $\widetilde{\scrM}$) belongs to the convex closure
  of $\scrM$:
  Consider the generator vertex $V_{1,i_1}V_{2,i_2}\in V$ of
  $\widetilde{\scrM}$. By taking in
  Eqn.~(\ref{eq:matmulgen}) $\lambda_{s,i_s}=1$ and $i'_s\neq
  i_s\implies \lambda_{s,i'_s}=0$ for $s=1,2$, we obtain that $V\in
  \scrM$.
\end{proof}
\begin{example}
  In Ex.~\ref{ex:running2}, we multiplied the unbounded set of matrices
  $\scrM$ depicted in Fig.~\ref{fig:running1} and defined by 2
  vertices and 2 rays by the bounded set of vectors $X$ depicted in
  Fig.~\ref{fig:running2} and generated by 4 vertices, which
  resulted in the convex polyhedra $Y$ depicted in
  Fig.~\ref{fig:running2} which is generated by 3 vertices and 2
  rays (we omit redundant generators).
\end{example}

Regarding complexity, this operation is quadratic
w.r.t. the number of generators, which is itself
exponential in the worst-case w.r.t. the number of constraints. In
practice, we did not face complexity problems in our experiments,
apart from the high-dimensional \textit{convoyCar3} example
described in \S\ref{sec:exp}.

Observe that by using generators and applying Thm.~\ref{thm:matmul},
we lose information about matrix shapes. In our case, we will perform
only abstract matrix-vector multiplication, hence the number of
entries of the product matrix (actually a vector) will be the
dimension of the space $\reals^p$.  The multiplication of an abstract
matrix $\mathcal{M}$ and a concrete matrix $R$ ($\scrM R$ or $R\scrM$)
can be computed exactly by considering the generators of
$\mathcal{M}$.

\section{Abstract acceleration of loops without guards}
\label{sec:accelerating}
In this section, we consider loops of the form\\[0.2ex]
\centerline{$\mathsf{while}(\mbox{true})\{ \vx  := A \vx \}$}
Given an initial set of states $X$ at the loop head, the least
inductive invariant at loop head is\\[0.2ex]
\centerline{$A^*X=\{ A^n X \;|\; n \geq 0, n \in \mathbb{Z} \} \,.$}
Our goal is to compute a template polyhedra matrix $\scrM$ such
that\\[0.2ex]
\centerline{$\alpha_T(A^*) \sqsubseteq \scrM$}
given a template $T$ on the coefficients of the matrices $M\in A^*$.

The key observation underlying our approach uses a well-known
result from matrix algebra. Any square matrix $A$ can be written in a
special form known as the Jordan normal form using a change of
basis transformation $R$:\\[0.2ex]
\centerline{$A = R^{-1} J R \text{ and } J = R A R^{-1}$}
such that for any $n$ \\[0.2ex]
\centerline{$A^n = R^{-1} J^n R \text{ and } J^n = R A^n R^{-1} \,.$}
As a result, instead of computing an abstraction of the set\\[0.2ex]
\centerline{$A^* = \{ I, A, A^2, A^3, \ldots \}$}
we will abstract the set\\[0.2ex]
\centerline{$J^* = \{ I, J, J^2, J^3, \ldots \}\ .$}
\begin{enumerate}
\item The block diagonal structure of $J$ allows us to
  \emph{symbolically} compute the coefficients of $J^n$ as a
  function of $n$.  \S\ref{sec:accelerating:jordan} presents
  details on the Jordan form and the symbolic representation of~$J^n$.
\item The form of $J$ immediately dictates the matrix shape
  $\Psi(\vm)$ and the matrix subspace $\mathit{Mat}_\Psi$
  containing $J^*$.
\item We then consider a fixed set $T$ of \emph{linear template
  expressions} over $\vm$. We use asymptotic analysis to compute
  bounds on each expression in the template.
  \S\ref{sec:accelerating:bound} explains how this is computed.
\item Once we have computed an abstraction $\scrM \sqsupseteq
  \alpha(J^*)$, we will return into the original basis by computing
  $R^{-1}\scrM R$ to obtain an abstraction for $A^*$, which is the
  desired loop acceleration.
\end{enumerate}
In this section, we assume arbitrary precision numerical
computations.  The use of finite precision computations is addressed
in \S\ref{sec:exp}.



\subsection{The real Jordan normal form of a matrix}
\label{sec:accelerating:jordan}

A classical linear algebra result is that any matrix
$A\in\reals^{p\times p}$
can be put in a \emph{real Jordan normal form} by considering an
appropriate basis \cite{LT84}:
\begin{gather*}
A=R^{-1}
\overbrace{\begin{pmatrix}
  J_1 \\
  & \ddots \\
  && J_r
\end{pmatrix}
}^{\mathclap{\displaystyle J= \diag[J_1\ldots J_r]}}
R
\; , \;
J_s \defeq
\bordermatrix{
  & \scriptstyle 0 & \scriptstyle 1 && \scriptstyle p_s-1 \cr
  & \Lambda_s & I \cr
  && \ddots & \ddots \cr
  &&& \Lambda_s & I \cr
  &&&& \Lambda_s
} \\
\begin{array}{@{}r@{\,}l@{}}
\text{with}\quad
\Lambda_s &=\lambda_s \text{ and } I=1 \\[1ex]
& \text{if $\lambda_s$ is a real eigenvalue of $M$,} \\[0.5ex]
\text{or}\quad
\Lambda_s &=
  \rpmatrix{\lambda_s\cos \theta_s}{\lambda_s\sin \theta_s} \text{
    and }
  I =
  \begin{pmatrix} 1 & 0 \\ 0 & 1 \end{pmatrix} \\[2ex]
  & \parbox[t]{20em}{if $\lambda_se^{i\theta_s}$ and
    $\lambda_s e^{-i\theta_s}$ are complex
    conjugate eigenvalues of $M$, with $\lambda_s>0$ and $\theta_s\in[0,\pi[$.}
\end{array}
\end{gather*}
The Jordan form is useful because we can write a closed-form expression
for its $n^\text{th}$ power $J^n=\diag[J_1^n\ldots J_r^n]$. Each block $J_s^n$
is given by
\begin{gather}\label{eq:Jsn}
J_s^n = \begin{pmatrix}
  \Lambda_s^n & \tbinom{n}{1}\Lambda_s^{n-1} & \hdotsfor{2} & \tbinom{n}{p_s-1}\Lambda_s^{n-p_s+1}\\
	      & \Lambda_s^n                 & \tbinom{n}{1}\Lambda_s^{n-1} && \vdots    \\
	      &                             & \ddots       && \tbinom{n}{1}\Lambda_s^{n-1} \\
	      &                             &              && \Lambda_s^n
\end{pmatrix} \\
\text{and }\Lambda_s^n =
  \lambda_s^n\begin{pmatrix}1\end{pmatrix}
  \quad \text{or} \quad
  \lambda_s^n\rpmatrix{\cos n\theta_s}{\sin n\theta_s}
  \notag
\end{gather}
with the convention that
$\tbinom{n}{k}\lambda_s^{n-k}=0$ for $k>n$. \smallskip

Hence, coefficients of $J_s^n$ have the general form
\begin{equation}\label{eq:varphi}
\varphi[\lambda,\theta,r,k](n)=\tbinom{n}{k}\lambda^{n-k}
\cos((n-k)\theta-r\tfrac{\pi}{2})
\end{equation}
with $\lambda\sgeq 0$, $\theta\in[0,\pi]$, $r\in\{0,1\}$ and
$k\sgeq 0$, in which $r\seq 1$ enables converting the cosine into a sine.
The precise expressions for $\lambda, \theta, r, k$ as functions
of the position $i,j$ in the matrix $J$ are omitted here to
preserve the clarity of presentation. \smallskip

Next, we observe that the closed form $J_s^n$ specifies the
required shape $\Psi_s(\vm_s)$ for abstracting $J_s^n$ for all $n
\geq 0$. For instance, if $\lambda_s$ is a real eigenvalue, we
have
\vspace*{-0.4ex}
$$
\Psi_s:
\begin{spmatrix}
  m_0\\
  \vdots\\
  m_{p_s-1}
\end{spmatrix}
\mapsto
\begin{spmatrix}
  m_0 & m_1 & \hdotsfor{1} & m_{p_s}-1 \\
      & m_0 & m_1 & \vdots    \\
      &     & \ddots & m_1 \\
      &     &  & m_0
\end{spmatrix}
\vspace*{-0.4ex}
$$
Likewise, $\Psi(\vm)$ for the entire matrix $J^n$ is obtained by the
union of the parameters for each individual $J_s^n$.
\begin{proposition}
Given the structure of the real Jordan normal form $J$, we may fix a matrix shape
$\Psi(\vm)$ such that $J^* = \{ I, J, J^2, \ldots\}\subseteq
\mathit{Mat}_\Psi$ and $\vm\in\reals^m$ with $m\leq p$, where $p$
is the dimension of the square matrix $J$.
\end{proposition}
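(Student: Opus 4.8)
The plan is to exploit the explicit block structure of $J$ and of each power $J_s^n$ given in Eqn.~\eqref{eq:Jsn}. First I would fix, for each Jordan block $J_s$ of size $p_s$ (counting the $2\times2$ rotation blocks as contributing a single "slot" per superdiagonal level), the shape $\Psi_s$ already exhibited above: a block upper-triangular Toeplitz matrix whose free parameters are the distinct superdiagonal entries. Observe from Eqn.~\eqref{eq:Jsn} that the $(a,b)$ entry of $J_s^n$ depends only on the difference $b-a$ (it is $\binom{n}{b-a}\Lambda_s^{\,n-(b-a)}$ for $b\ge a$, and $0$ for $b<a$), so $J_s^n$ is constant along diagonals; hence for every $n\ge0$ we have $J_s^n\in\mathit{Mat}_{\Psi_s}$, i.e.\ $J_s^n=\Psi_s(\vm_s^{(n)})$ for a suitable parameter vector $\vm_s^{(n)}\in\reals^{m_s}$.

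Next I would count parameters. For a real block of size $p_s$, the distinct diagonal levels are $0,1,\dots,p_s-1$, giving $m_s=p_s$ scalar parameters. For a complex-conjugate block the "size" $p_s$ in the bordermatrix notation counts $p_s$ copies of the $2\times2$ matrix $\Lambda_s$ along the diagonal, so the full block is $2p_s\times 2p_s$; each diagonal level $0,\dots,p_s-1$ contributes one $2\times2$ Toeplitz-structured cell $\binom{n}{k}\Lambda_s^{\,n-k}$, which because of the $\rpmatrix{\cdot}{\cdot}$ rotation shape is determined by exactly two real numbers (its cosine-part and sine-part). So a complex block of ambient dimension $2p_s$ also contributes $2p_s/2\cdot 1$ — i.e.\ $2p_s$ — no: it contributes $2p_s$ real parameters only if we are careless; the rotation structure couples them, giving $2p_s$ ambient dimensions but still $2p_s$... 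I would state it cleanly as: each diagonal level of a complex block uses $2$ real parameters and there are $p_s$ levels, so $m_s = 2p_s$, which equals the ambient dimension of that block. Thus in every case $m_s \le (\text{ambient dimension of block }s)$, and summing over $s=1,\dots,r$ and taking $\Psi(\vm)=\diag[\Psi_1(\vm_1),\dots,\Psi_r(\vm_r)]$ with $\vm=(\vm_1,\dots,\vm_r)$ gives a matrix shape with $m=\sum_s m_s \le \sum_s(\text{ambient dim of block }s)=p$. Since each $J_s^n\in\mathit{Mat}_{\Psi_s}$, the block-diagonal assembly gives $J^n\in\mathit{Mat}_\Psi$ for all $n\ge0$, hence $J^*\subseteq\mathit{Mat}_\Psi$.

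The only subtlety — and the step I would be most careful about — is the parameter bookkeeping for the complex-conjugate blocks: one must verify that the $2\times2$ cells appearing at a fixed superdiagonal level in Eqn.~\eqref{eq:Jsn} genuinely all have the common rotation-scaled shape $\lambda_s^{\,n-k}\rpmatrix{\cos(n-k)\theta_s}{\sin(n-k)\theta_s}$ up to the scalar $\binom{n}{k}$, so that two real parameters per level suffice and the shape map $\Psi_s$ is well-defined and linear. This follows directly from the displayed formula for $\Lambda_s^n$ together with the fact that products and real-linear combinations of matrices of the form $\rpmatrix{a}{b}$ stay in that form. Linearity of $\Psi$ and the bijectivity onto $\mathit{Mat}_\Psi$ (needed for "matrix shape" in the sense of the definition) are then immediate from the block-diagonal Toeplitz construction. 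Bijectivity in particular forces us to quotient out any repeated blocks only at the level of choosing representatives; since the statement only claims $m\le p$, the bound is safe even if $J$ has repeated eigenvalues, because distinct blocks get distinct parameter coordinates. This completes the argument.
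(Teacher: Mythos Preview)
Your argument is correct and follows exactly the reasoning the paper sketches: the paper does not give a formal proof of this proposition but simply observes, from the displayed closed form of $J_s^n$ in Eqn.~\eqref{eq:Jsn}, that each block is upper-triangular Toeplitz in $\Lambda_s$-cells, exhibits $\Psi_s$ explicitly for the real case, and remarks that $\Psi$ for the whole matrix is obtained by taking the ``union of the parameters for each individual $J_s^n$.'' Your per-block parameter count ($p_s$ for a real block of size $p_s$, and $2p_s$ for a complex block of ambient size $2p_s$, summing to $m=p$) is precisely the content behind the paper's bound $m\le p$; the muddled mid-paragraph self-correction on the complex case is distracting but you land on the right answer, and your closing remarks on linearity and injectivity of $\Psi$ fill in details the paper leaves implicit.
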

Hence, we will work in a matrix subspace, the dimension of which is
less than or equal to the number of variables in the loop.
This  reduction of dimensions using matrix shapes is absolutely
essential for our technique to be useful in practice.


\subsection{Abstracting $J^*$ within template polyhedron matrices}
\label{sec:accelerating:bound}

\paragraph{The principle.}
Let us fix a template expression matrix $T\in\reals^{q\times m}$ composed of
linear expressions $\{ T_1,\ldots, T_q\}$ on parameters
$\vm$.  Knowing the symbolic form of each $J^n$, we obtain a
symbolic form
$
\vec{m}(n)=\Psi^{-1}(J^n)
$
for parameters
$\vec{m}$, hence a symbolic form for linear expressions
$
e_j(n)=T_j\cdot \vec{m}(n) \,.
$
By deriving an upper bound $u_j$ for
each $e_j(n), n\geq 0$, we obtain a sound approximation of the set $\{
\vec{m}(n)\;|\; n\sgeq 0 \}$, and hence of $J^*$.
\begin{theorem}[Abstracting $J^*$ in template polyhedron
  matrices]
  The template polyhedron matrix
  \vspace*{-0.4ex}
  $$
  \alpha_T(J^*)\defeq \Psi(\denotation{T\vec{m}\leq\vec{u}})
  \text{ with } \vec{u}=\sup\nolimits_{n\geq 0} \,T\vec{m}(n)
  \vspace*{-0.4ex}
  $$
  is the best correct overapproximation of $J^*$ in the template
  polyhedra matrix domain defined by $T$.
  Moreover, any $\vec{u}'\geq\vec{u}$ defines a correct
  approximation of $J^*$.
\end{theorem}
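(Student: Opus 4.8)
The plan is to reduce the statement to a pointwise characterisation of when an element of the template matrix domain contains $J^*$, and then close the argument by a routine monotonicity observation. Recall that by the preceding proposition $J^*\subseteq\mathit{Mat}_\Psi$, so $\vec{m}(n)=\Psi^{-1}(J^n)$ is well defined for every $n\geq 0$.

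First I would unfold membership using the fact that $\Psi$ is a bijection onto $\mathit{Mat}_\Psi$. For any $\vec{u}'\in\realsi^q$ and any $n\geq 0$, $J^n\in\Psi(\denotation{T\vec{m}\leq\vec{u}'})$ holds iff $\vec{m}(n)\in\denotation{T\vec{m}\leq\vec{u}'}$, i.e.\ iff $T\vec{m}(n)\leq\vec{u}'$ componentwise. Quantifying over $n$, $J^*\subseteq\Psi(\denotation{T\vec{m}\leq\vec{u}'})$ iff $T\vec{m}(n)\leq\vec{u}'$ for all $n\geq 0$, which --- taking the componentwise supremum over $n$, a well-defined element of $\realsi^q$ (a coordinate equal to $+\infty$ meaning the corresponding row of $T$ imposes no constraint) --- is exactly $\vec{u}'\geq\vec{u}$ with $u_j=\sup_{n\geq 0}T_j\cdot\vec{m}(n)$. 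This single equivalence yields everything: the ``moreover'' part is the implication $\vec{u}'\geq\vec{u}\Rightarrow J^*\subseteq\Psi(\denotation{T\vec{m}\leq\vec{u}'})$, and instantiating $\vec{u}'=\vec{u}$ shows $\alpha_T(J^*)$ is itself a correct overapproximation.

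It remains to show $\alpha_T(J^*)$ is the least one. Let $P$ be any element of the template matrix domain on $T$ with $J^*\subseteq P$, say $P=\Psi(\denotation{T\vec{m}\leq\vec{u}'})$. By the equivalence above, $\vec{u}'\geq\vec{u}$; since any $\vec{m}$ with $T\vec{m}\leq\vec{u}$ also satisfies $T\vec{m}\leq\vec{u}'$, we get $\denotation{T\vec{m}\leq\vec{u}}\subseteq\denotation{T\vec{m}\leq\vec{u}'}$, and applying the inclusion-preserving bijection $\Psi$ gives $\alpha_T(J^*)\subseteq P$. Hence $\alpha_T(J^*)$ is the best correct overapproximation of $J^*$ in the template polyhedra matrix domain defined by $T$.

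The proof is pure lattice bookkeeping and I foresee no genuine obstacle; the only points to treat with care are that the passage through $\Psi^{-1}$ is exact precisely because $\Psi$ is a bijection and $J^*\subseteq\mathit{Mat}_\Psi$, and that coordinates of $\vec{u}$ equal to $+\infty$ are handled uniformly in $\realsi^q$. The nontrivial content of this subsection is not the optimality claim but the effective evaluation of $\vec{u}=\sup_{n\geq 0}T\vec{m}(n)$ from the closed form of the coefficients in Eqn.~(\ref{eq:varphi}), which is carried out separately via asymptotic analysis.
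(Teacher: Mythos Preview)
Your proof is correct and follows essentially the same route as the paper's: both reduce, via the bijection $\Psi$, to the characterisation of the template-polyhedra abstraction of a set as the componentwise supremum of the template expressions over that set. The paper's version is terser because it invokes the template abstraction function $\alpha_T$ from \S\ref{sec:polyabsdom} (optimal by construction) and Definition~\ref{def:APsi}, whereas you spell out the minimality comparison with an arbitrary correct overapproximation explicitly; the content is the same.
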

\begin{proof}
  $J^* = \bigcup_{n\geq 0} J^n = \bigcup_{n\geq 0}
  \Psi(\vec{m}(n)) = \Psi(\{\vec{m}(n)\,|\,n\sgeq 0\})$.
  Considering the matrix $T$ and referring to
  \S\ref{sec:polyabsdom},
  \vspace*{-0.4ex}
  \begin{multline*}
    \alpha_T(\{\vec{m}(n)\,|\,n\geq 0\})
    = \{ \vec{m} \;|\; T\vec{m}\leq \sup\limits_{\vec{m}\in
      \{\vec{m}(n)\,|\,n\geq 0\}} T\vec{m} \} \\
    = \{ \vec{m} \;|\; T\vec{m}\leq \sup\limits_{n\geq 0}
    T\vec{m}(n) \}
    =  \{ \vec{m} \;|\; T\vec{m}\leq \vec{u} \}
  \end{multline*}
\end{proof}
The approximation of the set of matrices $J^*$
reduces thus to the computation of an upper bound for the
expressions $e_j(n)$.

\paragraph{Computing upper bounds.}
To simplify the analysis, we restrict template expressions
$T_j$ to involve at most $2$ parameters from $\vm$. As each
parameter/coefficient $m_k$ in matrix $J^n$ is of the form of
Eqn.~(\ref{eq:varphi}), we have to compute an upper bound for
expressions of the form
  \vspace*{-0.4ex}
\begin{multline}\label{eq:phi}
  \mu_1\tbinom{n}{k_1}\lambda_1^{n-k_1}
  \cos((n-k_1)\theta_1-r_1\tfrac{\pi}{2}) \\
  + \mu_2\tbinom{n}{k_2}\lambda_2^{n-k_2}
  \cos((n-k_2)\theta_2-r_2\tfrac{\pi}{2})
  \vspace*{-0.4ex}
\end{multline}
with $\mu_1,\mu_2\in\reals \wedge \mu_1\neq 0$. Computing bounds on
this expression is at the heart of our technique. However, the actual
derivations are tedious and do not contribute to the main insights of
our approach. Hence we omit the detailed derivations, and refer to
\pponly{the extended version \cite{JSS13}}\rronly{appendix \ref{sec:bound}}
for details. The main properties of the technique we implemented are
that it computes
\begin{compactitem}[--]
\item exact bounds if the two involved eigenvalues are
  real ($\theta_1,\theta_2\in\{0,\pi\}$);
\item exact bounds \emph{in reals} if $\theta_1\seq\theta_2 \wedge
  k_1\seq k_2\seq0 \vee \mu_2\seq 0$, and ``reasonable'' bounds if
  $k_1\seq k_2\seq0$ is replaced by $k_1\sgt 0\vee k_2\sgt 0$;
\item no interesting bounds otherwise (because they are just the
  linear combination of the bounds found for each term).
\end{compactitem}
Concerning the choice of template expressions in our
implementation, we fix a parameter $\ell$ and we consider all the
expressions of the form (cf. ``logahedra'' \cite{HK09})
\vspace*{-0.4ex}
\begin{equation}\label{eq:logahedratemplate}
\pm \alpha m_i \pm (1\ssub \alpha) m_j \text{ with }
\alpha\seq\frac{k}{2^\ell}, 0\sleq k\sleq 2^\ell,
\end{equation}
The choice of $\ell=1$ corresponds to octagonal expressions.

\begin{figure}[t]
      \vspace{1ex}
  \psset{unit=5.25em,griddots=1,subgriddiv=4,subgriddots=4,subgridwidth=0.75pt,subgridcolor=black,gridlabels=7pt}
  \hfill\begin{pspicture}(-0.5,-0.25)(1.25,1.0)
    \pspolygon[fillstyle=solid,fillcolor=gray](1,0)(1,0.11)(0.55,0.56)(0,0.56)(-0.29,0.28)(-0.29,0)(-0.14,-0.15)(0.85,-0.15)
    \psgrid(0,0)(-0.5,-0.25)(1.25,1.0)
    \psaxes[labels=none]{->}(0,0)(-0.5,-0.25)(1.25,1.0)
    \rput(1.35,0.1){\small $m_1$}
    \rput(0.15,1.08){\small $m_2$}
  \end{pspicture}
  \hfill
  \psset{unit=1.65em,griddots=5,subgriddiv=1,gridlabels=7pt}
  \begin{pspicture}(-2,-1)(4,3)
    \pspolygon[fillstyle=solid,fillcolor=gray](3,0)(3,2)(1.278,2.878)(1.052,2.988)(-0.4173,2.084)(-1.133,0.4711)(-0.9509,-0.6316)(-0.3767,-0.9081)
    \psframe[fillstyle=solid,fillcolor=darkgray](1,0)(3,2)
    \psgrid(0,0)(-2,-1)(4,3)
    \psaxes[labels=none]{->}(0,0)(-2,-1)(4,3)
    \rput(4.4,0.2){\small $x$}
    \rput(0.3,3.25){\small $y$}
    \psdots[dotstyle=pentagon]
    (3,2)(1.28,2.88)(-0.27,2.04)(-1.00,0.63)(-0.94,-0.44)(-0.48,-0.83)(0,-0.66)(0.27,-0.26)(0.29,0.08)(0.17,0.23)
    \psline[linewidth=0.5pt,linestyle=dashed]
    (3,2)(1.28,2.88)(-0.27,2.04)(-1.00,0.63)(-0.94,-0.44)(-0.48,-0.83)(0,-0.66)(0.27,-0.26)(0.29,0.08)(0.17,0.23)
    \psdots[dotstyle=pentagon]
    (1,2)(-0.11,1.49)(-0.67,0.52)(-0.67,-0.26)(-0.36,-0.57)(-0.02,-0.48)(0.17,-0.21)
    \psline[linewidth=0.5pt,linestyle=dashed]
    (1,2)(-0.11,1.49)(-0.67,0.52)(-0.67,-0.26)(-0.36,-0.57)(-0.02,-0.48)(0.17,-0.21)
    \psdots[dotstyle=pentagon]
    (1,0)(0.69,0.69)(0.20,0.76)(-0.16,0.44)(-0.29,0.06)(-0.22,-0.17)(-0.09,-0.23)(0.03,-0.15)(0.08,-0.04)
    \psline[linewidth=0.5pt,linestyle=dashed]
    (1,0)(0.69,0.69)(0.20,0.76)(-0.16,0.44)(-0.29,0.06)(-0.22,-0.17)(-0.09,-0.23)(0.03,-0.15)(0.08,-0.04)
  \end{pspicture}
  \hfill\null
  \caption{On the left-hand side, the octagon on the two non-constant
    coefficients of the matrices $A^n, n\geq 0$ of
    Ex.~\ref{ex:spiral}, that defines the approximation
    $\scrM\supseteq\{ A^n \;|\; n\geq 0\}$. On the right-hand side,
    the image $\alpha(\scrM X)$ in light gray of the box $X$ in
    dark gray by the set $\scrM$ using the method of \S\ref{sec:matmul}.}
  \label{fig:spiral}
\end{figure}
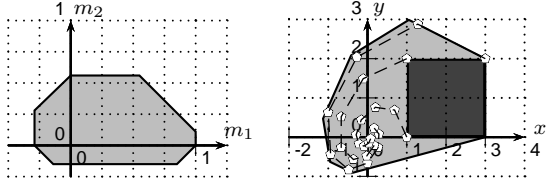

\paragraph{Examples.}
In Ex.~\ref{ex:running2} and Fig.~\ref{fig:running1} we showed the
approximation of a set $A^*$ using octagonal template
expression, with the matrix $A$ being a Jordan normal form with
real eigenvalues 1.5 and 1. 
For instance, consider the expression $m_2\ssub m_1\seq n\ssub 1.5^n$
in Example~\ref{ex:running1}, which falls in the first case above. We
look at the derivative of the function $f(x)\seq x\ssub 1.5^x$, we
infer that $ \forall x\sgeq 3.6:f'(x)\slt 0$ by linearizing
appropriately $f'(t)$, hence we can compute the least upper bound as
$\max\{ f(n)\;|\;0\sleq n\sleq 4\}$.

Next, we give another example with complex eigenvalues.
\begin{example}\label{ex:spiral}
  Take $a=0.8\cos\theta$ and $b=0.8\sin\theta$ with $\theta=\pi/6$.
  and consider the loop
  {\tt
    while(true)\{x'=a*x-b*y; y=a*x+b*y; x=x'\}}.
  The trajectories (see Fig.~\ref{fig:spiral} (right)) of this loop follow
  an inward spiral.
  The loop body transformation is
  $A=
  \begin{spmatrix}
    0.8\cos\theta & -0.8\sin\theta & 0 \\
    0.8\sin\theta & 0.8\cos\theta & 0 \\
    0 & 0 & 1
  \end{spmatrix}
  $ \\[0.5ex]
  with $A$ already in real Jordan normal form.
  The matrix subspace containing $A^*$ is of the form
  $M=\begin{spmatrix}
      m_1 & -m_2 & 0 \\
      m_2 & m_1 & 0 \\
      0 & 0 & 1
    \end{spmatrix}$.
    We have $m_1(n)=0.8^n\cos n\theta$ and $m_2(n)=0.8^n\sin n\theta$;
    applying our bounding technique on
    octagonal template constraints on $\vec{m}$, we obtain an approximation
    $\scrM$ of $A^*$ defined by the constraints \\
  \hspace*{1em}$\left\{\begin{array}{*{3}{@{\,}c}@{\,}@{\quad}*{3}{@{\,}c}@{\,}}
      m_1 &\in& [ -0.29, 1.00] & m_1\sadd m_2 &\in& [-0.29, 1.12]\\
      m_2 &\in& [ -0.15, 0.56 ] & m_1\ssub m_2 &\in& [-0.57,1.00]
    \end{array}\right.
  $ \\
Consider, for example, the expression $m_1\sadd m_2\seq 0.8^n(\cos n\theta\sadd
\sin n\theta)$ in Example~\ref{ex:spiral} below, which falls into the
second case above ($\theta_1\seq\theta_2 \wedge k_1\seq k_2\seq0 \vee \mu_2\seq 0$).
We first rewrite it as
$f(x)=0.8^x\sqrt{2}\sin(x\theta\sadd\frac{\pi}{4})$. The term $0.8^x$
being decreasing, the least upper bound of $f$ \emph{in reals} is in the
range $x\in[0,{\pi}/{4\theta}]$. Hence we can consider the upper bound 
$\max\{ f(n)\;|\; 0\sleq n\sleq \ceil{\pi/4\theta}\}$.

  The possible values $(m_1,m_2)$ are plotted in Fig.~\ref{fig:spiral}
  (right). Assuming an initial set $X:\ x\in[1,3]\wedge y\in[0,2]$, we
  compute $\alpha(\scrM X)$ to be the polyhedron depicted in
  Fig.~\ref{fig:spiral} (right).
\end{example}

In this section, we have described the computation of a
correct approximation $\mathcal{M}$ of $J^*$ in the template
polyhedron domain, from which we can
deduce a correct approximation $R^{-1}\mathcal{M}R$ of
$A^*=R^{-1}J^*R$. Applying Thm.~\ref{thm:matmul}, we are thus able to
approximate the set $A^*X$ of reachable states at the head of a linear loop
$\mathsf{while}(\mbox{true})\{ \vx := A \vx \}$ with the
expression $(R^{-1}\mathcal{M}R)\otimes X$, where $X$ is a
convex polyhedron describing the initial states.

\section{Abstract acceleration of loops with guards}
\label{sec:guard}

In this section, we consider loops of the form
$$
\mathsf{while}(G\vx\leq 0)\{ \vx  := A \vx \}
$$
modeled by the semantic function $G\srightarrow A$, as explained in
\S\ref{sec:preliminaries}. Given an initial set of states $X$, we
 compute an over-approximation of $Y=(G\srightarrow A)^*(X)$ using a
convex polyhedral domain, which after unfolding is expressed as
\vspace*{-0.4ex}
\begin{equation}\label{eq:GYdef}
  Y = X \cup \bigcup_{n\geq 1} \biggl(\Bigl(\bigwedge_{0\leq k\leq
    n-1} GA^k\Bigr)\rightarrow A^n\biggr)(X)
\vspace*{-0.4ex}
\end{equation}
The unfolding effectively computes the pre-condition of the guard $G$
on the initial state $X$ as $G A^k$.

\subsection{The simple technique}
\label{sec:guard:simple}

The expression $(G\srightarrow A)^*$ unfolded in
Eqn.~(\ref{eq:GYdef}) is too complex to be accelerated precisely.
A simple technique to approximate it safely is to exploit the following inclusion:
\begin{proposition}\label{prop:guard_inclusion}
  For any set $X$ and linear transformation $G\srightarrow A$,
  \begin{equation}
    \label{eq:guard_inclusion}
    (G\srightarrow A)^* \subseteq
    \mathit{id}\cup \Big( \underset{\footnotesize\begin{array}{c}\text{finally,
          take into}\\\text{account the guard}\end{array}}{\underbrace{(G\srightarrow A)}}\scirc~(G~\srightarrow \underset{\footnotesize\begin{array}{c}\text{acceleration}\\\text{without guard}\end{array}}{\underbrace{A^*}}) \Big)
  \end{equation}
\end{proposition}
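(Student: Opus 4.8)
The plan is to unfold $\tau^{*}$ for $\tau=(G\srightarrow A)$ and verify the stated inclusion between set-transformers pointwise: it suffices to show, for every set $X$, that $\tau^{*}(X)=\bigcup_{n\geq0}\tau^{n}(X)$ is contained in $\bigl(\mathit{id}\cup((G\srightarrow A)\circ(G\srightarrow A^{*}))\bigr)(X)$. Here I read $(G\srightarrow A^{*})$ as the semantic function of the guarded loop body instantiated at the set of matrices $A^{*}$, so that $(G\srightarrow A^{*})(X)=A^{*}(X\cap\denotation{G\vx\leq0})=\bigcup_{m\geq0}A^{m}(X\cap\denotation{G\vx\leq0})$, and applying $(G\srightarrow A)$ afterwards intersects this with the guard and applies $A$ once more; thus the right-hand side, evaluated at $X$, is $X\cup A\bigl(A^{*}(X\cap\denotation{G\vx\leq0})\cap\denotation{G\vx\leq0}\bigr)$. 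Since $\tau^{0}(X)=X$, it is enough to prove $\tau^{n}(X)\subseteq A\bigl(A^{*}(X\cap\denotation{G\vx\leq0})\cap\denotation{G\vx\leq0}\bigr)$ for every $n\geq1$.

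First I would unwind the guarded iteration using Eqn.~(\ref{eq:GYdef}): for $n\geq1$ we have $\tau^{n}(X)=A^{n}(X_{n})$ where $X_{n}\defeq X\cap\bigcap_{0\leq k\leq n-1}\denotation{GA^{k}\vx\leq0}$, i.e.\ $\tau^{n}(X)$ collects the points $A^{n}\vx$ for those $\vx\in X$ whose first $n$ iterates under $A$ all satisfy the guard. The key step is then two elementary observations about the \emph{penultimate} set $A^{n-1}(X_{n})$. The conjunct for $k=n-1$ says every $\vx\in X_{n}$ satisfies $GA^{n-1}\vx\leq0$, i.e.\ $A^{n-1}\vx\in\denotation{G\vx\leq0}$, so $A^{n-1}(X_{n})\subseteq\denotation{G\vx\leq0}$. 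The conjunct for $k=0$ says $X_{n}\subseteq X\cap\denotation{G\vx\leq0}$, so by monotonicity of the image $A^{n-1}(X_{n})\subseteq A^{n-1}(X\cap\denotation{G\vx\leq0})\subseteq A^{*}(X\cap\denotation{G\vx\leq0})$. Combining the two, $A^{n-1}(X_{n})\subseteq A^{*}(X\cap\denotation{G\vx\leq0})\cap\denotation{G\vx\leq0}$, and applying $A$ one last time (the final loop iteration) gives $\tau^{n}(X)=A\bigl(A^{n-1}(X_{n})\bigr)\subseteq A\bigl(A^{*}(X\cap\denotation{G\vx\leq0})\cap\denotation{G\vx\leq0}\bigr)$. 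Taking the union over $n\geq1$ and adding the $n=0$ term $\tau^{0}(X)=X$ yields $\tau^{*}(X)\subseteq X\cup\bigl((G\srightarrow A)\circ(G\srightarrow A^{*})\bigr)(X)$, which is the claim since $X$ was arbitrary.

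There is no deep obstacle: the statement is essentially a bookkeeping identity and every inclusion used is monotonicity of set image and of intersection. The only points that need care are (i) getting the unfolding of the guarded semantics right --- the guard is applied to \emph{all} intermediate iterates $A^{k}\vx$, $0\leq k\leq n-1$, not just to the initial $\vx$, which is exactly what Eqn.~(\ref{eq:GYdef}) records; (ii) the degenerate case $n=1$, where the penultimate set is $A^{0}(X_{1})=X\cap\denotation{G\vx\leq0}$ itself, so both observations hold trivially (it lies in the guard, and it is the $m=0$ term of $A^{*}(X\cap\denotation{G\vx\leq0})$); and (iii) making explicit the reading of $(G\srightarrow A^{*})$, so that the right-hand side genuinely means ``intersect with the guard, accelerate the unguarded body, then perform one more guarded step''.
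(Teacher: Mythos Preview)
Your argument is correct. The paper itself does not give a formal proof of this proposition: it only offers the graphical justification of Fig.~\ref{fig:loopinv}, observing that the invariant at the accepting location of the original loop (a) is contained in that of the unfolded approximation (b). Your pointwise unfolding via Eqn.~(\ref{eq:GYdef}), isolating the penultimate set $A^{n-1}(X_n)$ and using the $k=0$ and $k=n-1$ conjuncts to place it inside $A^{*}(X\cap\denotation{G\vx\leq 0})\cap\denotation{G\vx\leq 0}$, is exactly the elementary bookkeeping that the figure is meant to convey, made explicit. In that sense you have supplied the proof the paper omits rather than taken a different route; your care about the reading of $(G\srightarrow A^{*})$ and the degenerate case $n=1$ is appropriate and matches how the paper uses the proposition downstream in Prop.~\ref{prop:guard_overapprox}.
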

\begin{figure}[t]
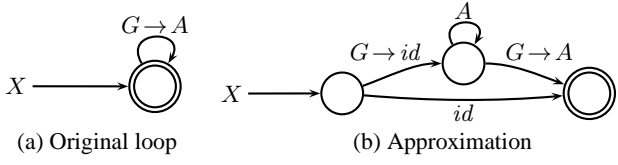

  \centering

  \psset{arrows=->,labelsep=0.2ex}
  \vspace*{4ex}

  \begin{tabular}{cc}
    \parbox{0.29\linewidth}{
      \begin{psmatrix}[mnode=r,colsep=4em]
	\rnode{start}{$X$} &
	\circlenode[doubleline=true]{A}{\hspace*{1em}}
      \end{psmatrix}
      \ncline[nodesepA=0.5ex]{start}{A}
      \nccurve[angleA=120,angleB=60,ncurv=3]{A}{A}\naput{$G\srightarrow A$}
    }
    &
    \parbox{0.7\linewidth}{
      \begin{psmatrix}[mnode=r,rowsep=-2ex,colsep=3em]
	&& \circlenode{I}{\hspace*{1em}} \\
	\rnode{pstart}{$X$} &
	\circlenode{start}{\hspace*{1em}} &&
	\circlenode[doubleline=true]{A}{\hspace*{1em}}
      \end{psmatrix}
      \ncline[nodesepA=0.5ex]{pstart}{start}
      \ncarc[arcangle=-5]{start}{A}\nbput{$\mathit{id}$}
      \nccurve[angleA=20,angleB=180]{start}{I}\naput{$G\srightarrow \mathit{id}$}
      \nccurve[angleA=0,angleB=160]{I}{A}\naput{$G\srightarrow A$}
      \nccurve[angleA=120,angleB=60,ncurv=3]{I}{I}\naput{$A$}
      \vspace*{1ex}

    } \\
    (a) Original loop &
    (b) Approximation
  \end{tabular}
  \caption{Original loop and approximation in term of
    invariant in accepting (double-lined) location, illustrating Prop.~\ref{prop:guard_inclusion}.}
  \label{fig:loopinv}
\end{figure}
Fig.~\ref{fig:loopinv} illustrates graphically
Prop.~\ref{prop:guard_inclusion}: the invariant attached to the
accepting location of Fig.~\ref{fig:loopinv}(a) is included in the
invariant attached to the accepting location of
Fig.~\ref{fig:loopinv}(b).
It is interesting to point out the fact that the abstract acceleration
techniques described in \cite{GH06,SJ12b} make
assumptions on the matrix $A$ and exploit convexity arguments so that the
inclusion~(\ref{eq:guard_inclusion}) becomes an equality. The idea
behind Prop.~\ref{prop:guard_inclusion} is applied to matrix
abstract domains to yield Prop.~\ref{prop:guard_overapprox}:

\begin{proposition}\label{prop:guard_overapprox}
  Let $A=R^{-1}JR$ with $J$ a real Jordan normal form, $T$ a
  template expression matrix, $\scrM=R^{-1}\alpha_T(J^*)R$ and $X$
  a convex polyhedron, then $(G\srightarrow A)^*(X)$ can be
  approximated by the convex polyhedron
  \begin{equation}\label{eq:guard_overapprox}
    X \sqcup (G\srightarrow A)\bigl(\scrM\otimes(X\sqcap G)\bigr)
  \end{equation}
\end{proposition}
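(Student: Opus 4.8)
The plan is to derive the claimed inclusion $(G\srightarrow A)^*(X)\subseteq X\sqcup(G\srightarrow A)\bigl(\scrM\otimes(X\sqcap G)\bigr)$ by chaining four ingredients: Prop.~\ref{prop:guard_inclusion}, the soundness of the Jordan-based abstraction $\alpha_T(J^*)$, the soundness of abstract matrix multiplication (Thm.~\ref{thm:matmul}), and the monotonicity of the elementary building blocks $Y\mapsto Y\sqcap G$, $Y\mapsto A(Y)$ and set union.

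First I would instantiate Prop.~\ref{prop:guard_inclusion} at $X$, obtaining $(G\srightarrow A)^*(X)\subseteq X\cup(G\srightarrow A)\bigl((G\srightarrow A^*)(X)\bigr)$, and then unfold the inner transformer via the definition of the semantic function: $(G\srightarrow A^*)(X)=A^*\bigl(X\cap\denotation{G\vx\leq 0}\bigr)=\bigcup_{n\geq 0}A^n(X\sqcap G)$. The key soundness ingredient is that $A^*\subseteq\scrM$ as sets of matrices. This follows from the theorem on abstracting $J^*$ in template polyhedron matrices, which gives $J^*\subseteq\alpha_T(J^*)$: since conjugation by $R$ acts elementwise on sets of matrices, it is inclusion-preserving, so $A^*=R^{-1}J^*R\subseteq R^{-1}\alpha_T(J^*)R=\scrM$. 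Hence $\bigcup_{n\geq 0}A^n(X\sqcap G)\subseteq\scrM\,(X\sqcap G)$, and by the first part of Thm.~\ref{thm:matmul} — applied with the vectors of $X\sqcap G$ viewed as $p\times 1$ matrices, a step that crucially does \emph{not} require $X\sqcap G$ to be bounded — this is contained in the convex polyhedron $\scrM\otimes(X\sqcap G)$.

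Finally I would propagate this inclusion through the outer $(G\srightarrow A)$: since $Y\mapsto(G\srightarrow A)(Y)=A\bigl(Y\cap\denotation{G\vx\leq 0}\bigr)$ is monotone (a composition of an intersection with a fixed polyhedron and a linear image), we get $(G\srightarrow A)\bigl((G\srightarrow A^*)(X)\bigr)\subseteq(G\srightarrow A)\bigl(\scrM\otimes(X\sqcap G)\bigr)$, and replacing the set union by the convex hull $\sqcup$ (which over-approximates it in $\polyhed$) yields the statement. It remains only to note that the right-hand side is genuinely a convex polyhedron: $\alpha_T(J^*)$ is a template, hence convex, matrix set, and each of $R^{-1}(\cdot)R$, $\cdot\sqcap G$, $\otimes$ (Thm.~\ref{thm:matmul}), the image under $A$, and $\sqcup$ preserves convexity. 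I do not expect a genuine obstacle here; the two points that need care are pinning down the semantics of the set-valued transformer $(G\srightarrow A^*)$, and confirming that the over-approximation part of Thm.~\ref{thm:matmul} still applies when $X\sqcap G$ is unbounded.
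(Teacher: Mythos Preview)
Your proposal is correct and follows precisely the route the paper indicates: the paper does not spell out a formal proof of Prop.~\ref{prop:guard_overapprox} but simply says that ``the idea behind Prop.~\ref{prop:guard_inclusion} is applied to matrix abstract domains to yield Prop.~\ref{prop:guard_overapprox}'', and your chain (Prop.~\ref{prop:guard_inclusion} $\to$ soundness of $\alpha_T(J^*)$ $\to$ Thm.~\ref{thm:matmul} $\to$ monotonicity $\to$ $\cup\subseteq\sqcup$) is exactly that unfolding. Your care about the unbounded case in Thm.~\ref{thm:matmul} is well placed and matches the paper's use of only the over-approximation part of that theorem.
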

This approach essentially consists of partially unfolding the loop, as
illustrated by Fig.~\ref{fig:loopinv}(b), and reusing abstract
acceleration without guard.  However, since the guard is only taken
into account \emph{after} the actual acceleration, precision is lost
regarding the variables that are not constrained by the
guard. Ex.~\ref{ex:running3} and Figures~\ref{fig:running2} and
\ref{fig:running3} in \S\ref{sec:overview:guard} illustrate this
weakness: $y$ is constrained by the guard, whereas $x$ remains unbounded.

\subsection{Computing and exploiting bounds on the number of
  iterations.}
\label{sec:guard:boundn}

To overcome the above issue, we propose a solution based on
finding the maximal number of iterations $N$ of the loop for any
initial state in $X$, and then to abstract the set of matrices $\{A^0,
\ldots, A^N\}$ instead of the set $A^*$.
The basic idea is that if there exists $N\sgeq 0$ such that $A^N X \cap
G = \emptyset$, then $N$ is an upper bound on the number of
iterations of the loop for any initial state in $X$.
Bounding the number of iterations is a classical problem in
termination analysis, to which our general approach provides a new,
original solution.

The following theorem formalizes this idea:
We assume now a guarded linear transformation $G\srightarrow J$
where $J$ is already a Jordan normal form.
\begin{theorem}\label{thm:MGX}
  Given a set of states $X$, a template expression matrix $T$ and the set of matrix
  $\scrM=\alpha_T(J^*)$, we define
  \begin{align*}
    \scrG &= \scrM\cap\{ M \;|\; \exists
    \vec{x}\in (X\cap G) \;:\; GM\vec{x}\leq 0 \} \\
    N&=\min \{ n \;|\; J^n\not\in \scrG \} 
  \end{align*}
  with the convention $\min \emptyset=\infty$. \\
  $\scrG$ is the set of matrices $M\subseteq\scrM$ of which the image
  of at least one input state $\vec{x}\in X$ satisfies $G$.
  If $N$ is bounded, then
  $$\textstyle
  (G\srightarrow J)^*(X)=\bigcup\limits_{n=0}^N (G\srightarrow J)^n(X)
  $$
\end{theorem}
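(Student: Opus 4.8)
The plan is to show that the sequence $J^n$ leaves $\scrG$ for good after step $N$, so that no iterate beyond $N$ can contribute a new reachable state. The essential observation is that membership $J^n \in \scrG$ is a \emph{necessary} condition for the $n$-th iterate of the loop on $X$ to be non-empty: indeed, if $(G\srightarrow J)^n(X)\neq\emptyset$, then there is some $\vec{x}\in X$ that survives $n$ consecutive guard tests, in particular $\vec{x}\in X\cap G$ and $GJ^n\vec{x}\leq 0$; moreover $J^n\in\scrM$ by construction of $\scrM=\alpha_T(J^*)$. Hence $J^n$ satisfies both defining conditions of $\scrG$, i.e.\ $J^n\in\scrG$. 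Contrapositively, $J^n\notin\scrG$ implies $(G\srightarrow J)^n(X)=\emptyset$.

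First I would establish the monotonicity in $n$ that makes a \emph{single} bound $N$ suffice rather than merely excluding one particular index: if $(G\srightarrow J)^{n}(X)=\emptyset$ for some $n$, then $(G\srightarrow J)^{m}(X)=\emptyset$ for all $m\geq n$, simply because $(G\srightarrow J)$ is monotone with respect to set inclusion and $(G\srightarrow J)(\emptyset)=J(\emptyset\cap\denotation{G\vx\leq 0})=\emptyset$. So once an iterate is empty, all later ones are empty. Combining this with the previous paragraph: $N=\min\{n\mid J^n\notin\scrG\}$ gives $(G\srightarrow J)^{N}(X)=\emptyset$, and therefore $(G\srightarrow J)^{m}(X)=\emptyset$ for every $m\geq N$.

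The conclusion then follows by splitting the defining union for the fixpoint:
\[
(G\srightarrow J)^*(X)=\bigcup_{n\geq 0}(G\srightarrow J)^n(X)
=\Bigl(\bigcup_{n=0}^{N}(G\srightarrow J)^n(X)\Bigr)\;\cup\;\Bigl(\bigcup_{n> N}(G\srightarrow J)^n(X)\Bigr),
\]
and the second union is a union of empty sets. (Taking the union up to $N$ inclusive rather than $N-1$ is harmless; the $n=N$ term is itself empty.) This yields exactly $(G\srightarrow J)^*(X)=\bigcup_{n=0}^{N}(G\srightarrow J)^n(X)$, which is the claimed identity.

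The main obstacle, and the only point requiring care, is the first paragraph's implication $(G\srightarrow J)^n(X)\neq\emptyset \implies J^n\in\scrG$: one must check that a state surviving $n$ guard tests in particular satisfies $GJ^n\vec x\le 0$ and $\vec x\in X\cap G$, which is immediate from unfolding the semantic function $(G\srightarrow J)^n(X)=J^n\bigl(X\cap\denotation{G\vx\le 0}\cap\denotation{GJ\vx\le0}\cap\cdots\cap\denotation{GJ^{n-1}\vx\le0}\bigr)$ and noting that the intermediate constraints are only dropped, never needed; and that $J^n\in\scrM$ holds by soundness of $\alpha_T$, i.e.\ $\scrM=\alpha_T(J^*)\sqsupseteq$ the set $\{J^n\mid n\ge0\}$ as shown in \S\ref{sec:accelerating:bound}. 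Everything else is routine; the hypothesis ``$N$ bounded'' is used only to guarantee $N<\infty$ so that the finite union on the right-hand side is well defined.
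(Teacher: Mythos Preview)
Your overall strategy matches the paper's: show that $J^N\notin\scrG$ forces the iterates to become empty from some point on, then use monotonicity of $(G\srightarrow J)$ to conclude. However, there is an off-by-one error in your key implication.

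You claim that $(G\srightarrow J)^n(X)\neq\emptyset$ implies $J^n\in\scrG$, arguing that a surviving $\vec{x}$ satisfies $GJ^n\vec{x}\leq 0$. But your own unfolding
\[
(G\srightarrow J)^n(X)=J^n\bigl(X\cap\denotation{G\vx\le 0}\cap\denotation{GJ\vx\le0}\cap\cdots\cap\denotation{GJ^{n-1}\vx\le0}\bigr)
\]
shows that the last guard test passed is $GJ^{n-1}\vec{x}\leq 0$, not $GJ^n\vec{x}\leq 0$: after $n$ iterations the state is $J^n\vec{x}$, and whether it satisfies $G$ is precisely what decides if there is an $(n{+}1)$-th iteration. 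So the correct implication is $(G\srightarrow J)^{n+1}(X)\neq\emptyset\implies J^n\in\scrG$, and contrapositively $J^N\notin\scrG$ gives $(G\srightarrow J)^{N+1}(X)=\emptyset$, \emph{not} $(G\srightarrow J)^{N}(X)=\emptyset$. A concrete counterexample: take $X=\{\vec{x}_0\}$ where $\vec{x}_0,J\vec{x}_0,\ldots,J^{N-1}\vec{x}_0$ all satisfy $G$ but $J^N\vec{x}_0$ does not; then $N$ is as defined, yet $(G\srightarrow J)^N(X)=\{J^N\vec{x}_0\}\neq\emptyset$.

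This is exactly how the paper argues: from $J^N\in\scrM\setminus\scrG$ one gets $J^N(X\cap G)\cap G=\emptyset$, hence $(G\srightarrow J)^N(X)\cap G=\emptyset$, hence $(G\srightarrow J)^{N+1}(X)=\emptyset$. Your parenthetical remark that ``the $n=N$ term is itself empty'' is therefore unjustified and generally false; the $n=N$ term is genuinely needed in the union. Fortunately the theorem's right-hand side already includes it, so once you correct the index shift, the rest of your argument (monotonicity, splitting the union at $N$) goes through unchanged and coincides with the paper's proof.
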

\begin{proof}
  As $J^N\in\scrM\setminus\scrG$, we have $\forall
  \vec{x}\in (X\cap G)\;:\; GJ^N\vec{x}>0$, or in other words $(J^NX)\cap
  G=\emptyset$.  This implies that $\big((G\srightarrow J)^NX\big)\cap
  G=\emptyset$ and $(G\srightarrow J)^{N+1}X=\emptyset$.
\end{proof}
The definition of $\scrG$ and $N$ in Thm.~\ref{thm:MGX} can be transposed in
the space of vectors using the matrix shape $\Psi$:
\begin{theorem}\label{thm:PGX}
  Under the assumption of Thm.~\ref{thm:MGX}, and
  considering $\vec{m}(n)=\Psi^{-1}(J^n)$,  we have
  \vspace*{-0.4ex}
  \begin{align}
    \Psi^{-1}(\scrG) =& \left\{\begin{array}{l}\Psi^{-1}(\scrM)~\cap \\
                        \{ \vm \;|\; \exists \vec{x}: \vec{x}\in X\cap G :
    G\Psi(\vm)\vec{x}\leq 0 \}\end{array}\right.  \label{eq:PGX} \\
    N =& \min \{ n \;|\; \vec{m}(n) \not\in \Psi^{-1}(\scrG) \} \label{eq:NGX}
  \end{align}
\end{theorem}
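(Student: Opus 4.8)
The plan is to derive both identities of Thm.~\ref{thm:PGX} directly from the definitions of $\scrG$ and $N$ in Thm.~\ref{thm:MGX}, by transporting them through the matrix-shape isomorphism $\Psi$. The only ingredient needed is that $\Psi:\reals^m\to\mathit{Mat}_\Psi$ is a linear bijection (definition of matrix shape), together with the proposition of \S\ref{sec:accelerating:jordan} ensuring $J^*\subseteq\mathit{Mat}_\Psi$; consequently every $J^n$ and every element of $\scrM=\alpha_T(J^*)$, hence of $\scrG\subseteq\scrM$, lies in $\mathit{Mat}_\Psi$, which is the domain on which $\Psi^{-1}$ is defined.

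First I would establish (\ref{eq:PGX}). Writing $S=\{\,M\mid \exists\vec{x}\in X\cap G:\ GM\vec{x}\leq 0\,\}$, we have $\scrG=\scrM\cap S$, and since $\scrM\subseteq\mathit{Mat}_\Psi$ this equals $\scrM\cap(S\cap\mathit{Mat}_\Psi)$. Because $\Psi$ is injective, pulling back commutes with intersection, so $\Psi^{-1}(\scrG)=\Psi^{-1}(\scrM)\cap\Psi^{-1}(S\cap\mathit{Mat}_\Psi)$. It then suffices to unfold $\Psi^{-1}(S\cap\mathit{Mat}_\Psi)=\{\,\vm\mid \Psi(\vm)\in S\,\}=\{\,\vm\mid \exists\vec{x}:\vec{x}\in X\cap G:\ G\Psi(\vm)\vec{x}\leq 0\,\}$, which is precisely the second factor in (\ref{eq:PGX}).

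Next, (\ref{eq:NGX}) is immediate. Since $\Psi$ restricts to a bijection between $\mathit{Mat}_\Psi$ and $\reals^m$, and both $J^n$ and $\scrG$ live in $\mathit{Mat}_\Psi$, we obtain the equivalence $J^n\notin\scrG\ \Longleftrightarrow\ \Psi^{-1}(J^n)\notin\Psi^{-1}(\scrG)$ for every $n$. Substituting $\vec{m}(n)=\Psi^{-1}(J^n)$ turns $N=\min\{\,n\mid J^n\notin\scrG\,\}$ into $N=\min\{\,n\mid \vec{m}(n)\notin\Psi^{-1}(\scrG)\,\}$, while the convention $\min\emptyset=\infty$ carries over unchanged.

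I do not expect a genuine obstacle: the statement is merely a change of representation of Thm.~\ref{thm:MGX} from matrix space to coefficient space. The one point deserving care is that $\Psi^{-1}$ is a partial map, defined only on $\mathit{Mat}_\Psi$; so each argument to which it is applied — namely $\scrG$ and the singletons $\{J^n\}$ — must be verified to lie in $\mathit{Mat}_\Psi$ (which the Jordan-shape proposition guarantees), and the auxiliary set $S$ must first be intersected with $\mathit{Mat}_\Psi$ before being pulled back.
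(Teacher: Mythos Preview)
Your proposal is correct and matches the paper's own stance: the paper states Thm.~\ref{thm:PGX} without proof, introducing it merely as ``the definition of $\scrG$ and $N$ in Thm.~\ref{thm:MGX} can be transposed in the space of vectors using the matrix shape $\Psi$.'' Your argument makes this transposition explicit, and your attention to the fact that $\Psi^{-1}$ is only defined on $\mathit{Mat}_\Psi$ (so that $\scrG\subseteq\scrM\subseteq\mathit{Mat}_\Psi$ and $J^n\in\mathit{Mat}_\Psi$ must be checked) is exactly the care the paper silently assumes.
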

Our approach to take into the guard is thus to compute a finite
bound $N$ with Eqns.~(\ref{eq:PGX}) and (\ref{eq:NGX}) and to replace in
Eqn.~(\ref{eq:guard_overapprox})
\vspace*{-0.4ex}
$$
\scrM=R^{-1}\cdot\alpha_T(J^*)\cdot R
\vspace*{-0.4ex}
$$
with
$\scrM'=R^{-1}\cdot\alpha_T(\{ J^n \;|\;
0\sleq n\sleq N\ssub 1\})\cdot R$
and $R$ the basis transformation matrix (see Prop.~\ref{prop:guard_overapprox}).

\begin{example}\label{ex:running5a}
  In our Examples~\ref{ex:running1}-\ref{ex:running4}, we had
  $\vec{m}(n)=\Psi^{-1}(J^n) = \vecb{1.5^n}{n}$,
  $\Psi^{-1}(\scrM)=(m_1\sgeq 1 \wedge m_2\sgeq 0 \wedge m_1\sadd
  m_2\sgeq 1 \wedge m_1\ssub m_2\sgeq 0.25)$, see
  Fig.~\ref{fig:running1} (light gray), $X=(x\in[1,3]\wedge
  y\in[0,2])$, see Fig.~\ref{fig:running3} (dark gray), and
  $G=(y\leq 3)$. The second term of the
  intersection in Eqn.~(\ref{eq:PGX}) evaluates to $m_2\sleq 3$, thus
  $\Psi^{-1}(\scrG)=(m_1\sgeq 1 \wedge 0\sleq m_2\sleq 3 \wedge
  m_1\sadd m_2\sgeq 1 \wedge m_1\ssub m_2\sgeq 0.25)$. This gives us
  through Eqn.~(\ref{eq:NGX}) the bound $N=4$ on the number of loop
  iterations.  The abstraction $\scrM'=\alpha_T(\{ J^n \;|\; 0\sleq
  n\sleq 3\})$, depicted on Fig.~\ref{fig:running1} (dark gray),
  removes those matrices from $\scrM$ that do not contribute to the
  acceleration result due to the guard.  Finally, we accelerate using
  $\scrM'$ and obtain the result shown in Fig.~\ref{fig:running3}
  (medium gray).
\end{example}

\noindent %
More details about these computations are given in the next section.

\subsection{Technical issues}
\label{sec:guard:boundnt}
Applying Thms.~\ref{thm:MGX}~and~\ref{thm:PGX} requires
many steps.  First, we  approximate
$\Psi^{-1}(\scrG)$ (see Eqn.~(\ref{eq:PGX})), and then as a second
step we can approximate the maximum number of iterations $N$ according
to Eqn.~(\ref{eq:NGX}).  Finally, we have to compute
$\alpha_T\{ J^n\;|\; 0\sleq n\sleq N\ssub 1\}$.

\paragraph{Approximating $\Psi^{-1}(\scrG)$.}
Let us denote $Q=\Psi^{-1}(\scrG)$ and $P=\Psi^{-1}(\scrM)$.  $Q$ is
defined in Eqn.~(\ref{eq:PGX}) by the conjunction of quadratic
constraints on $\vec{x}$ and $\vm$ followed by an elimination of
$\vec{x}$. Exact solutions exist for this problem, but they are
costly.
The alternative adopted in this paper is to approximate $Q$ by
quantifying $\vec{x}$ on the bounding box of $X\cap G$ instead of
quantifying it on $X\cap G$. Let us denote the bounding box of a
polyhedron $Z$ with the vector of intervals
$[\underline{Z},\overline{Z}]$. We have
\begin{equation}\label{eq:P'}
  \begin{aligned}
    Q \;\subseteq\; &
    P\cap \{ \vm \;|\; \exists \vec{x}:
    \vec{x}\in [\underline{X\cap G},\overline{X\cap G}] \wedge
    G\Psi(\vm)\vec{x}\leq 0 \} \\[-1ex]
    &= \\[-1.5ex]
    &P\cap \{ \vm \;|\;
  G\Psi(\vm)[\underline{X\cap G},\overline{X\cap G}]\leq 0
  \} \;\defeq\; Q'
\end{aligned}
\end{equation}
$Q'$ is defined by intersecting $P$ with
\emph{interval-linear constraints} on $\vm$ and it is much easier to compute
than $Q$: one can use
\begin{compactitem}
\item algorithms for interval linear constraints
  \cite{CMWC10}, in particular interval linear programming \cite{Rohn06}, or
\item the linearization techniques of
  \cite{Mine06} that are effective if the vectors $\vm$
  are ``well-constrained'' by $P$.
\end{compactitem}
In this paper, we exploit the last method which is implemented in the APRON
library \cite{jeannetmine09}.
\begin{example}\label{ex:running5}
  Coming back to our running
  Examples~\ref{ex:running1}-\ref{ex:running4} and \ref{ex:running5a}, we have
  \vspace*{-2ex}
  \begin{multline*}
   \hspace*{-1em}  G\Psi(\vec{m})[\underline{X\cap G},\overline{X\cap G}] = (0~1~-3)
    \begin{pmatrix}
      m_1 & 0 & 0 \\
      0 & 1 & m_2 \\
      0 & 0 & 1
    \end{pmatrix}
    \vect{[1,3]}{[0,2]}{1} \\
    =
    (0~1~-3)
    \vect{[1,3]m_1}{[0,2]+m_2}{1}
    =
    m_2+[-3,-1]
  \end{multline*}
  Hence $Q'=P\cap\denotation{m_2+[-3,-1]\sleq 0}=P\cap\denotation{m_2\sleq 3}$
\end{example}

\paragraph{Approximating the maximum number of iterations $N$.}
Computing $N$ as defined in Eqn.~(\ref{eq:NGX}) is not easy
either, because the components of vector $\vec{m}(n)$ are functions of
defined by Eqn.~(\ref{eq:varphi}). Our approach is to exploit a
matrix of template expressions.
\begin{proposition}\label{prop:Ntemplate}
  Under the assumption of Thm.~\ref{thm:PGX}, for any polyhedron
  $Q'\supseteq Q$ and template expression matrix $T'$,
  \begin{equation}
    N\leq \min
    \Bigl\{ n \;|\; \exists j: T'_j\cdot \vec{m}(n) >
    \sup_{\mathclap{\vm\in Q'}} T'_j\cdot \vm \Bigr\}
    \label{eq:minNumberIt}
  \end{equation}
\end{proposition}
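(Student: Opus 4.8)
The plan is to chain two inclusions. First, recall from Theorem~\ref{thm:PGX} that $N = \min\{n \mid \vec{m}(n) \notin \Psi^{-1}(\scrG)\} = \min\{n \mid \vec{m}(n) \notin Q\}$, where $Q = \Psi^{-1}(\scrG)$. The key observation is that enlarging the set whose complement we test can only decrease (or keep equal) the first exit time: if $Q \subseteq Q'$, then $\vec{m}(n) \notin Q' \implies \vec{m}(n) \notin Q$, so every $n$ witnessing an exit from $Q'$ also witnesses an exit from $Q$, hence $N = \min\{n \mid \vec{m}(n) \notin Q\} \leq \min\{n \mid \vec{m}(n) \notin Q'\}$. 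So the first step is purely order-theoretic: $N \leq \min\{n \mid \vec{m}(n) \notin Q'\}$.

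Second, I would bound the right-hand side using the template matrix $T'$. Since $Q'$ is a polyhedron, for each template row $T'_j$ we have $\sup_{\vm \in Q'} T'_j \cdot \vm$ well-defined (possibly $+\infty$), and membership $\vm \in Q'$ implies $T'_j \cdot \vm \leq \sup_{\vm' \in Q'} T'_j \cdot \vm'$ for every $j$. Contrapositively, if there exists $j$ with $T'_j \cdot \vec{m}(n) > \sup_{\vm \in Q'} T'_j \cdot \vm$, then $\vec{m}(n) \notin Q'$. Therefore $\{n \mid \exists j: T'_j \cdot \vec{m}(n) > \sup_{\vm \in Q'} T'_j \cdot \vm\} \subseteq \{n \mid \vec{m}(n) \notin Q'\}$, and taking minima (a larger set has a smaller or equal minimum, with the $\min\emptyset = \infty$ convention) gives $\min\{n \mid \vec{m}(n) \notin Q'\} \leq \min\{n \mid \exists j: T'_j \cdot \vec{m}(n) > \sup_{\vm \in Q'} T'_j \cdot \vm\}$. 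Combining the two inequalities yields Eqn.~(\ref{eq:minNumberIt}).

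The argument is essentially bookkeeping about how $\min$ interacts with set inclusion and with the $\min\emptyset = \infty$ convention, so there is no real obstacle in the proof itself; the only point requiring mild care is the direction of the inequalities (a superset has a smaller minimum) and the handling of the $\infty$ case, namely that both inclusions remain valid when the relevant exit sets are empty. The genuine work — which is deferred to the surrounding text and the appendix — is not in establishing this bound but in making it \emph{effective}: choosing $Q'$ (via the interval-linearization of Eqn.~(\ref{eq:P'})) and, above all, computing or over-approximating $\min\{n \mid \exists j: T'_j \cdot \vec{m}(n) > u'_j\}$ given that each coordinate of $\vec{m}(n)$ has the transcendental form of Eqn.~(\ref{eq:varphi}). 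That computational step is the real content; the proposition itself is the easy soundness wrapper around it.
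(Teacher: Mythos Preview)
Your proof is correct and follows essentially the same approach as the paper's: both are set-inclusion bookkeeping showing that enlarging the exit set can only push the first exit time later. The only cosmetic difference is that the paper passes through the template abstraction $\alpha_{T'}(Q')$ in a single chain $Q\subseteq Q'\subseteq \alpha_{T'}(Q')$ and uses the \emph{equivalence} between the template-violation condition and non-membership in $\alpha_{T'}(Q')$, whereas you split it into two implications via $Q'$ directly; the content is identical.
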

\begin{proof}
  We have $Q\subseteq Q'\subseteq \alpha_{T'}(Q')$. From
  $Q\subseteq \alpha_{T'}(Q')$, \\[0.5ex]
  it follows $\vec{m}(n)\not\in\alpha_{T'}(Q') \Rightarrow
  \vec{m}(n)\not\in Q$, \\[0.5ex]
  and $\min\{ n \;|\;
    \vec{m}(n)\not\in\alpha_{T'}(Q')\} \geq \min\{n\;|\; \vec{m}(n)\not\in
    Q\}=N$. \\[0.5ex]
  $\vec{m}(n)\not\in\alpha_{T'}(Q')$ is equivalent to
  $\exists j:
  T'_j\cdot \vec{m}(n)>\;\sup\limits_{\mathclap{\vm\in \alpha_{T'}(Q')}}\;
  T'_j\cdot \vm$ and $\displaystyle \sup_{\mathclap{\vm\in Q'}}
  T'_j\cdot \vm=\sup_{\mathclap{\vm\in\alpha_{T'}(Q')}} T'_j\cdot
  \vm$, hence we get
  the result.
\end{proof}
In our implementation we compute such a $Q'$ as described in the
previous paragraph and we choose for $T'$ the template matrix $T$
considered in \S\ref{sec:accelerating:bound}, to
which we may add the constraints of $Q'$.

The computation of such minima
ultimately relies on the Newton-Raphson method for solving
equations. Our implementation deals with the same cases as those
mentioned in \S\ref{sec:accelerating:bound} and in other cases safely
returns $N_j=\infty$. We refer to \pponly{the extended version \cite{JSS13}}\rronly{appendix \ref{sec:maxiter}} for details.
\begin{example}
  Coming back to Examples~\ref{ex:running1}-\ref{ex:running4} and \ref{ex:running5a}-\ref{ex:running5}, and considering
  $T_j=(0~ 1)$, we have \\
  $T_j\scdot\vec{m}(n)=(0~1)\scdot\vecb{1.5^n}{n}=n$,
    $e_j=\sup\limits_{\mathclap{\vm\in Q'}} T_j\scdot\vec{m}=\sup\limits_{\mathclap{\vm\in Q'}} m_2 = 3$
    and $N_j=\min\{ n \;|\; T_j\cdot\vec{m}(n) > e_j \} = 4$, \\
    which proves that $4$ is an upper bound on the maximum number of
  iterations of the loop, as claimed in Ex.~\ref{ex:running4}.
\end{example}

\paragraph{Computing $\alpha_T\{ J^n\;|\; 0\sleq n\sleq N\ssub 1\}$.}

If no finite upper bound $N$ is obtained with
Prop.~\ref{prop:Ntemplate} (given an input polyhedron $X$), then
we apply the method of \S\ref{sec:guard:simple}. Otherwise, we
replace in Eqn.~(\ref{eq:guard_overapprox}) the set
$\scrM=\alpha_T(A^*)$ with $\scrM'=\alpha_T(\{ J^n \;|\; 0\sleq
n\sleq N\ssub 1\})$.  This set $\scrM'$ can be computed using the
same technique as those mentioned in
\S\ref{sec:accelerating:bound} and detailed in \pponly{\cite{JSS13}}\rronly{\ref{sec:bound}}, or
even by enumeration if $N$ is small.
Ex.~\ref{ex:running4} and Figs.~\ref{fig:running1}-\ref{fig:running3} illustrate
the invariant we obtain this way on our running example.
\begin{example}[Running example with more complex guard]\label{ex:running7}
  Coming back to Examples~\ref{ex:running1}-\ref{ex:running4}, we
  consider the same loop but with the guard $x\sadd 2y\sleq 100$
  represented by the matrix $(1\; 2\; -\! 100)$.
  Compared to Ex.~\ref{ex:running5} we have now
  $G\Psi(\vec{m})[\underline{X\cap G},\overline{X\cap G}]=[1,3]m_1
  \sadd [0,4]\sadd 2m_2 \ssub 100$, hence, if $P$ is defined by
  $\varphi_\scrM$ as in Ex.~\ref{ex:running1},
  \vspace*{-0.4ex}
  \begin{align*}
    Q' &= P\cap \denotation{[1,3]m_1\sadd 2m_2\sleq [96,100]} \\
    &\subseteq P\cap\denotation{m_1\sadd 2\sadd 2m_2\sleq 100}\\
    & \quad\text{linearization based on } P\Rightarrow m_1\sgeq
    1\Rightarrow m_1\sadd 2\sleq[1,3]m_1
    \vspace{0.4ex}
  \end{align*}
  Using $Q'$ to bound the number of iterations according to
  Eqn.~(\ref{eq:minNumberIt}) leads to $N=11$ (obtained with the
  template expression $T_j=(1~2)$ tasken from the guard). At last
  we obtain the invariant $Z$ depicted in Fig.~\ref{fig:running4}. If
  instead of octagonal template expressions with $\ell=1$ in
  Eqn.~(\ref{eq:logahedratemplate}), we choose $\ell=3$, we do not
  improve the bound $N=11$ but we still obtain the better
  invariant $Z'$ on Fig.~\ref{fig:running4}.
\end{example}

\begin{figure}[t]
  \psset{xunit=0.16em,yunit=0.5em,griddots=0,subgriddiv=0,gridlabels=3pt}

  \vspace*{1ex}
  \begin{pspicture}(-5,-3)(170,15)
    \pspolygon[fillstyle=solid,fillcolor=LightGrey](1,0)(1,2)(1.5,3.75)(15.38,13)(114,13)(150,1)(3,0)
    \pspolygon[fillstyle=solid,fillcolor=DarkGrey](1,0)(1,2)(1.688,3.375)(2.81,4.6)(4.594,5.812)(8.4,7.336)(16.46,9.127)(57.13,13)(114,13)(137,5.217)(3,0)
      \psframe[fillstyle=solid,fillcolor=darkgray](1,0)(3,2)
      {\scriptsize
      \psaxes[ticksize=2pt,tickstyle=full,Dx=20,Dy=2]{->}(0,0)(0,0)(160,15)}
      \rput(160,1){{\small $x$}}
      \rput(3,14.45){{\small $y$}}
      \psline[linestyle=dashed](70,15)(100,0)
      \rput(50,14.5){\small $x\sadd 2y\sleq 100$}
      \rput(110,9){$Z'$}
      \rput(130,3){$Z$}
    \end{pspicture}
    \caption{Initial set of states $X$ (dark gray), loop invariants
      $Z$ (light gray) and $Z'$ (medium gray) obtained
      in Ex.~\ref{ex:running7}.
    }
    \label{fig:running4}
\end{figure}
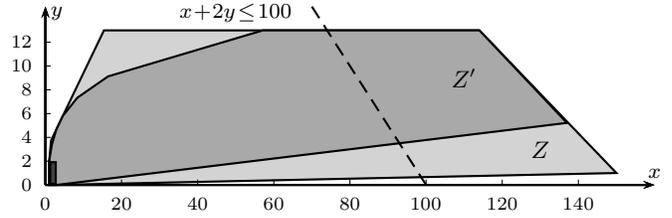


\subsection{Summary}
\label{sec:guard:summary}

We summarize now our method in the general case:

Given a guarded linear transformation $G\srightarrow A$, with
$J=RAR^{-1}$ the (real) Jordan normal form of $A$, its associated
matrix shape $\Psi$, a template expression matrix $T$ and a convex
polyhedron $X$ representing a set of states, we compute an
overapproximation $(G\srightarrow A)^*(X)$ with
\vspace*{-0.4ex}
$$
 X \;\sqcup\; (G\srightarrow
  A)\bigl((R^{-1}\scrM) \otimes Y\bigr)
\vspace*{-0.4ex}
$$
where
\begin{tabular}[t]{ll}
1. & $Y=R(X\sqcap G)$, \\
2. & $P=\alpha_T(\Psi^{-1}(J^*))$, \\
3. & $Q'=P \sqcap \{
  GR^{-1}\Psi(\vm)[\underline{Y},\overline{Y}]\leq 0 \}$, \\[0.5ex]
4. & $N$ approximated by some $N'$ using Prop.~\ref{prop:Ntemplate}, \\
5. & $\scrM=
  \begin{cases}
    \alpha_T(J^*) & \text{if } N'=\infty \\
    \alpha_T(\{ J^n \;|\; 0\sleq n\sleq N'\ssub 1 \}) & \text{otherwise}
  \end{cases}
  $
\end{tabular}
\smallskip


\section{Implementation and Experiments}
\label{sec:exp}

We implemented the presented approach in a prototype tool, evaluated
over a series of benchmark examples with various kinds of linear
loops, possibly embedded into outer loops, and compared it to
state-of-the-art invariant generators.

\begin{table*}[t]
\centering
\footnotesize
\begin{tabular}{|l|*{4}{@{\;}c@{\;}}|*{3}{r@{\;\;}}|*{3}{r@{}}@{\;\;\;\;\;}r@{}r@{}@{\;}|*{3}{r@{}l@{\;\;\;}}|*{2}{r@{}l@{\;\;\;}}|}
\hline
& \multicolumn{4}{c|}{characteristics}
& \multicolumn{8}{c|}{inferred bounds}
& \multicolumn{10}{c|}{analysis time (sec)} \\
name            & type                     & $s_\mathit{max}$ &
\#var & \#bds
& IProc & Sti & J & \multicolumn{3}{c}{J vs IProc} & \multicolumn{2}{c|}{J vs Sti} & \multicolumn{2}{c}{IProc}
& \multicolumn{2}{c}{Sti} & \multicolumn{2}{c}{J} &\multicolumn{2}{|c}{ JSage} & \multicolumn{2}{c|}{JAna} \\ \hline
\multicolumn{23}{|l|}{Examples with single loops} \\ \hline
parabola\_i1& $\neg s$,$\neg c$,$g$ & 3 & 3 & 60
& 46 & 38 & 54 & +8,& +17 && +16,&+12 & 0&.007 & 237& & 2&.509 & 2&.494 & 0&.015  \\
parabola\_i2& $\neg s$,$\neg c$,$g$ & 3 & 3 & 60
& 36 & 32 & 54 & +18,& +6 && +22,&+13 & 0&.008 & 289& & 2&.509 & 2&.494 & 0&.015 \\
cubic\_i1& $\neg s$,$\neg c$,$g$      & 4 & 4 & 80
& 54 & 34 & 72 & +18,& +26 & & +38,&+12 & 0&.015 & 704& & 2&.474 & 2&.393 & 0&.081 \\
cubic\_i2& $\neg s$,$\neg c$,$g$      & 4 & 4 & 80
& 34 & 28 & 72 & +38,& +9&, -2 & +44,&+11 & 0&.018 & 699& & 2&.487 & 2&.393 & 0&.094 \\
exp\_div& $\neg s$,$\neg c$,$g$      & 2 & 2 & 32
& 24 & 21 & 28 & +4,& +6&, -2 & +7,&+7 & 0&.004 & 31&.6 & 2&.308 & 2&.299 & 0&.009 \\
oscillator\_i0& $s$,$c$,$\neg g$ & 1 & 2 & 28
& 1 & 0 & 24 & +23,& +0&, -1 & +24,&+0 & 0&.004 & 0&.99 & 2&.532 & 2&.519 & 0&.013 \\
oscillator\_i1& $s$,$c$,$\neg g$ & 1 & 2 & 28
& 0 & 0 & 24 & +24,& +0 && +24,&+0 & 0&.004 & 1&.06 & 2&.532 & 2&.519 & 0&.013 \\
inv\_pendulum& $s$,$c$,$\neg g$  & 1 & 4 & 8
& 0 & 0 & 8 & +8,&+0 & & +8,&+0 & 0&.009 & 0&.920 & 65&.78 & 65&.24 & 0&.542 \\
convoyCar2\_i0 & $s$,$c$,$\neg g$ & 2 & 5 & 20
& 3 & 4 & 9 & +6,&+0 && +5,&+1 & 0&.007 & 0&.160 & 5&.461 & 4&.685 & 0&.776 \\
convoyCar3\_i0 & $s$,$c$,$\neg g$ & 3 & 8 & 32
& 3 & 3 & 15 & +12,&+0 && +12,&+0 & 0&.010 & 0&.235 & 24&.62 & 11&.98 & 12&.64 \\
convoyCar3\_i1 & $s$,$c$,$\neg g$ & 3 & 8 & 32
& 3 & 3 & 15 & +12,&+0 && +12,&+0 & 0&.024 & 0&.237 & 23&.92 & 11&.98 & 11&.94 \\
convoyCar3\_i2 & $s$,$c$,$\neg g$ & 3 & 8 & 32
& 3 & 3 & 15 & +12,&+0 && +12,&+0 & 0&.663 & 0&.271 & 1717& & 11&.98 & 1705& \\
convoyCar3\_i3 & $s$,$c$,$\neg g$ & 3 & 8 & 32
& 3 & 3 & 15 & +12,&+0 && +12,&+0 & 0&.122 & 0&.283 & 1569& & 11&.98 & 1557& \\ \hline
\multicolumn{23}{|l|}{Examples with nested loops} \\ \hline
thermostat& $s$,$\neg c$,$g$          & 2 & 3 & 24
& 18 & 24 & 24 & +6,& +1,& -1 & +0,&+6 & 0&.004 & 0&.404 & 4&.391 & 4&.379 & 0&.012 \\
oscillator2\_16& $\neg s$,$c$,$g$ & 1 & 2 & 48
& 9 & t.o. & 48 & +27,& +0 && \multicolumn{2}{@{}c@{}|}{t.o.} & 0&.003 & t.&o. & 4&.622 & 4&.478 & 0&.144 \\
oscillator2\_32& $\neg s$,$c$,$g$ & 1 & 2 & 48
& 9 & t.o. & 48 & +39,& +0 && \multicolumn{2}{@{}c@{}|}{t.o.} & 0&.003 & t.&o. & 4&.855 & 4&.490 & 0&.365 \\ \hline
\end{tabular}
 \\[0.7ex]
\textbf{``$s/\neg s$''}: stable/unstable loop; \textbf{``$c/\neg c$''}: has complex eigenvalues or not; \textbf{``$g/\neg g$''}: loops with/without
guard; \textbf{$s_\mathit{max}$}: size of largest Jordan block(s);
\textbf{``\#var''}: nb. of variables; \textbf{``\#bds''}: nb. of
bounds to be inferred at all control points;
\textbf{``IProc''},\textbf{``Sti''},\textbf{``J''}: nb.
of finite bounds inferred by
\textsc{Interproc}, \textsc{Sting}, and our method (``J'');
\textbf{``J vs IProc''},\textbf{``J vs Sti''}: \textbf{``+x,+y,[-z]''}:
nb. of infinite bounds becoming finite, nb. of improved finite
bounds[, nb. of less precise finite bounds (omitted if 0)] obtained with our
method over \textsc{Interproc}, \textsc{Sting}; \textbf{``analysis
  time''}: running times (seconds), with ``\textbf{JSage}'',''\textbf{JAna}'' corresponding to
computation of the Jordan normal form using \textsc{Sage} and the
analysis itself, and \textbf{``J''} being the sum of these two times. ``t.o.'' means time out after one hour.
\caption{Experimental results. }
\label{tab:exp}
\end{table*}

\subsection{Implementation}
\label{sec:impl}

We integrated our method in an abstract interpeter based on the
\textsc{Apron} library \cite{jeannetmine09}. We detail below some
of the issues involved.

\paragraph{Computation of the Jordan normal form}
To ensure soundness, we have taken a symbolic approach using
the computer algebra software
\textsc{Sage}\footnote{\url{www.sagemath.org}} for computing
the Jordan normal forms and transformation matrices over the
  field of algebraic numbers. The matrices are then
approximated by safe enclosing interval matrices.

\paragraph{Loops with conditionals}
Loops of which the body contains conditionals like $\textit{while}(G)
\{ \textit{ if }(C)\; \vx' \seq A_1\vx \textit{ else } \vx' \seq A_2\vx \}$ can
be transformed into two self-loops $\tau_1, \tau_2$ around a ``head''
location that are executed in non-deterministic order:
\begin{figure}[h]
\centering
     \psset{labelsep=5.5em,arrows=->}
	\circlenode{A}{head}
      \nccurve[angleA=30,angleB=-30,ncurv=3]{A}{A}\nbput{$G\wedge \neg
        C\srightarrow A_2$}
      \nccurve[angleA=150,angleB=-150,ncurv=3]{A}{A}\naput{$G\wedge C\srightarrow A_1$}
\end{figure}
We iteratively accelerate each self-loop separately $(\tau_1^\otimes
\circ \tau_2^\otimes)^*$. Since convergence of the outer loop
$(\cdot)^*$ is not guaranteed in general, we might have to use
widening. Yet, practical experience shows that in many cases a fixed
point is reached after a few iterations.

\paragraph{Nested loops}
We could use a similar trick to transform nested loops
$\textit{while}(G_1) \{ \textit{while}(G_2) \{\vx' \seq A_2\vx
\};~\vx' \seq A_1\vx \}$ into multiple linear self-loops $\tau_1, \tau_2, \tau_3$
by adding a variable $y$ (initialized to $0$) to encode the control flow:
$$
\begin{array}{ll}
\tau_1: & (G_1\vx\sleq 0 \wedge y\seq 0) \rightarrow (\vx'\seq \vx \wedge y'\seq 1)\\
\tau_2: & (G_2\vx\sleq 0 \wedge y\seq 1) \rightarrow (\vx'\seq A_2\vx \wedge y'\seq 1)\\
\tau_3: & (G_2\vx\sgt 0 \wedge y\seq 1) \rightarrow (\vx'\seq A_1\vx \wedge y'\seq 0)
\end{array}
$$ However, the encoding of the control flow in an integer variable is
ineffective because of the convex approximation of the polyhedral
abstract domain, this transformation causes an inacceptable loss of
precision.
For this reason, we accelerate only inner loops in nested loops
situations.  Our experimental comparison shows that computing precise
over-approximations of inner loops greatly improves the analysis of
nested loops, even if widening is applied to outer loops.

\subsection{Evaluation}
\label{sec:exp-eval}

\paragraph{Benchmarks}
Our benchmarks listed in Table~\ref{tab:exp} include various examples
of linear loops as commonly found in control software.  They
contain filters and integrators that correspond to various cases of
linear transformations (real or complex eigenvalues, size of Jordan
blocks).
\emph{parabola} and \emph{cubic} are loops with polynomial behavior
(similar to Fig.~\ref{fig:parabola-code}), \emph{exp\_div} is
Ex.~\ref{ex:running4} and \emph{thermostat} is
Ex.~\ref{Ex:iir-filter-example}.  \emph{inv\_pendulum} is the
classical model of a pendulum
balanced in upright position by movements of the cart it is mounted
on.
\emph{oscillator} is a damped oscillator, and
\emph{oscillator2} models a pendulum that is only damped in a
range around its lowest position, as if it was grazing the ground, for
example. This is modeled using several modes.
In \emph{ConvoyCar} \cite{SSM04},
a leading car is followed by one or more cars, trying to
maintain their position at 50m from each other:
\begin{center}
  \boldmath

  \begin{psmatrix}[rowsep=3ex,colsep=3em,emnode=p,mnode=r]
    & \psframebox{car2} & \psframebox{car1} & \psframebox{car0} &
    \\
    &&&&
  \end{psmatrix}
  \ncline{1,1}{1,2}
  \ncline{1,2}{1,3}
  \ncline{1,3}{1,4}
  \ncline[arrows=->]{1,4}{1,5}
  \nput{90}{2,2}{$x_2$}
  \nput{90}{2,3}{$x_1$}
  \nput{90}{2,4}{$x_0$}
\end{center}\vspace*{-2ex}
The equations for $N-1$ following cars are:
$$
\text{for all }i\in[1,N\ssub 1]: \ddot{x}_i=c(\dot{x}_{i-1}-\dot{x}_{i})+d(x_{i-1}-x_{i}-50)
$$
We analyzed a discretized version of this example to show that there is no collision, and
to compute bounds on the relative positions and speeds of the
cars. This example is particularily interesting because the real Jordan form of the
loop body has blocks associated to complex eigenvalues of size $N-1$.

All benchmarks have non-deterministic initial states (typically
bounding boxes). Some benchmarks
where analyzed for different sets of initial states (indicated by suffix \texttt{\_i}$x$).

\paragraph{Comparison}
Existing tools can only handle subsets of these
examples with reasonable precision.
We compared our method with
\begin{compactitem}
\item \textsc{Interproc} \cite{interproc} that implements standard
  polyhedral analysis with widening \cite{CH78};
\item \textsc{Sting} that implements the method of \cite{CSS03,SSM04}.
\end{compactitem}
A comparison with the \textsc{Astr\'ee} tool on the
\textit{thermostat} example has been given in Section~\ref{sec:summary-exp}.
A detailed \emph{qualitative} comparison between various methods described in
\S\ref{sec:rw} is shown in Table~\ref{tab:related}.

\begin{table*}[t]
  \centering
  \begin{tabular}{|l|l|c@{$\;$}c@{$\;$}c@{$\;$}c@{$\;$}c@{$\;$}c|}
    \hline
   & & & linear &  & relational  &  & \\
  type of linear  & illustrating & & abstr. accel. & ellipsoids & abstraction & \textsc{Aligator} & \textsc{Sting} \\
  transformation & examples & this paper & \cite{GH06,ACI10} & \cite{Feret05b,Monniaux05,RJGF12} &
    \cite{LPY01,Tiw03,Monniaux09,ST11} & \cite{Kovacs08b} & \cite{SSM04,CSS03} \\
   \hline
    \parbox{9em}{
      translations/resets \\
      $\lambda_i\seq 1 \wedge s_k\sleq 2 \vee \lambda_i\seq 0$
    } &
    & yes & yes & no & yes & yes & yes \\[1.5ex] \hline
    \parbox{9em}{
      polynomials \\
      $\lambda_i\in\{0,1\}$
    } &
   \parbox{9em}{\small\sf parabola, cubic}
   & yes & no & no & no & yes & no \\[1.5ex] \hline
    \parbox{9em}{
      exponentials \\
      $\lambda_i\in\mathbb{R}^+ \wedge s_k\seq 1$
    } &
    \parbox{9em}{\small\sf exp\_div} &
    yes & no & no & if $|\lambda_i|\slt 1$ & no & if $|\lambda_i|\slt 1$\\[1.5ex] \hline
    \parbox{9em}{
      rotations \\
      $\lambda_i\in\mathbb{C} \wedge s_k\seq 1$
    } &
    \parbox{9em}{\small\sf oscillator, oscillator2, inv\_pendulum}
    & yes & no & if $|\lambda_i|\slt 1$ & no & no & no \\[1.5ex] \hline
    \parbox{9em}{
      non-diagonalizable \& \\ non-polynomial \\
      $s_k\sgeq 2 \wedge \lambda_i\notin\{0,1\}$
    } &
    \parbox{9em}{\small\sf thermostat, convoyCar} &
    yes & no & no & no & no & if $|\lambda_i|\slt 1$ \\[2.75ex]\hline
    \parbox{9em}{
      unstable \&\\  non-polynomial \\
      $|\lambda_i|\sgt 1$
    } &
    \parbox{9em}{\small\sf exp\_div, oscillator2} &
    yes & no & no & no & no & no \\[0ex] \hline
    \parbox{9em}{
      loop guard handling
    } &
    \parbox{9em}{\small\sf parabola, cubic, exp\_div, thermostat,
    oscillator2} &
    yes & yes & partially & yes & no & yes \\[2.75ex] \hline
    \parbox{9em}{
      inputs/noise
    } & &
    no & yes & yes & yes & no & yes \\ \hline
   abstract domain & & polyhedra & polyhedra & ellipsoids & \parbox{5em}{\centering template \\ polyhedra}
   & \parbox{5em}{\centering polynomial \\ equalities} & polyhedra \\
    \hline
  \end{tabular}~\\[1ex]
{\small $s_k$ is the size of a Jordan block and $\lambda_i$ its associated eigenvalue.}
  \caption{Classification of linear loops and the capabilities of
    various analysis methods to infer precise invariants}
  \label{tab:related}
\end{table*}

\paragraph{Results}
Table~\ref{tab:exp} lists our experimental results.\footnote{A
  detailed account of the benchmarks and the obtained invariants can
  be found on \url{http://www.cs.ox.ac.uk/people/peter.schrammel} \url{/acceleration/jordan/}.}  We
compared the tools based on the number of \emph{finite} bounds
inferred for the program variables in each control point.  Where
applicable, we report the number of bounds more/less precise finite
bounds inferred by our tool in comparison to the other tools.

We note that our analysis dramatically improves the accuracy over the
two competing techniques. On all the benchmarks considered, it
generally provides strictly stronger invariants and is practically
able to infer finite variable bounds whenever they exist.
For instance, for the \textit{thermostat} example
(Fig.~\ref{fig:thermostat}), we infer that at least 6.28 and at most
9.76 seconds are continuously spent in heating mode and 10.72 to 12.79
seconds in cooling mode. \textsc{Interproc} just reports that time is
non-negative and \textsc{Sting} is able to find the much weaker bounds
$[0.88,13.0]$ and $[0.94,19.0]$.
On the \textit{convoyCar} examples, our method is the only one that is able
to obtain non-trivial bounds on distances, speeds and accelerations.

Yet, this comes at the price of increased computation times in
general. \textsc{Interproc} is significantly faster on all examples;
\textsc{Sting} is faster on half of the benchmarks and significantly
slower on the other half: for two examples, \textsc{Sting} does not
terminate within the given timeout of one hour, whereas our tool gives precise bounds
after a few seconds.  It must be noted that part of the higher
computation time of our tool is a ``one-time'' investment for
computing loop accelerations that can pay off for multiple
applications in the course of a deployment of our approach in a tool
such as \textsc{Astr\'ee}. In all but two of the examples (the
exceptions being \texttt{convoyCar3\_i[2|3]}), the one-time cost
dominates the overall cost of our analysis.  All in all, computation
times remain reasonable in the view of the tremendous gain in
precision.


\section{Related work}
\label{sec:rw}

\paragraph{Invariants of linear loops.}
The original \emph{abstract acceleration} technique of Gonnord et al
\cite{GH06,FG10}
precisely abstracts linear loops performing translations,
translations/resets and some other special cases.
The \emph{affine derivative closure} method \cite{ACI10}
approximates any loop by a
translation; hence it can handle any linear transformation,
but it is only precise for translations.
The tool \textsc{InvGen} \cite{GR09} uses template-based constraint
solving techniques for property-directed invariant generation, but it
is restricted to integer programs.

Methods from linear filter analysis target \emph{stable} linear
systems in the sense of Lyapunov stability.  These techniques consider
\emph{ellipsoidal domains}.  Unlike our method, they are able to
handle inputs, but they do not deal with guards.
Feret \cite{Feret04,Feret05b} designs specialized domains for
first and second order filters. These methods are implemented
in the \textsc{Astr\'ee} tool \cite{BCC+03}.
%
Monniaux \cite{Monniaux05} computes interval bounds for composed filters.
%
Roux et al \cite{RJGF12} present a method based on semidefinite programming
that infers both shape and ratio of an ellipsoid that is an invariant
of the system.
In contrast, our method does not impose any stability requirement.


Colon et al \cite{CSS03,SSM04} describe a method, implemented in the tool
\textsc{Sting}, for computing single
inductive, polyhedral invariants for loops based on non-linear
constraint solvers. It is a computationally expensive method and in
contrast, our approach is able to infer sound polyhedral
over-approximations for loops where the only \emph{inductive polyhedral}
invariant is \textsf{true}.

\emph{Relational abstraction} methods
\cite{LPY01,Tiw03,ST11} aim at finding a relation between
the initial state $\vx$ and any future state $\vx'$ in continuous
linear systems $d\vx(t)/dt=A\vx(t)$, which is a problem similar to the
acceleration of discrete linear loops.  The invariants are computed
based on off-the-shelf quantifier elimination methods over real
arithmetic. In contrast to our method, they handle only
diagonalizable matrices $A$.
\cite{Monniaux09} proposes a similar approach for discrete loops.
All these works compute a (template) polyhedral relation between
input and output states, and do not take into account the actual
input states. Hence, in contrast
to our method, they are unable to capture accurately unstable and rotating
behavior.

\emph{Strategy iteration} methods \cite{GGTZ07,GS07b,GSA+12} compute
best inductive invariants \emph{in the abstract domain} with the help
of mathematical programming.  They are not restricted to simple loops
and they are able to compute the invariant of a whole program at once.
However, they are restricted to template domains, \eg~ template
polyhedra and quadratic templates, and hence, unlike our method, they
are unable to \emph{infer} the shape of invariants.

The \textsc{Aligator} tool \cite{Kovacs08b} infers loop invariants that are polynomial \emph{equalities}
by \emph{solving the recurrence equations}
representing the loop body in closed form.  The class of programs that
can be handled by Aligator is incomparable to that considered in this
paper.  Whereas Aligator handles a subset of non-linear polynomial
assignments our work is currently restricted to linear assignments.
In contrast, we take into account linear inequality loop conditions, and can compute
inequality invariants.

\paragraph{Bounding loop iterations.}
Many papers have investigated the problem of bounding the number
of iterations of a loop, either
for computing widening thresholds \cite{SK06}
by linear extrapolation,
termination analysis \cite{ADFG10} using ranking functions or
for WCET analysis \cite{KKZ11} based on solving recurrence equations.
%
%
Yazarel and Pappas \cite{YP04} propose a method for computing time
intervals where a linear continuous system fulfills a given safety
property.  Their method can handle complex eigenvalues by performing
the analysis in the polar coordinate space. However, they can only
deal with diagonalizable systems.

All these methods assume certain restrictions on the form
of the (linear) loop, whereas our approach applies to any linear loop.

\section{Conclusion}
\label{sec:concl}

We presented a novel abstract acceleration method for discovering
polyhedral invariants of loops with general linear transformations.
It is based on abstracting the transformation matrices induced by any
number of iterations, polyhedral matrix multiplication, and a method
for estimating the number of loop iterations that also works in case
of exponential and oscillating behavior.  Our experiments show that we
are able to infer invariants that are out of the reach of existing
methods.  The precise analysis of linear loops is an essential feature
of static analyzers for control programs. Precise loop invariants are
equally important for alternative verification methods based on model
checking, for example. Further possible applications include
termination proofs and deriving complexity bounds of algorithms.

\paragraph{Ongoing work.}
In this paper, we considered only closed systems, \ie~without
inputs. However, important classes of programs that we want to
analyze, \eg~ digital filters, have inputs. Hence,
we are extending our methods to loops of the form
$G\vx+H\vec{\xi}\leq 0$ $\rightarrow\vx'=A\vx+B\vec{\xi}$ with inputs
$\vec{\xi}$ (similar to \cite{SJ12b}).
Moreover, our method easily generalizes to\rronly{ the analysis of}
linear continuous systems $d\vx(t)/dt=A\vx(t)$, \eg~representing the
modes of a hybrid automaton, by considering their analytical solution
$\vx(t)=e^{At}\vx(0)$.


\bibliographystyle{abbrvnat}
\label{sec:bibliography}
\bibliography{jabref,mybib}

\clearpage

\appendix

\section{Bounds on coefficients of powers of Jordan matrices}
\label{sec:bound}

The aim of this annex is to provide lower and upper bounds to expressions of
the form
\begin{gather}
  \phi(n)=\mu_1\varphi[\lambda_1,\theta_1,r_1,k_1](n) +
  s\cdot\mu_2\varphi[\lambda_2,\theta_2,r_2,k_2](n) \label{eq:bound:phin}
  \\
  \text{with }
  \varphi[\lambda,\theta,r,k](n)\defeq\tbinom{n}{k}\lambda^{n-k}
  \cos((n-k)\theta-r\tfrac{\pi}{2}) \\
  \text{and }
\left\{
\begin{array}{l}
\lambda_i\geq 0,\\
\theta_i\in[0,\pi],\\
r_i\in\{0,1\},\\
\mu_1>0, \mu_2\geq 0, \\
s\in\{-1,1\}
\end{array}
\right.\notag
\end{gather}
$\phi(n)$ is the linear
combination of the two different coefficients
$\varphi[\lambda_i,\theta_i,r_i,k_i](n)$ of the matrix $J^n$ where
$J$ is in real Jordan normal form, see
Section~\ref{sec:accelerating:jordan}.

When $\lambda_i>0$ for $i=1,2$,
by defining $\gamma_i=\log\lambda_i$
we extend $\phi$ to positive reals as follows:
\begin{gather}
  \phi(t)=\mu_1\varphi[\gamma_1,\theta_1,r_1,k_1](n) +
  s\cdot\mu_2\varphi[\gamma_2,\theta_2,r_2,k_2](t) \label{eq:bound:phi}
  \\
  \text{with }
  \varphi[\gamma,\theta,r,k](t)\defeq \frac{P_k(t)}{k!}
  e^{(t-k)\gamma}\cos((t-k)\theta-r\tfrac{\pi}{2})  \label{eq:bound:varphi}
  \\
  \text{ and } P_k(t)=\prod_{l=0}^{k-1} (t-l) \label{eq:bound:Pk}
\end{gather}
In the sequel, $(\gamma_1,k_1)\succ(\gamma_2,k_2)$
denotes the lexical order
$\gamma_1>\gamma_2\vee\gamma_1=\gamma_2\wedge k_1>k_2$.

\subsection{Technical results}
\label{sec:bound:technical}

The following proposition provides a way to compute the lower and
upper bound of a function on a set of integers, provided some assumptions.

\begin{proposition}[Infinum and supremum of functions on integers]\label{prop:ezero}
  Let $f$ be a continuous function with a continuous first order
  derivative $f'$ and such that:
  \begin{compactenum}[(i)]
  \item $f(\infty)=\lim_{t\rightarrow\infty}f(t)$ exists;
  \item there exists $N\geq 0$
    such that \\
    $(\forall t\sgeq N: f'(t)\sgeq 0) \vee (\forall
    t\sgeq N: f'(t)\sleq 0)$.
  \end{compactenum}
  \smallskip

  Let $N'=\min(N,N_\mathit{max})$. Then
  \begin{align*}
      \inf_{N_\mathit{min}\leq n\leq N_\mathit{max}} f(n) &= \min\big(\min\limits_{N_\mathit{min}\leq n\leq N'}
      f(n),\; f(N_\mathit{max})\big) \\
      \sup_{N_\mathit{min}\leq n\leq N_\mathit{max}} f(n) &= \max\big(\max\limits_{N_\mathit{min}\leq n\leq N'} f(n),\; f(N_\mathit{max})\big)
    \end{align*}
\end{proposition}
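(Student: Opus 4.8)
The plan is to reduce the two claims to one and then let assumption~(ii) confine all the non-trivial behaviour of $f$ to the half-line $[N,\infty)$, on which $f$ is monotone. It suffices to prove the supremum identity, since the infimum one is obtained by applying it to $-f$: this function again satisfies (i) (the limit $-f(\infty)$ exists in $\bar{\mathbb{R}}$) and (ii) (with the two disjuncts exchanged), has the same $N$ and hence the same $N'=\min(N,N_\mathit{max})$, and $\inf_{N_\mathit{min}\le n\le N_\mathit{max}}f(n)=-\sup_{N_\mathit{min}\le n\le N_\mathit{max}}(-f)(n)$ while $\min(a,f(N_\mathit{max}))=-\max(-a,-f(N_\mathit{max}))$. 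I would also take $N$ to be an integer with $N\ge N_\mathit{min}$ --- the situation in all intended uses, where $N_\mathit{min}=0$ --- which is harmless because enlarging $N$ only shrinks the half-line on which (ii) is used.

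Write $S=\{n\in\mathbb{Z}\mid N_\mathit{min}\le n\le N_\mathit{max}\}$, allowing $N_\mathit{max}=\infty$ and reading $f(N_\mathit{max})$ as $f(\infty)$, which exists in $\bar{\mathbb{R}}$ by (i). The inequality ``$\ge$'' is immediate: $\max_{n\in S,\,n\le N'}f(n)$ is a supremum over a subset of $S$, and $f(N_\mathit{max})$ is either one of the values $f(n)$, $n\in S$ (if $N_\mathit{max}$ is finite), or the limit $\lim_{n\to\infty}f(n)\le\sup_{n\in S}f(n)$; this limit exists and equals $f(\infty)$ because $f$ is monotone on the tail by (ii).

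The reverse inequality is the crux. Fix $n\in S$; if $n\le N'$ there is nothing to prove. Otherwise $n>N'$, and since $n\le N_\mathit{max}$ this forces $N'=N$, so $n>N$ and $f$ is monotone on $[N,N_\mathit{max}]$ by (ii). If $f$ is nondecreasing there, then $f(n)\le f(N_\mathit{max})$ --- directly when $N_\mathit{max}$ is finite, and via $f(n)\le\lim_{t\to\infty}f(t)=f(\infty)$ when $N_\mathit{max}=\infty$. If $f$ is nonincreasing on $[N,\infty)$, then $f$ is nonincreasing on the integers $\ge N$, so $f(n)\le f(N)$; as $N=N'\in S$, this gives $f(n)\le\max_{m\in S,\,m\le N'}f(m)$. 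In every case $f(n)$ is bounded by the right-hand side, so $\sup_{n\in S}f(n)\le\max\bigl(\max_{m\in S,\,m\le N'}f(m),\,f(N_\mathit{max})\bigr)$, and with the previous paragraph we get equality.

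I expect the argument to have no real depth; the main obstacle is bookkeeping. Assumption~(ii) does the work, while assumption~(i) only ensures the statement is meaningful for $N_\mathit{max}=\infty$ --- where the supremum over the infinite tail of integers is attained merely in the limit, and ``$f(N_\mathit{max})$'' means $f(\infty)$. The fiddly points are: separating the nondecreasing and nonincreasing sub-cases of the monotone tail, which behave oppositely (the tail extremum sits at $N_\mathit{max}$ in one and at the first tail integer in the other); and the rounding relating $N$, $\ceil N$ and the tabulated index set $\{N_\mathit{min},\dots,N'\}$ --- it is precisely because $N$ is taken to be an integer $\ge N_\mathit{min}$ that the least tail integer is $N'$ itself, making the displayed formula exact.
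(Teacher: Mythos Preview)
The paper states this proposition without proof, treating it as an elementary bookkeeping fact about eventually monotone functions; there is therefore no argument to compare against. Your proof is correct and does exactly what is needed: the reduction of the infimum case to the supremum case via $-f$, the trivial ``$\ge$'' direction, and the case split on $n\le N'$ versus $n>N'$ with the two monotonicity sub-cases all go through. You are also right to flag that the statement, read literally, needs $N$ to be an integer with $N\ge N_\mathit{min}$; otherwise in the nonincreasing case the tail maximum sits at $\lceil N\rceil>N'$ and is not captured by $\max_{N_\mathit{min}\le n\le N'}f(n)$. In the paper's intended use $N_\mathit{min}=0$ and the thresholds $N$ are constructed as integers (e.g.\ $N_{\gamma,k}=k+\lceil k/|\gamma|\rceil$), so this is a presentational gap in the proposition rather than an error in your argument.
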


For $M>0$ and $(\gamma,k)\succ (0,0)$ ($k$ may be negative), we define
\begin{proposition}[Increasing $f(t)=t^k e^{\gamma
    t}$]\label{prop:poly_frac_greater} ~ \\
  We assume $(\gamma,k)\succ(0,0)$ ($k$ may be negative), $p\geq 2$ and a
  lower bound $M>0$. We define
  \begin{equation}
    \label{eq:bound:T0}
    T_0[M,\gamma,k] \defeq
    \left\{
      \begin{array}{l}
        t_0 + \dfrac{\max(0,M-f(t_0))}{f'(t_0)}
        \quad \text{if } \gamma>0 \\
        \text{with some } t_0\geq
        \begin{cases}
          1 &\ \text{if } k\geq 0 \\
          -\frac{pk}{\gamma} &\text{if } k<0
        \end{cases} \\
        \sqrt[k]{M} \qquad\qquad\qquad \text{if } \gamma=0 \wedge k>0
      \end{array}
    \right.
  \end{equation}
  Then $t\geq T_0[M,\gamma,k] \implies f(t)\geq M$
\end{proposition}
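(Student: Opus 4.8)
I would split on whether $\gamma=0$ or $\gamma>0$. When $\gamma=0$, the hypothesis $(\gamma,k)\succ(0,0)$ forces $k>0$, so $f(t)=t^{k}$ is strictly increasing on $(0,\infty)$ and $f(t)\geq M$ holds exactly when $t\geq M^{1/k}=\sqrt[k]{M}$; this is precisely the threshold $T_0[M,\gamma,k]$ claimed in this case, and nothing more is needed.

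For $\gamma>0$ the plan is to show that $f$ is increasing and convex on $[t_0,\infty)$, so that it stays above its tangent line at $t_0$, and then to solve the resulting affine lower bound against $M$. Differentiating twice gives the factorizations
$$
f'(t)=t^{k-1}e^{\gamma t}(\gamma t+k),\qquad
f''(t)=t^{k-2}e^{\gamma t}\bigl((\gamma t+k)^2-k\bigr).
$$
From the first factorization I would show $f'(t_0)>0$: if $k\geq0$ this is immediate since $t_0\geq1>0$; if $k<0$, the lower bound $t_0\geq -pk/\gamma$ together with $p\geq2$ yields $\gamma t_0\geq -pk\geq -2k>-k$, hence $\gamma t_0+k>0$, and since $\gamma t+k$ increases in $t$ it stays positive for all $t\geq t_0$. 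From the second factorization I would show $f''(t)\geq0$ on $[t_0,\infty)$: the prefactor $t^{k-2}e^{\gamma t}$ is positive for $t>0$, and $(\gamma t+k)^2-k\geq0$ because either $k\leq0$ (whence $-k\geq0$ and the square is nonnegative) or $k$ is a positive integer (whence $\gamma t+k\geq k\geq\sqrt{k}$). Convexity on $[t_0,\infty)$ then makes $f'$ nondecreasing there, so $f'(t)\geq f'(t_0)>0$ and $f$ is in fact strictly increasing on $[t_0,\infty)$, while the tangent inequality $f(t)\geq f(t_0)+f'(t_0)(t-t_0)$ holds for every $t\geq t_0$.

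Finally I would convert this into the stated threshold by a case split on $f(t_0)$. If $f(t_0)\geq M$, then $\max(0,M-f(t_0))=0$, so $T_0=t_0$, and monotonicity gives $f(t)\geq f(t_0)\geq M$ for all $t\geq t_0$. If $f(t_0)<M$, then $T_0=t_0+\frac{M-f(t_0)}{f'(t_0)}>t_0$, and for $t\geq T_0$ the tangent inequality gives $f(t)\geq f(t_0)+f'(t_0)(t-t_0)\geq f(t_0)+f'(t_0)(T_0-t_0)=M$; in both cases $t\geq T_0$ implies $f(t)\geq M$.

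The step I expect to be the main obstacle is the convexity claim $f''\geq0$ on $[t_0,\infty)$: it rests on $(\gamma t+k)^2\geq k$, which is automatic only for $k\leq0$ or $k\geq1$. For a hypothetical $0<k<1$ one would additionally need $\gamma t_0\geq\sqrt{k}-k$, which $t_0\geq1$ does not by itself guarantee; in the Jordan-normal-form application that motivates this lemma, however, the exponent $k$ is always a nonnegative integer, so this gap never arises. A minor secondary point is that the denominator $f'(t_0)$ in the definition of $T_0$ must be shown strictly positive precisely in the case $k<0$, which is exactly the purpose of the bound $t_0\geq -pk/\gamma$ with $p\geq2$.
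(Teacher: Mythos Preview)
Your proof is correct and follows essentially the same line as the paper's: compute $f'(t)=f(t)\bigl(\gamma+\tfrac{k}{t}\bigr)$ and $f''(t)=f(t)\bigl((\gamma+\tfrac{k}{t})^2-\tfrac{k}{t^2}\bigr)$, establish that both are positive on $[t_0,\infty)$ by the same case split on the sign of $k$, and invoke the tangent-line (order-2 Taylor) lower bound to obtain the threshold. Your explicit case split on whether $f(t_0)\geq M$ and your remark that the convexity step tacitly uses $k\in\mathbb{Z}$ (which the Jordan-form application guarantees) are in fact more careful than the paper's own argument, which makes the same implicit assumption.
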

\begin{proof}
  If $\gamma=0$, the result is trivial.
  Otherwise,
  \begin{align*}
    f'(t) &= f(t)\Bigl(\gamma+\frac{k}{t}\Bigr)\\
    f''(t) &= f'(t)\Bigl(\gamma+\frac{k}{t}\Bigr)+f(t)\frac{-k}{t^2} \\
    &= f(t)\left(\Bigl(\gamma+\frac{k}{t}\Bigr)^2-\frac{k}{t^2}\right)
  \end{align*}
  \begin{compactitem}
  \item If $\gamma>0\wedge k\geq 0$, if $t\geq 1$, then $f'(t)$
    and $f''(t)$ are strictly positive.
  \item If $\gamma>0\wedge k<0$, if $t\geq 0$ then
    $f''(t)>0$. Moreover, if $t\geq t_0=-\dfrac{pk}{\gamma}$, then
    $\gamma+\dfrac{k}{t}\geq\dfrac{p-1}{p}\gamma>0$, hence
    $f'(t)>0$.
  \end{compactitem}
  In both cases, from the Taylor
  expansion of order 2, we obtain $f(t) \geq f(t_0)
  + (t-t_0)f'(t_0)$. Hence, $f(t_0) +
  (t-t_0)f'(t_0)\geq M \Leftrightarrow t-t_0\geq
  \frac{M-f(t_0)}{f'(t_0)}$.
\end{proof}

\begin{proposition}[Weighted sums of
  sinus]\label{prop:bound:add_sin} ~ \\
  For any $\mu_1\neq 0, \mu_2\neq 0$, $\varphi_1$, $\varphi_2$,
  \begin{gather}
    \mu_1 \sin(\theta) + \mu_2\cos(\theta)
    = \mu\sin(\theta+\varphi) \\
    \begin{aligned}[t]
      \text{with}\quad \mu &= \sqrt{\mu_1^2+\mu_2^2} \\[-2ex]
      \text{and}\quad \varphi &= \arctan(\mu_2/\mu_1) +
      \begin{cases}
        0 &\text{if } \mu_1\geq 0 \\
        \pi &\text{otherwise}
      \end{cases}
    \end{aligned} \notag \\
    \mu_1 \sin(\theta+\varphi_1) + \mu_2\sin(\theta+\varphi_2)
    = \mu \sin(\theta+\varphi) \\
    \begin{aligned}[t]
      \text{with}\quad \mu &= \sqrt{\mu_1^2+\mu_2^2+2\mu_1\mu_2\cos(\varphi_2-\varphi_1)} \\[-1ex]
      \text{and}\quad\varphi &=
       \begin{multlined}[t]
         \arctan\left(
           \dfrac
           {\mu_1\sin\varphi_1+\mu_2\sin\varphi_2}
           {\mu_1\cos\varphi_1+\mu_2\cos\varphi_2}
         \right) \\
         +
         \begin{cases}
           0 & \text{if } \mu_1\cos\varphi_1+\mu_2\cos\varphi_2 \geq 0 \\
           \pi & \text{otherwise}
         \end{cases}
       \end{multlined}
    \end{aligned}
      \notag
  \end{gather}
\end{proposition}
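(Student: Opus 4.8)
The plan is to prove both identities by expanding the right-hand sides with the angle-addition formula and matching the coefficients of $\sin\theta$ and $\cos\theta$.

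For the first identity, I would expand $\mu\sin(\theta+\varphi) = (\mu\cos\varphi)\sin\theta + (\mu\sin\varphi)\cos\theta$, so it suffices to check that the proposed $\mu$ and $\varphi$ satisfy $\mu\cos\varphi = \mu_1$ and $\mu\sin\varphi = \mu_2$. Since $\mu = \sqrt{\mu_1^2+\mu_2^2}$, the pair $(\mu_1/\mu,\mu_2/\mu)$ lies on the unit circle, hence there is a unique $\varphi\in(-\pi,\pi]$ with $(\cos\varphi,\sin\varphi) = (\mu_1/\mu,\mu_2/\mu)$, and I would verify that the given closed form selects exactly this $\varphi$: $\arctan(\mu_2/\mu_1)$ lies in $(-\pi/2,\pi/2)$, where the cosine is positive, so it is the correct angle when $\mu_1\geq 0$, while its shift by $\pi$ is correct when $\mu_1<0$; in both cases the tangent equals $\mu_2/\mu_1$ as required (here $\mu_1\neq 0$ by hypothesis, so the $\arctan$ is well defined).

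For the second identity, I would first expand each summand, $\mu_i\sin(\theta+\varphi_i) = (\mu_i\cos\varphi_i)\sin\theta + (\mu_i\sin\varphi_i)\cos\theta$, so that the sum equals $A\sin\theta + B\cos\theta$ with $A = \mu_1\cos\varphi_1 + \mu_2\cos\varphi_2$ and $B = \mu_1\sin\varphi_1 + \mu_2\sin\varphi_2$. Applying the first identity with $(\mu_1,\mu_2)$ replaced by $(A,B)$ then yields $A\sin\theta + B\cos\theta = \mu\sin(\theta+\varphi)$ with $\mu = \sqrt{A^2+B^2}$ and $\varphi = \arctan(B/A)$ plus $0$ or $\pi$ according to the sign of $A$ — which is precisely the stated formula for $\varphi$. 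Finally a short computation using $\cos^2+\sin^2=1$ and the cosine subtraction formula gives $A^2+B^2 = \mu_1^2 + \mu_2^2 + 2\mu_1\mu_2(\cos\varphi_1\cos\varphi_2 + \sin\varphi_1\sin\varphi_2) = \mu_1^2 + \mu_2^2 + 2\mu_1\mu_2\cos(\varphi_2-\varphi_1)$, matching the stated $\mu$.

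The only delicate point is the degenerate case where the denominator inside $\arctan$ vanishes: $\mu_1 = 0$ in the first identity (excluded by the hypothesis $\mu_1\neq 0$) or $A = 0$ in the second, which can occur even with $\mu_1,\mu_2\neq 0$ (for instance $\varphi_2 = \varphi_1+\pi$ and $\mu_1=\mu_2$, in which case also $\mu = 0$ and the identity degenerates to $0 = 0$). Since the lemma is only invoked in Section~\ref{sec:accelerating:bound} when these denominators are nonzero, I would state the closed form under that proviso and dispatch the vanishing-denominator case separately (taking $\varphi = \pm\pi/2$ when $A = 0$ but $B\neq 0$, and observing that both sides are identically zero when $\mu = 0$). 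I expect this case analysis, rather than the core trigonometric computation, to be the main source of tedium.
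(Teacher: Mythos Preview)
Your proposal is correct and follows essentially the same approach as the paper: expand $\mu\sin(\theta+\varphi)$ via the angle-addition formula, match the coefficients $\mu\cos\varphi=\mu_1$, $\mu\sin\varphi=\mu_2$, and for the second identity first expand each $\mu_i\sin(\theta+\varphi_i)$ and then apply the first identity to the resulting $A\sin\theta+B\cos\theta$. Your treatment of the degenerate case $A=0$ is more careful than the paper's, which simply omits it.
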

\begin{proof}
  $\begin{array}[t]{rcl}
    \mu\sin(\theta+\varphi)&=&\mu\sin\theta\cos\varphi + \mu
    \cos\theta\sin\varphi \\
    &=& \mu_1 \sin\theta + \mu_2\cos\theta
  \end{array}$\\
  if $\mu\cos\varphi = \mu_1 \wedge
  \mu\sin\varphi = \mu_2$, which is implied by
  $\mu = \sqrt{\mu_1^2+\mu_2^2}$ and
  $\varphi = \arctan(\mu_2/\mu_1) +
  \begin{cases}
    0 &\text{if } \mu_1\geq 0 \\
    \pi &\text{otherwise}
  \end{cases}
  $.
  \begin{align*}
     &\mu_1 \sin(\theta+\varphi_1) + \mu_2\sin(\theta+\varphi_2) \\
     =& (\mu_1\cos\varphi_1 + \mu_2\cos\varphi_2) \sin\theta +
     (\mu_1\sin\varphi_1 + \mu_2\sin\varphi_2) \cos\theta
  \end{align*}
  and we apply the previous identity.
\end{proof}

\begin{definition}[Decomposition of an angle modulo $\pi$] ~\\
  Given an angle $\varphi$, we define
  \begin{equation*}
    \decomp(\varphi) = (\ell,\phi)
    \quad \text{with} \quad
    \left\{
    \begin{aligned}
      \ell &= \vceil{\frac{\varphi}{\pi}-\frac{1}{2}} \in
      \mathbb{Z} \\
      \phi &= \varphi - \ell\pi \in [-\tfrac{\pi}{2},\tfrac{\pi}{2}]
  \end{aligned}
  \right.
  \end{equation*}
\end{definition}

\begin{proposition}[Extrema of $f(t)=e^{\gamma t}\sin(\theta
  t+\varphi)$]\label{prop:bound:complex:Jp_k_zero} ~\\
  We assume $\theta\in]0,\pi[$ and an interval $[T_\mathit{min},T_\mathit{max}]$.

  \noindent The two first elements $t_0, t_1$ of the set $\{ t
  \;|\; f'(t)=0\wedge t\geq T_\mathit{min}\}$ are defined by
  \begin{gather*}
    \begin{aligned}
      t_i &= \tfrac{1}{\theta}\bigl(-\arctan(\tfrac{\theta}{\gamma}) - \varphi + (\ell_0+i)\pi\bigr) \\
      \text{with }(\ell_\mathit{min},\phi_\mathit{min}) &= \decomp(\theta
      T_\mathit{min}+\varphi) \\
      \text{and } \ell_0 &= \begin{cases}
        \ell_\mathit{min} &\text{if } \phi_\mathit{min}\leq -\arctan(\tfrac{\theta}{\gamma}) \\
        \ell_\mathit{min}+1 & \text{if }  \phi_\mathit{min}>-\arctan(\tfrac{\theta}{\gamma})
      \end{cases}
    \end{aligned}
  \end{gather*}
  The two last elements $t_2, t_3$ of the set $\{ t
  \;|\; f'(t)=0\wedge t\leq T_\mathit{max}\}$ are defined by
  \begin{gather*}
    \begin{aligned}
      t_{2+i} &=
      \tfrac{1}{\theta}\bigl(-\arctan(\tfrac{\theta}{\gamma}) -
      \varphi + (\ell_2-i)\pi\bigr) \\
      \text{with }
      (\ell_\mathit{max},\phi_\mathit{max}) &= \decomp(\theta
      T_\mathit{max}+\varphi) \\
      \text{and }
      \ell_2 &= \begin{cases}
        \ell_\mathit{max}-1 &\text{if } \phi_\mathit{max}\leq -\arctan(\tfrac{\theta}{\gamma}) \\
        \ell_\mathit{max} & \text{if }  \phi_\mathit{max}>-\arctan(\tfrac{\theta}{\gamma})
      \end{cases}
    \end{aligned}
  \end{gather*}
  As a consequence,
  \begin{itemize}
  \item On the interval $[T_\mathit{min},\infty[$, the two
    extrema of $f$ are
    \begin{itemize}
    \item $f(t_0)$ and $f(t_1)$, if $\gamma\leq 0$;
    \item $-\infty$ and $\infty$, if $\gamma>0$.
    \end{itemize}
  \item On the interval $[T_\mathit{min},T_\mathit{max}]$, the two
    extrema of $f$ are
    \begin{multline*}
      \min_{t\in\mathcal{T}} f(t) \text{ and } \max_{t\in\mathcal{T}}
      f(t) \\
      \text{with }
      \mathcal{T}=\{ T_\mathit{min},T_\mathit{max} \} \cup \{
      t_0,t_1,t_2,t_3\} \cap [T_\mathit{min},T_\mathit{max}]
    \end{multline*}
  \end{itemize}
\end{proposition}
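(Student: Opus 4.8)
The plan is to analyse the derivative of $f(t)=e^{\gamma t}\sin(\theta t+\varphi)$ directly. First I would compute $f'(t)=e^{\gamma t}\bigl(\gamma\sin(\theta t+\varphi)+\theta\cos(\theta t+\varphi)\bigr)$. Since $e^{\gamma t}>0$, the zeros of $f'$ are exactly the zeros of $g(t)=\gamma\sin(\theta t+\varphi)+\theta\cos(\theta t+\varphi)$. Using Proposition~\ref{prop:bound:add_sin} (the weighted-sum-of-sines identity) with the roles $\mu_1=\gamma$, $\mu_2=\theta$, I would rewrite $g(t)=\sqrt{\gamma^2+\theta^2}\,\sin\bigl(\theta t+\varphi+\psi\bigr)$ where $\psi=\arctan(\theta/\gamma)$ (adjusted by $\pi$ when $\gamma<0$, as in that proposition; the case $\gamma=0$ gives $\psi=\pi/2$ and is handled by continuity/limit). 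Hence $f'(t)=0$ iff $\theta t+\varphi+\arctan(\theta/\gamma)\equiv 0 \pmod{\pi}$, i.e. iff $t=\tfrac{1}{\theta}\bigl(k\pi-\varphi-\arctan(\theta/\gamma)\bigr)$ for $k\in\mathbb{Z}$. This already yields the stated closed form for the critical points; it remains to pin down \emph{which} integer $k$ gives the first two solutions $\geq T_\mathit{min}$ and the last two solutions $\leq T_\mathit{max}$.

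For the indexing, I would use the angle-decomposition $\decomp(\varphi)=(\ell,\phi)$ with $\phi\in[-\tfrac\pi2,\tfrac\pi2]$. Applying it to $\theta T_\mathit{min}+\varphi$ gives $(\ell_\mathit{min},\phi_\mathit{min})$, so $\theta T_\mathit{min}+\varphi=\ell_\mathit{min}\pi+\phi_\mathit{min}$. The condition $t\geq T_\mathit{min}$ becomes $k\pi-\arctan(\theta/\gamma)\geq \ell_\mathit{min}\pi+\phi_\mathit{min}$, i.e. $k\geq \ell_\mathit{min}+\tfrac{\phi_\mathit{min}+\arctan(\theta/\gamma)}{\pi}$. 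Since $\phi_\mathit{min}\in[-\tfrac\pi2,\tfrac\pi2]$ and $\arctan(\theta/\gamma)\in(-\tfrac\pi2,\tfrac\pi2)$, the fractional offset lies in $(-1,1)$, so the smallest admissible $k$ is $\ell_\mathit{min}$ when $\phi_\mathit{min}\leq -\arctan(\theta/\gamma)$ and $\ell_\mathit{min}+1$ otherwise --- exactly the definition of $\ell_0$. Then $t_0,t_1$ correspond to $k=\ell_0,\ell_0+1$. The argument for $t_2,t_3$ is the mirror image: decompose $\theta T_\mathit{max}+\varphi$, impose $k\pi-\arctan(\theta/\gamma)\leq\ell_\mathit{max}\pi+\phi_\mathit{max}$, and read off that the largest admissible $k$ is $\ell_\mathit{max}$ or $\ell_\mathit{max}-1$ according to the same threshold, giving $\ell_2$ and then $t_{2},t_{3}$ from $k=\ell_2,\ell_2-1$.

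For the "as a consequence" part: on $[T_\mathit{min},\infty[$ with $\gamma\leq 0$, $f$ is bounded (by $e^{\gamma T_\mathit{min}}$ in absolute value when $\gamma\le 0$, or by a global constant when $\gamma=0$) and every extremum is attained at a critical point; between consecutive critical points $t_k,t_{k+1}$ the envelope $e^{\gamma t}$ is monotone, so $|f|$ restricted to critical points is monotone decreasing in $k$ (for $\gamma<0$) or constant (for $\gamma=0$), hence the global max and min over the whole ray are attained among the first two critical points $t_0,t_1$ --- which already straddle one full period of the sine, so one of them realises a positive peak and the other a negative peak. When $\gamma>0$ the envelope blows up and $f$ oscillates with unbounded amplitude, giving extrema $\pm\infty$. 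On the bounded interval $[T_\mathit{min},T_\mathit{max}]$ the extrema of a $C^1$ function occur either at an interior critical point or at an endpoint; the interior critical points lying in the interval form a subset of $\{t_0,t_1,t_2,t_3\}$ (the intermediate ones, if any, have envelope values sandwiched between those of $t_1$ and $t_2$, so they cannot beat all four), hence the min and max over $\mathcal{T}=\{T_\mathit{min},T_\mathit{max}\}\cup\bigl(\{t_0,t_1,t_2,t_3\}\cap[T_\mathit{min},T_\mathit{max}]\bigr)$ coincide with the true extrema.

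The main obstacle I anticipate is the careful bookkeeping of the integer offsets $\ell_0,\ell_2$: one must verify that the $\arctan(\theta/\gamma)$ correction (with its $\pi$-shift when $\gamma<0$, and its limiting value $\tfrac\pi2$ when $\gamma=0$) interacts correctly with the $[-\tfrac\pi2,\tfrac\pi2]$-normalisation of $\decomp$ so that the "first two" and "last two" critical points are selected without off-by-one errors, and that in the $\gamma=0$ case the same formulas still deliver genuine extrema rather than inflection points. The envelope-monotonicity argument justifying that only the four boundary critical points matter is conceptually easy but needs to be stated cleanly; everything else is routine trigonometric manipulation via Proposition~\ref{prop:bound:add_sin}.
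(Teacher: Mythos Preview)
Your proposal is correct and follows essentially the same route as the paper. The paper computes $f'(t)=e^{\gamma t}\bigl(\gamma\sin(\theta t+\varphi)+\theta\cos(\theta t+\varphi)\bigr)$ and passes directly to the tangent equation $\tan(\theta t+\varphi)=-\theta/\gamma$, then uses $\decomp$ exactly as you do to locate the first solution $\geq T_\mathit{min}$; you instead rewrite the bracket via Proposition~\ref{prop:bound:add_sin} as a single sine, but this is a cosmetic difference and leads to the identical solution set and the same case split on $\phi_\mathit{min}$ versus $-\arctan(\theta/\gamma)$. Your treatment of the ``as a consequence'' claims (envelope monotonicity of $|f|$ at successive critical points, hence only the first two or last two critical points matter) is in fact more explicit than the paper's proof, which stops after deriving $t_0,t_1$ and leaves those consequences unargued.
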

\begin{proof}
  $f'(t)=e^{\gamma t}\bigl(\gamma \sin(\theta t + \varphi) + \theta \cos(\theta t + \varphi)\bigr)$.
  $f'(t)=0\Leftrightarrow \tan(\theta t +
  \varphi)=-\tfrac{\theta}{\gamma}$. We decompose $ (\theta
  T_\mathit{min} + \varphi)=\phi_\mathit{min} + \ell_\mathit{min}\pi$ with
  $(\ell_\mathit{min},\phi_\mathit{min})= \decomp(\theta
  T_\mathit{min}+\varphi)$. Looking for the first solution
  $t_0\geq T_\mathit{min}$ of $f(t)=0$, we have
  \begin{gather*}
    \begin{aligned}
      \theta t_0+\varphi &= \theta(t_0-T_\mathit{min}) + \theta T_\mathit{min}+\varphi \\
      &= \theta(t_0-T_\mathit{min}) + \phi + \ell_\mathit{min}\pi
    \end{aligned}\\
    \begin{aligned}
      & \tan(\theta t_0+\varphi) = -\tfrac{\theta}{\gamma} \\
      \Leftrightarrow &
      \tan(\phi + \theta(t_0-T_\mathit{min})) = -\tfrac{\theta}{\gamma} \\
      \Leftrightarrow&
      \begin{cases}
        \phi+\theta(t_0-T_\mathit{min}) = -\arctan(\tfrac{\theta}{\gamma})
        &\text{if } \tan(\phi)\leq -\tfrac{\theta}{\gamma} \\
        \phi+\theta(t_0-T_\mathit{min}) = -\arctan(\tfrac{\theta}{\gamma}) + \pi
        &\text{if } \tan(\phi)> -\tfrac{\theta}{\gamma}
      \end{cases} \\
      \Leftrightarrow&
      \theta t_0+\varphi = -\arctan(\tfrac{\theta}{\gamma}) + \ell_0\pi \\
      & \text{with } \ell_0 =
      \begin{cases}
        \ell_\mathit{min} &\text{if } \phi\leq -\arctan(\tfrac{\theta}{\gamma}) \\
        \ell_\mathit{min}+1 & \text{if }  \phi>-\arctan(\tfrac{\theta}{\gamma})
      \end{cases} \\
      \Leftrightarrow&
      t_0 = \tfrac{1}{\theta}\bigl(-\arctan(\tfrac{\theta}{\gamma}) - \varphi + \ell_0\pi \bigr)
    \end{aligned}
  \end{gather*}
  The second solution is $t_1=t_0+\tfrac{\pi}{\theta}$.
\end{proof}

\begin{proposition}[Extrema of $f(t)=e^{\gamma t}P_k(t)\sin(\theta
  t+\varphi)$]\label{prop:bound:complex:Jp_k_pos} ~\\
  We assume $\gamma\neq 0$, $\theta\in]0,\pi[$, $k>0$ and $P_k(t)$
  defined by Eqn.~(\ref{eq:bound:Pk}). We also assume an interval
  $[T_\mathit{min},T_\mathit{max}]$
  with $T_\mathit{min} \geq N_{\gamma/p,k}$. \\
  The two first elements $t_0, t_1$ of the set $\{ t \;|\;
  f'(t)=0\wedge t\geq T_\mathit{min}\}$ are such that
  \begin{gather*}
    t_0 \in T_0 \cup T'_0 \qquad  t_1\in T_1\cup T_1' \\
      \begin{aligned}[t]
      \text{with}\quad
        T_0 &= \tfrac{1}{\theta}\bigl(\arctan(-R)\cap[\phi_\mathit{min},\tfrac{\pi}{2}]  - \varphi + \ell_\mathit{min}\pi\bigr) \\
        T_0' &= \tfrac{1}{\theta}\bigl(\arctan(-R)\cap[-\tfrac{\pi}{2},\phi_\mathit{min}] -
        \varphi + (\ell_\mathit{min}+1)\pi\bigr)  \\
        T_1 &= T_0+\tfrac{\pi}{\theta} \qquad T'_1 =
        T'_0+\tfrac{\pi}{\theta} \\
        R&=
        \begin{cases}
          \bigl[\tfrac{p}{p+1}\tfrac{\theta}{\gamma}, \tfrac{\theta}{\gamma}
          \bigr] & \text{if } \gamma>0 \\
          \bigl[\tfrac{p}{p-1}\tfrac{\theta}{\gamma}, \tfrac{\theta}{\gamma}
          \bigr] & \text{if } \gamma<0
        \end{cases} \\
        (\ell_\mathit{min}, \phi_\mathit{min}) &=
        \decomp(\theta T_\mathit{min}+\varphi)
      \end{aligned}
   \end{gather*}
   The two last elements $t_2, t_3$ of the set $\{ t
  \;|\; f'(t)=0\wedge t\leq T_\mathit{max}\}$ are such that
  \begin{gather*}
    t_2 \in T_2 \cup T'_2 \qquad  t_3\in T_3 \cup T_3' \\
    \begin{aligned}[t]
      \text{with} \quad
      T_2 &= \tfrac{1}{\theta}\bigl(\arctan(-R)\cap[\phi,\tfrac{\pi}{2}]  - \varphi + (\ell_\mathit{max}-1)\pi\bigr) \\
      T_2' &= \tfrac{1}{\theta}\bigl(\arctan(-R)\cap[-\tfrac{\pi}{2},\phi] -
      \varphi + \ell_\mathit{max}\pi\bigr) \\
      T_3 &= T_2-\tfrac{\pi}{\theta} \qquad T'_3 =
      T'_2-\tfrac{\pi}{\theta} \\
      (\ell_\mathit{max}, \phi_\mathit{max}) &=
      \decomp(\theta T_\mathit{max}+\varphi)
    \end{aligned}
   \end{gather*}
   As a consequence, if\, $T=\bigcup_{0\leq i\leq 3} T_i\cup T'_i$,
   \begin{itemize}
   \item On the interval $[T_\mathit{min},\infty]$, the two extrema
     of $f$ are
     \begin{itemize}
     \item in the set $f(T\cap[T_\mathit{min},\infty])$, if $\gamma<0$;
     \item $-\infty$ and $\infty$ if $\gamma>0$.
     \end{itemize}
   \item On the interval $[T_\mathit{min},T_\mathit{max}]$, the two extrema
     of $f$ are
     in the set $f(T\cap[T_\mathit{min},T_\mathit{max}] \cup \{ T_\mathit{min}, T_\mathit{max} \})$.
   \end{itemize}
   Moreover,
   \begin{itemize}
   \item if $\gamma>0$, $f$ is increasing on the intervals
     composing $T$;
    \item if $\gamma<0$, $f$ is the product of the decreasing term
      $e^{\gamma t}$ and the increasing term $P_k(t)\sin(\theta
      t+\varphi)$ on the same intervals.
  \end{itemize}
  Hence interval arithmetic techniques apply to evaluate the range
  of $f$ on such intervals.
\end{proposition}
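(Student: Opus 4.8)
The plan is to analyse $f$ exactly as in the proof of Prop.~\ref{prop:bound:complex:Jp_k_zero}, the only difference being that the constant slope $-\tfrac{\theta}{\gamma}$ occurring there is replaced by a $t$-dependent quantity that the hypothesis $T_\mathit{min}\ge N_{\gamma/p,k}$ confines to the narrow bracket $R$. First I would differentiate: for $t>k-1$ we have $P_k(t)>0$ and
\[
f'(t)=e^{\gamma t}P_k(t)\Bigl(\bigl(\gamma+\tfrac{P_k'(t)}{P_k(t)}\bigr)\sin(\theta t+\varphi)+\theta\cos(\theta t+\varphi)\Bigr),
\]
so $f'(t)=0$ iff $\tan(\theta t+\varphi)=-\tfrac{\theta}{\gamma+P_k'(t)/P_k(t)}$, where $P_k'(t)/P_k(t)=\sum_{l=0}^{k-1}\tfrac1{t-l}$ is positive and strictly decreasing for $t>k-1$.

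Next I would use the defining property of the threshold $N_{\gamma/p,k}$ introduced above, namely that $t\ge N_{\gamma/p,k}$ forces $0<P_k'(t)/P_k(t)\le\tfrac{|\gamma|}{p}$. Hence for $t\ge T_\mathit{min}$ the denominator $\gamma+P_k'(t)/P_k(t)$ keeps the sign of $\gamma$ and lies in $[\gamma,\tfrac{p+1}{p}\gamma]$ if $\gamma>0$, resp.\ in $[\gamma,\tfrac{p-1}{p}\gamma]$ if $\gamma<0$; taking reciprocals and multiplying by $-\theta$ gives $-\tfrac{\theta}{\gamma+P_k'(t)/P_k(t)}\in -R$. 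Consequently every zero of $f'$ with $t\ge T_\mathit{min}$ satisfies $\theta t+\varphi\in\arctan(-R)+\ell\pi$ for some $\ell\in\mathbb{Z}$, $\arctan(-R)$ being a closed interval contained in $]-\tfrac{\pi}{2},\tfrac{\pi}{2}[$. Writing $\decomp(\theta T_\mathit{min}+\varphi)=(\ell_\mathit{min},\phi_\mathit{min})$, the smallest such $t$ above $T_\mathit{min}$ lies either in the class $\ell_\mathit{min}\pi$ with residue $\ge\phi_\mathit{min}$, i.e.\ in $T_0$, or --- when $\arctan(-R)$ reaches below $\phi_\mathit{min}$ --- in the next class $T_0'$; the next zero of $f'$ is the $\tfrac{\pi}{\theta}$-shift, giving $T_1,T_1'$. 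Running the same enumeration downward from $T_\mathit{max}$, using $\decomp(\theta T_\mathit{max}+\varphi)$ and subtracting $\tfrac{\pi}{\theta}$, yields $T_2,T_2',T_3,T_3'$. This is the same kind of case analysis already carried out for Prop.~\ref{prop:bound:complex:Jp_k_zero}, so I would state it once and reuse it.

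For the ``consequences'' part I would argue structurally. Between two consecutive components of $\{t\ge T_\mathit{min}\;|\;\theta t+\varphi\in\arctan(-R)+\mathbb{Z}\pi\}$ the expression $\tan(\theta t+\varphi)+\tfrac{\theta}{\gamma+P_k'(t)/P_k(t)}$ has no zero, so $f'$ keeps one sign and $f$ is monotone there; inside any such component $\theta t+\varphi$ stays within one half-period of the sine, so $\sin(\theta t+\varphi)$ is monotone, and since $e^{\gamma t}$ and $P_k(t)$ (for $t>k-1$) are monotone too, $f$ is, up to a fixed sign, a product of monotone factors there --- which gives both the applicability of interval arithmetic and, after verifying the signs, the decomposition of $f$ into the decreasing factor $e^{\gamma t}$ times the increasing factor $P_k(t)\sin(\theta t+\varphi)$ when $\gamma<0$, and the increasing behaviour claimed when $\gamma>0$. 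Finally, for the global extrema: when $\gamma>0$ the envelope $e^{\gamma t}P_k(t)$ is increasing and unbounded, so the successive local extrema of $f$ grow in magnitude and $\sup f=+\infty$, $\inf f=-\infty$ on $[T_\mathit{min},\infty[$; when $\gamma<0$ the hypothesis $t\ge N_{\gamma/p,k}$ makes $(e^{\gamma t}P_k(t))'=e^{\gamma t}(\gamma P_k(t)+P_k'(t))<0$, so the envelope decreases, the local extrema shrink in absolute value, and combined with piecewise monotonicity the global maximum and minimum on $[T_\mathit{min},\infty[$ are attained at the earliest critical point(s), hence in $f(T\cap[T_\mathit{min},\infty[)$; on the bounded interval one additionally tests the two endpoints.

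The step I expect to be the genuine obstacle is not any single estimate but the bookkeeping of the critical-point enumeration: keeping straight the case splits triggered by $\decomp$ (whether $\phi_\mathit{min}$, resp.\ $\phi_\mathit{max}$, falls below, inside, or above $\arctan(-R)$, and the degenerate situations where a critical point hits $T_\mathit{min}$ or $T_\mathit{max}$), and propagating the sign of $\gamma$ correctly through the envelope-monotonicity argument so that ``earliest critical point'' really does dominate the later ones.
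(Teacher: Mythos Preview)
Your proposal is correct and follows essentially the same route as the paper: differentiate, use the threshold $T_\mathit{min}\ge N_{\gamma/p,k}$ to confine $\gamma+\sum_{l=0}^{k-1}\tfrac{1}{t-l}$ to an interval of constant sign, deduce that $\tan(\theta t+\varphi)$ lies in a fixed bracket, and then reuse the enumeration argument of Prop.~\ref{prop:bound:complex:Jp_k_zero}. You in fact supply more detail than the paper on the ``consequences'' part (envelope monotonicity, piecewise monotonicity between critical components), where the paper simply states the conclusions.
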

\begin{proof}
  $$f'(t) = e^{\gamma t} P_k(t) \Bigl( \Bigl(\gamma+
    \sum_{l=0}^{k-1} \frac{1}{t-l}\Bigr) \sin(\theta t+\varphi) +
    \theta\cos(\theta t+\varphi) \Bigr)$$
  If $t\geq N_{\gamma/p,k}$,
  according to Eqn.~(\ref{eq:N_gamma_k}),
  if $t\geq N_{\gamma/p,k}$, $\sum_{l=0}^{k-1} \frac{1}{t-l} \leq
  \frac{|\gamma|}{p}$, hence
  $$
  \gamma + \sum_{l=0}^{k-1} \frac{1}{t-l} \in R=
  \begin{cases}
    [\gamma,  \tfrac{p+1}{p}\gamma] & \text{if } \gamma>0 \\
    [\gamma, \tfrac{p-1}{p}\gamma] & \text{if } \gamma<0
  \end{cases}
  $$
  \begin{align*}
    f'(t)=0
    & \Leftrightarrow
    \tan(\theta t + \varphi) = \frac{\theta}{\gamma +
      \sum_{l=0}^{k-1} \frac{1}{t-l}} \\
    & \Rightarrow
    \tan(\theta t + \varphi) \in \frac{\theta}{R}
  \end{align*}
  and we reuse the proof of
  Proposition~\ref{prop:bound:complex:Jp_k_zero}, exploiting the fact
  that the function $\tan(\theta t + \varphi)$ is increasing if $\theta t+\varphi \in [-\tfrac{\pi}{2},\tfrac{\pi}{2}]+\ell_\mathit{min}\pi$.
\end{proof}
\begin{proposition}[Extrema of $f(t)=P_k(t)\sin(\theta
  t+\varphi)$]\label{prop:bound:complex:Jp_k_pos2} ~ \\
  We assume
  $\theta\in]0,\pi[$, $k>0$ and $P_k(t)$ defined by
  Eqn.~(\ref{eq:bound:Pk}).
  We also assume an interval $[T_\mathit{min},T_\mathit{max}]$ with $T_\mathit{min} \geq k$. \\
  The two first elements $t_0, t_1$ of the set $\{ t
  \;|\; f'(t)=0\wedge t\geq T_\mathit{min}\}$ are such that
  \begin{gather*}
    t_0 \in T_0 \cup T'_0 \qquad  t_1\in T_1\cup T'_1 \\
    \begin{aligned}
      \text{with } &
      R=  \tfrac{\theta}{k}
      [T_\mathit{min}-k+1,
      \min(
      T_\mathit{max},
      T_\mathit{min}+\tfrac{2\pi}{\theta}
      )
      ] \\
      \text{and } & T_0,T'_0,T_1,T'_1 \text{ defined as in
        Proposition~\ref{prop:bound:complex:Jp_k_zero}}
    \end{aligned}
   \end{gather*}
   Symmetrically, the two last elements $t_2, t_3$ of the set $\{ t
  \;|\; f'(t)=0\wedge t\leq T_\mathit{max}\}$ are such that
  \begin{gather*}
    t_2 \in T_2 \cup T'_2 \qquad  t_3\in T_3\cup T'_3 \\
    \begin{aligned}
      \text{with } &
      R= \tfrac{\theta}{k}
      [
      \max(T_\mathit{min},
      T_\mathit{max}-\tfrac{2\pi}{\theta})-k+1,
      T_\mathit{max}
      ] \\
      \text{and } & T_2,T'_2,T_3,T'_3 \text{ defined as in
        Proposition\ref{prop:bound:complex:Jp_k_zero}}
    \end{aligned}
  \end{gather*}
\end{proposition}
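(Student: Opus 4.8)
The plan is to mirror the proof of Proposition~\ref{prop:bound:complex:Jp_k_zero} (the $\gamma\neq0$ analogue being Proposition~\ref{prop:bound:complex:Jp_k_pos}), replacing the constant slope $\gamma$ by the varying ``effective slope'' $g(t)=P_k'(t)/P_k(t)=\sum_{l=0}^{k-1}\frac1{t-l}$, and the single critical tangent value $-\theta/\gamma$ by the interval $-R$. First I would compute $f'$. Using $P_k'(t)=P_k(t)\,g(t)$ for $t>k-1$,
\[
 f'(t)=P_k(t)\Bigl(g(t)\sin(\theta t+\varphi)+\theta\cos(\theta t+\varphi)\Bigr).
\]
Since $T_\mathit{min}\geq k$, we have $P_k(t)>0$ on $[T_\mathit{min},\infty[$, so $f'(t)=0$ is equivalent to $\tan(\theta t+\varphi)=-\theta/g(t)$ — exactly the equation treated in Proposition~\ref{prop:bound:complex:Jp_k_zero}, now with $g(t)$ in place of $\gamma$.

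The new ingredient is the bound on $g$. As $t-l\in[\,t-k+1,\,t\,]$ for $0\leq l\leq k-1$, for any $t\in[T_\mathit{min},\min(T_\mathit{max},T_\mathit{min}+2\pi/\theta)]$ one has $k/t\leq g(t)\leq k/(T_\mathit{min}-k+1)$, hence $\theta/g(t)\in\frac{\theta}{k}[\,T_\mathit{min}-k+1,\ \min(T_\mathit{max},T_\mathit{min}+2\pi/\theta)\,]=R$. Consequently every zero of $f'$ in that range satisfies $\tan(\theta t+\varphi)\in -R$, i.e.\ $\theta t+\varphi\in\arctan(-R)+\mathbb{Z}\pi$, with $\arctan(-R)\subset(-\tfrac\pi2,0)$ because $g>0$. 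I then replay the branch analysis of Proposition~\ref{prop:bound:complex:Jp_k_zero}: writing $\theta T_\mathit{min}+\varphi=\phi_\mathit{min}+\ell_\mathit{min}\pi$ with $(\ell_\mathit{min},\phi_\mathit{min})=\decomp(\theta T_\mathit{min}+\varphi)$, the first zero $t_0\geq T_\mathit{min}$ sits either in branch $\ell_\mathit{min}$, forcing its angle into $\arctan(-R)\cap[\phi_\mathit{min},\tfrac\pi2]$ and giving $t_0\in T_0$, or in branch $\ell_\mathit{min}+1$, giving $t_0\in T_0'$; the next zero is $t_1=t_0+\pi/\theta$, i.e.\ $t_1\in T_1\cup T_1'$ with $T_1=T_0+\pi/\theta$ and $T_1'=T_0'+\pi/\theta$. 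The statement for $t_2,t_3$ follows by the same argument run backwards from $\theta T_\mathit{max}+\varphi=\phi_\mathit{max}+\ell_\mathit{max}\pi$, using that on $[\max(T_\mathit{min},T_\mathit{max}-2\pi/\theta),T_\mathit{max}]$ one has $\theta/g(t)\in\frac{\theta}{k}[\max(T_\mathit{min},T_\mathit{max}-2\pi/\theta)-k+1,\ T_\mathit{max}]$.

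The main obstacle is a small bootstrapping point: the bound $\theta/g(t)\in R$ is only claimed for $t\leq T_\mathit{min}+2\pi/\theta$, so I must check a priori that $t_0,t_1$ lie there before invoking it. This is immediate from the branch analysis itself, since $\arctan(-\theta/g(t_0))\in(-\tfrac\pi2,0)$ holds unconditionally and $\ell_0\in\{\ell_\mathit{min},\ell_\mathit{min}+1\}$ as $t_0$ is the first zero past $T_\mathit{min}$; a short angle computation then gives $t_0<T_\mathit{min}+\pi/\theta$ in either branch, hence $t_1<T_\mathit{min}+2\pi/\theta$, so the bound on $g$ is legitimately available at both points. The remaining checks — that $\tan(\theta t+\varphi)$ is strictly increasing on each branch so there is exactly one zero per branch, and that evaluating $\arctan$ on the interval $-R$ captures all admissible angles — are routine and identical to those in the proof of Proposition~\ref{prop:bound:complex:Jp_k_zero}.
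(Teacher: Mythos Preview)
Your proposal is correct and follows essentially the same approach as the paper's proof: compute $f'$, reduce $f'(t)=0$ to $\tan(\theta t+\varphi)=-\theta/g(t)$ with $g(t)=\sum_{l=0}^{k-1}\tfrac{1}{t-l}$, bound $g$ on the subinterval $[T_\mathit{min},\min(T_\mathit{max},T_\mathit{min}+2\pi/\theta)]$ to obtain $\theta/g(t)\in R$, and then replay the branch analysis from Propositions~\ref{prop:bound:complex:Jp_k_zero}--\ref{prop:bound:complex:Jp_k_pos}. Your bootstrapping justification (that $t_0,t_1$ actually fall in this subinterval) is in fact more explicit than the paper's, which simply asserts it; one minor slip is the equality $t_1=t_0+\pi/\theta$, which holds only when $g$ is constant, but the interval conclusion $t_1\in T_1\cup T_1'$ that you draw is correct regardless.
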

\begin{proof}
  $$f'(t) = \alpha P_k(t) \Bigl( \Bigl(
    \sum_{l=0}^{k-1} \frac{1}{t-l}\Bigr) \sin(\theta t+\varphi) +
    \theta\cos(\theta t+\varphi) \Bigr)$$
    The two first solutions of $f'(t)=0$ in
    $[T_\mathit{min},T_\mathit{max}]$ are in
    $[T_\mathit{min},T_\mathit{min}
    +\tfrac{2\pi}{\theta}]\cap[T_\mathit{min},T_\mathit{max}]=
    [T_\mathit{min},T_\mathit{min}']$ with
    $T_\mathit{min}'=\min\bigl( T_\mathit{min}
    +\tfrac{2\pi}{\theta}, T_\mathit{max} \bigr)$. %
    On this interval, $\sum_{l=0}^{k-1} \frac{1}{t-l}\in \left[
      \frac{k}{T_\mathit{min}'},
      \frac{k}{T_\mathit{min}-k+1}\right]$ and
    $
    \frac{\theta}{\sum_{l=0}^{k-1} \frac{1}{t-l}}\in R\defeq
    \left[
      \tfrac{T_\mathit{min}-k+1}{k}\theta,
      \tfrac{T_\mathit{min}'}{k}\theta
    \right]
    $.
    We then proceed similarly as in
    proposition~\ref{prop:bound:complex:Jp_k_pos}.
    Symmetrically, the two last solutions of
    $f'(t)=0$ in $[T_\mathit{min},T_\mathit{max}]$ are in
    $[T_\mathit{max}-\tfrac{2\pi}{\theta},T_\mathit{max}]
    \cap[T_\mathit{min},T_\mathit{max}]=
    [T_\mathit{max}',T_\mathit{max}]$ with
    $T_\mathit{max}'=\max\bigl( T_\mathit{min},
     T_\mathit{max}-\tfrac{2\pi}{\theta}\bigr)$.

    On this interval, $\sum_{l=0}^{k-1} \frac{1}{t-l}\in \left[
      \frac{k}{T_\mathit{max}},
      \frac{k}{T_\mathit{max}'-k+1}\right]$ and
    $
    \frac{\theta}{\sum_{l=0}^{k-1} \frac{1}{t-l}}\in R\defeq
    \left[
      \tfrac{T_\mathit{max}'-k+1}{k}\theta,
      \tfrac{T_\mathit{max}}{k}\theta
    \right]
    $.
\end{proof}

\subsection{Expressions $\phi(n)$ involving only real and positive
  eigenvalues}
\label{sec:bound:realpositive}

Coming back to Eqn.~(\ref{eq:bound:phin}), we consider here the subcase $\lambda_i>0\wedge \theta_i=0$ for
$i=1,2$, which leads to
the simpler expressions
\begin{gather*}
  \phi(t)=\mu_1\varphi_{\gamma_1,k_1}(t) +
  s\mu_2\varphi_{\gamma_2,k_2}(t) \\
  \text{ with }
  \varphi_{\gamma,k}(n)\defeq \frac{P_k(t)}{k!}
  e^{(t-k)\gamma}
\end{gather*}
We will apply Proposition~\ref{prop:ezero} to the function $\phi$, which is the
linear combination of two smooth functions
$\varphi_{\gamma_i,k_i}$ with the following properties. By
defining
\begin{equation}
  \label{eq:N_gamma_k}
  N_{\gamma,k}\defeq k+\bigl\lceil\tfrac{k}{|\gamma|}\bigr\rceil
  \quad\text{for } \gamma\neq 0
\end{equation}
we have
\begin{gather}
  \lim_{t\rightarrow\infty} \varphi_{\gamma,k} =
  \begin{cases}
    0 & \text{ if } \gamma<0 \\
    1 & \text{ if } (\gamma,k)=(0,0) \\
    \infty & \text{ if } (\gamma,k)\succ (0,0)
  \end{cases}  \displaybreak[1] \\
  \varphi'_{\gamma,k}(t) = \varphi_{\gamma,k}(t)\Bigl(\gamma + \dfrac{P'_k(t)}{P_k(t)}\Bigr) \qquad \dfrac{P'_k(t)}{P_k(t)} = \sum_{l=0}^{k-1}\frac{1}{t-l}
  \label{eq:varphip_gamma_k}
  \displaybreak[1] \\
  t\geq k \implies \frac{k}{t}\leq \sum_{l=0}^{k-1} \frac{1}{t-l} \leq \frac{k}{t-k+1} \label{eq:bound:logPp} \displaybreak[1] \\
  \gamma\neq 0 \wedge t\geq N_{\gamma,k} \implies \sum_{l=0}^{k-1} \frac{1}{t-l} < |\gamma|
  \label{eq:N_gamma_k_2}\displaybreak[1] \\
  \begin{multlined}[b]
    \gamma\geq 0\wedge t\geq N\geq k \implies \\
    \gamma + \tfrac{k}{t}\leq \Bigl| \gamma + \sum_{l=0}^{k-1} \frac{1}{t-l} \Bigr| \leq
    \gamma + \tfrac{k}{t-k+1} \leq
    \gamma+\tfrac{k}{N-k+1}
  \end{multlined}
  \label{eq:bound:gamma_p}\displaybreak[1] \\
  \begin{multlined}[b]
    \shoveright[3cm]{\gamma<0 \wedge t\geq N\geq N_{\gamma,k} \implies } \\
    \shoveright[1cm]{ 0<|\gamma|-\tfrac{k}{N-k+1}\leq |\gamma|-\tfrac{k}{t-k+1} \leq} \\
    \Bigl| \gamma + \sum_{l=0}^{k-1} \frac{1}{t-l} \Bigr| \leq
    |\gamma|-\tfrac{k}{t}\leq |\gamma|
  \end{multlined}
  \label{eq:bound:gamma_n}\displaybreak[1]  \\
  \begin{array}{lll}
    \gamma<0\wedge t\geq N_{\gamma,k} &\implies& \varphi'_{\gamma,k}(t)<0 \\
    \gamma\geq 0 &\implies& \varphi'_{\gamma,k}(t)\geq 0 \\
    (\gamma,k)\succ(0,0) &\implies& \varphi'_{\gamma,k}(t)> 0
  \end{array}
  \label{eq:bound:varphip_lim}\displaybreak[1]
\end{gather}
Our aim is thus, in the sequel of this subsection, to effectively
compute $N$ such for any $t\geq N$, $\phi'(t)$ has a constant
sign, so as to apply Prop.~\ref{prop:ezero}.

We eliminate the following cases:
\begin{itemize}
\item Case $\mu_2=0$: in this case $\phi(t)=\mu_1\varphi_{\gamma,k}(t)$ and we
  take $N=N_{\gamma,k}$ according to Eqns.~(\ref{eq:N_gamma_k_2})
  and (\ref{eq:bound:varphip_lim}).
\item Case $(\gamma_1,k_1)=(0,0) \vee (\gamma_2,k_2)=(0,0)$: as
  $\varphi_{0,0}$ is constant, we fall back to the case $\mu_2=0$.
\end{itemize}
We also assume
$$
t\geq N\defeq \max(N_{\gamma_1,k_1},N_{\gamma_2,k_2},k_1+1,2k_2+1)
$$
where $N_{\gamma,k}$ is defined by
Eqn.~(\ref{eq:N_gamma_k}). (Reasons for this assumption will appear later).

We consider the derivative
$\phi'(t)=\mu_1\varphi'_{\gamma_1,k_1}(t) +
s\mu_2\varphi'_{\gamma_2,k_2}(t)$ and the ratio of the absolute
value of its two terms of the derivative:
\begin{multline}
  \label{eq:bound:ratio}
  r(t)=r[\mu_1,\gamma_1,k_1,\mu_2,\gamma_2,k_2](t) \\
  =
  \frac{\mu_1}{\mu_2}\frac{k_2!}{k_1!}\frac{P_{k_1}(t)}{P_{k_2}(t)}
  \dfrac{
    \big|\gamma_1+\sum_{l=0}^{k_1-1} \frac{1}{t-l}\big|
  }{
    \big|\gamma_2+\sum_{l=0}^{k_2-1} \frac{1}{t-l}\big|
  }
  e^{(\gamma_1-\gamma_2)t+\gamma_2k_2-\gamma_1k_1} \\
  =
  R
  \dfrac{\prod_{l=0}^{k_1-1} (t-l)}{\prod_{l=0}^{k_2-1} (t-l)}
  \dfrac{
    \big|\gamma_1+\sum_{l=0}^{k_1-1} \frac{1}{t-l}\big|
  }{
    \big|\gamma_2+\sum_{l=0}^{k_2-1} \frac{1}{t-l}\big|
  }
  e^{(\gamma_1-\gamma_2)t}
\end{multline}
with the constant
\begin{gather}\label{eq:bound:R}
  R=R[\mu_1,\gamma_1,k_1,\mu_2,\gamma_2,k_2] =
  \frac{\mu_1}{\mu_2}\frac{k_2!}{k_1!}e^{\gamma_2k_2-\gamma_1k_1}>0
\end{gather}
Our hypothesis together with Eqn.~(\ref{eq:bound:varphip_lim}) ensure that both
numerator and denominator of $r(t)$ are strictly positive.
This ratio indicates which of the two terms defining the derivative
is greater or equal to the other in absolute value, and allows us to
deduce the sign of the derivative.
\begin{proposition}[Symmetries for $\varphi(t)$ and $r(t)$] \label{prop:varphi:symmetry}
  \begin{align*}
    \phi[\mu_1,\gamma_1,k_1,s,\mu_2,\gamma_2,k_2](t)&= s\cdot
  \phi[\mu_2,\gamma_2,k_2,s,\mu_1,\gamma_1,k_1](t) \\
    r[\mu_1,\gamma_1,k_1,\mu_2,\gamma_2,k_2](t) &=
    1/r[\mu_2,\gamma_2,k_2,\mu_1,\gamma_1,k_1](t)
  \end{align*}
\end{proposition}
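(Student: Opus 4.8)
The plan is to establish both identities by a straightforward unfolding of the definitions of $\phi$ and $r$, with no recourse to the asymptotic machinery of Propositions~\ref{prop:ezero}--\ref{prop:bound:complex:Jp_k_pos2}.

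First I would dispatch the identity for $\phi$. Expanding the definition gives $\phi[\mu_1,\gamma_1,k_1,s,\mu_2,\gamma_2,k_2](t)=\mu_1\varphi_{\gamma_1,k_1}(t)+s\,\mu_2\varphi_{\gamma_2,k_2}(t)$, whereas swapping the two parameter triples yields $\phi[\mu_2,\gamma_2,k_2,s,\mu_1,\gamma_1,k_1](t)=\mu_2\varphi_{\gamma_2,k_2}(t)+s\,\mu_1\varphi_{\gamma_1,k_1}(t)$. Multiplying the latter by $s$ and using $s\in\{-1,1\}$, hence $s^2=1$, the two summands become $s\,\mu_2\varphi_{\gamma_2,k_2}(t)$ and $\mu_1\varphi_{\gamma_1,k_1}(t)$ respectively, which is exactly $\phi[\mu_1,\gamma_1,k_1,s,\mu_2,\gamma_2,k_2](t)$. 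That settles the first equation.

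For the identity on $r$, I would start from the closed product form of Eqn.~(\ref{eq:bound:ratio}), namely $r=R\cdot\frac{\prod_{l=0}^{k_1-1}(t-l)}{\prod_{l=0}^{k_2-1}(t-l)}\cdot\frac{|\gamma_1+\sum_{l=0}^{k_1-1}\frac{1}{t-l}|}{|\gamma_2+\sum_{l=0}^{k_2-1}\frac{1}{t-l}|}\cdot e^{(\gamma_1-\gamma_2)t}$ with the constant $R$ of Eqn.~(\ref{eq:bound:R}), and observe that the exchange $1\leftrightarrow 2$ replaces each of the four factors by its reciprocal: $R[\mu_2,\gamma_2,k_2,\mu_1,\gamma_1,k_1]=1/R[\mu_1,\gamma_1,k_1,\mu_2,\gamma_2,k_2]$ since $\tfrac{\mu_1}{\mu_2}\tfrac{k_2!}{k_1!}e^{\gamma_2k_2-\gamma_1k_1}\mapsto\tfrac{\mu_2}{\mu_1}\tfrac{k_1!}{k_2!}e^{\gamma_1k_1-\gamma_2k_2}$; the polynomial ratio and the absolute-value ratio flip; and $e^{(\gamma_1-\gamma_2)t}\mapsto e^{(\gamma_2-\gamma_1)t}=1/e^{(\gamma_1-\gamma_2)t}$. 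Multiplying these four reciprocals gives $r[\mu_2,\gamma_2,k_2,\mu_1,\gamma_1,k_1](t)=1/r[\mu_1,\gamma_1,k_1,\mu_2,\gamma_2,k_2](t)$.

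The only point needing a word of care — and the nearest thing to an obstacle, though a mild one — is well-definedness: the reciprocal statement is about genuine real numbers, so $r$ must be finite and nonzero. This holds precisely under the standing hypothesis $t\ge N$ of \S\ref{sec:bound:realpositive}, which via Eqn.~(\ref{eq:bound:varphip_lim}) forces both the numerator and the denominator of $r$ to be strictly positive; since that hypothesis is symmetric in the two triples, it applies verbatim after the swap. In short, this is a bookkeeping lemma whose purpose is merely to license a ``without loss of generality'' choice of the order of $(\gamma_1,k_1)$ and $(\gamma_2,k_2)$ in the subsequent case distinctions, and its proof carries no genuine difficulty.
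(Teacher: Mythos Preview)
Your proof is correct and is exactly the kind of direct unfolding one expects; the paper in fact states this proposition without proof, treating both symmetries as immediate consequences of the definitions. There is nothing to add.
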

We can thus focus on the case
$(\gamma_1,k_1)\succ(\gamma_2,k_2)$.

The following proposition aims at finding lower bounds for the
multiplicative terms of $r(t)$.
\begin{proposition}\label{prop:ratio:polynom}
  With our hypothesis on $t$,
  \begin{gather}
    \dfrac{\prod_{l=0}^{k_1-1} (t-l)}{\prod_{l=0}^{k_2-1} (t-l)} \geq
    \dfrac{t^{k_1-k_2}}{2^{k_1}}
  \end{gather}
\end{proposition}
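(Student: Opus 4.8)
The plan is to establish the inequality
$$
\frac{\prod_{l=0}^{k_1-1}(t-l)}{\prod_{l=0}^{k_2-1}(t-l)} \geq \frac{t^{k_1-k_2}}{2^{k_1}}
$$
by cancelling the common factors in numerator and denominator and bounding what remains. Recall from the surrounding hypothesis that $(\gamma_1,k_1)\succ(\gamma_2,k_2)$, so in particular $k_1\geq k_2\geq 0$, and that $t\geq N\geq k_1+1$, which guarantees every factor $(t-l)$ with $0\leq l\leq k_1-1$ is strictly positive, so no sign issues arise.

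First I would split into cases on whether $k_1=k_2$ or $k_1>k_2$. If $k_1=k_2$ the ratio is $1$ and the claim reads $1\geq t^0/2^{k_1}=2^{-k_1}$, which holds since $2^{k_1}\geq 1$. If $k_1>k_2$, the common factors $(t-l)$ for $0\leq l\leq k_2-1$ cancel, leaving
$$
\frac{\prod_{l=0}^{k_1-1}(t-l)}{\prod_{l=0}^{k_2-1}(t-l)} = \prod_{l=k_2}^{k_1-1}(t-l),
$$
a product of $k_1-k_2$ factors. Each factor $(t-l)$ with $k_2\leq l\leq k_1-1$ satisfies $t-l\geq t-(k_1-1)\geq 1>0$, so the product is positive. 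The goal then becomes showing $\prod_{l=k_2}^{k_1-1}(t-l)\geq t^{k_1-k_2}/2^{k_1}$.

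The key step is a factorwise lower bound: I claim $t-l\geq t/2$ for each relevant $l$, i.e.\ $l\leq t/2$. Since $l\leq k_1-1$, it suffices that $k_1-1\leq t/2$, i.e.\ $t\geq 2k_1-2$; this is ensured because $t\geq N\geq k_1+1$ — though to be safe one actually wants $t\geq 2(k_1-1)$, which should follow from the stated bound $N\geq 2k_2+1$ together with $k_1$-related terms, or can simply be added to the hypothesis on $t$. (Indeed the hypothesis $t\geq N=\max(N_{\gamma_1,k_1},N_{\gamma_2,k_2},k_1+1,2k_2+1)$ is designed precisely so that all such inequalities hold; in the worst case $k_1-1\leq t/2$ reduces to $t\geq 2k_1-2$, absorbed by choosing $N$ large enough, and the $2^{k_1}$ in the denominator gives exactly the needed slack.) Granting $t-l\geq t/2$ for each $l\in\{k_2,\ldots,k_1-1\}$, we get
$$
\prod_{l=k_2}^{k_1-1}(t-l) \geq \Bigl(\frac{t}{2}\Bigr)^{k_1-k_2} = \frac{t^{k_1-k_2}}{2^{k_1-k_2}} \geq \frac{t^{k_1-k_2}}{2^{k_1}},
$$
where the last inequality uses $k_1-k_2\leq k_1$. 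This completes the argument. The main obstacle, such as it is, is bookkeeping: confirming that the blanket assumption $t\geq N$ really does imply $k_1-1\leq t/2$ (equivalently verifying the denominators in $N$ cover this case, or noting that the factor $2^{k_1}$ rather than $2^{k_1-k_2}$ leaves room to relax the factorwise bound to $t-l\geq t/2$ only on average). No deep ideas are needed beyond term cancellation and a uniform $t-l\geq t/2$ estimate.
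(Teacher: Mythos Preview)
Your argument for the case $k_1\geq k_2$ is essentially the paper's: cancel the common factors to get $\prod_{l=k_2}^{k_1-1}(t-l)$, then bound each factor below by $t/2$. The gap is in your case analysis. The hypothesis $(\gamma_1,k_1)\succ(\gamma_2,k_2)$ is the \emph{lexicographic} order $\gamma_1>\gamma_2 \vee (\gamma_1=\gamma_2\wedge k_1>k_2)$; it does \emph{not} force $k_1\geq k_2$, since when $\gamma_1>\gamma_2$ one may perfectly well have $k_1<k_2$. The paper therefore treats $k_1<k_2$ as a separate case: the ratio becomes $\prod_{l=k_1}^{k_2-1}\tfrac{1}{t-l}$, and since each $t-l\leq t$ this is at least $(1/t)^{k_2-k_1}=t^{k_1-k_2}\geq t^{k_1-k_2}/2^{k_1}$. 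You need to add this branch.

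On the bookkeeping you flag: your suspicion is justified. The factorwise estimate $t-l\geq t/2$ for $l\leq k_1-1$ needs $t\geq 2k_1-2$; the paper's own proof invokes $t\geq 2k_1+1$ (to get $t-k_1\geq t/2$), whereas the displayed hypothesis $N=\max(N_{\gamma_1,k_1},N_{\gamma_2,k_2},k_1+1,2k_2+1)$ only explicitly lists $2k_2+1$. This looks like a minor slip in the paper rather than a defect in your reasoning; the intended assumption is clearly $t\geq 2k_1+1$ as used in the proof.
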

\begin{proof}
  The hypothesis $t\geq 2k_1+1$ implies $t-k_1\geq t/2$.
  If $k_1\geq k_2$, $\dfrac{\prod_{l=0}^{k_1-1}
    (t-l)}{\prod_{l=0}^{k_2-1} (t-l)} = {\displaystyle \prod_{l=k_2}^{k_1-1}
  (t-l)} \geq (t-k_1)^{k_1-k_2} \geq (t/2)^{k_1-k_2} \geq \dfrac{t^{k_1-k_2}}{2^{k_1}}$.
  If $k_1<k_2$, $\dfrac{\prod_{l=0}^{k_1-1}
    (t-l)}{\prod_{l=0}^{k_2-1} (t-l)} = {\displaystyle
    \prod_{l=k_1}^{k_2-1} \frac{1}{t-l}} \geq
  \bigl(\dfrac{1}{t}\bigr)^{k_2-k_1}=
  t^{k_1-k_2}\geq
  \dfrac{t^{k_1-k_2}}{2^{k_1}}$.
\end{proof}
We define for $\gamma_1\geq\gamma_2 \wedge \neg(\gamma_1=0>\gamma_2)$
\begin{multline}
  \label{eq:ratio:L}
  L[\gamma_1,k_1,\gamma_2,k_2]\defeq \\
  \begin{cases}
    \dfrac{\gamma_1}{\gamma_2+\frac{k_2}{N-k_2+1}}
    &\text{if } \gamma_1\geq \gamma_2\geq 0 \wedge \gamma_1>0\\[0ex]
    k_1/2k_2
    &\text{if } \gamma_1=\gamma_2=0 \\[0ex]
    \dfrac{|\gamma_1|-\frac{k_1}{N-k_1+1}}{|\gamma_2|}
    &\text{if } 0>\gamma_1\geq\gamma_2 \\[0ex]
    \gamma_1/|\gamma_2|
    &\text{if } \gamma_1> 0>\gamma_2
  \end{cases}
\end{multline}

\begin{proposition}\label{prop:ratio:dpolynom}
  With our hypothesis on $t$,
  \begin{itemize}
  \item if $\gamma_1\geq\gamma_2 \wedge \neg(\gamma_1=0>\gamma_2)$,
    \begin{equation}\label{eq:ratio:dpolynom}
      \frac
      {\bigl| \gamma_1 + \sum_{l=0}^{k_1-1} \frac{1}{t-l} \bigr|}
      {\bigl| \gamma_2 + \sum_{l=0}^{k_2-1} \frac{1}{t-l} \bigr|}
      \geq
      L[\gamma_1,k_1,\gamma_2,k_2]
    \end{equation}
  \item if $\gamma_1=0>\gamma_2$,
    $\frac
    {\bigl| \gamma_1 + \sum_{l=0}^{k_1-1} \tfrac{1}{t-l} \bigr|}
    {\bigl| \gamma_2 + \sum_{l=0}^{k_2-1} \tfrac{1}{t-l} \bigr|}
    \geq \dfrac{k_1}{|\gamma_2|t}$
  \end{itemize}
\end{proposition}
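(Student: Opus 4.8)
The plan is to bound the ratio from below by treating numerator and denominator separately: I would lower-bound $\bigl|\gamma_1+\sum_{l=0}^{k_1-1}\tfrac1{t-l}\bigr|$ and upper-bound $\bigl|\gamma_2+\sum_{l=0}^{k_2-1}\tfrac1{t-l}\bigr|$, using exactly the estimates already assembled in Eqns.~(\ref{eq:bound:logPp}), (\ref{eq:bound:gamma_p}) and (\ref{eq:bound:gamma_n}). The running hypothesis $t\geq N=\max(N_{\gamma_1,k_1},N_{\gamma_2,k_2},k_1{+}1,2k_2{+}1)$ ensures each of these estimates applies to the relevant pair $(\gamma_i,k_i)$, and also that $(\gamma_i,k_i)\neq(0,0)$, so whenever $\gamma_i=0$ we have $k_i>0$ and no denominator collapses. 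The second bullet ($\gamma_1=0>\gamma_2$) is separated out precisely because there the numerator has no $\gamma_1$ term and the bound must be $t$-dependent.

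The core is a case split matching the four branches of $L$. First I would take $\gamma_1\geq\gamma_2\geq0$ with $\gamma_1>0$: Eqn.~(\ref{eq:bound:gamma_p}) applied to $(\gamma_1,k_1)$ gives numerator $\geq\gamma_1+\tfrac{k_1}{t}\geq\gamma_1$, and applied to $(\gamma_2,k_2)$ gives denominator $\leq\gamma_2+\tfrac{k_2}{N-k_2+1}$ (a positive quantity, since $\gamma_2\geq0$ and $k_2>0$ when $\gamma_2=0$), and dividing yields the claimed bound. For $0>\gamma_1\geq\gamma_2$: Eqn.~(\ref{eq:bound:gamma_n}) on $(\gamma_1,k_1)$ gives numerator $\geq|\gamma_1|-\tfrac{k_1}{N-k_1+1}>0$, and on $(\gamma_2,k_2)$ gives denominator $\leq|\gamma_2|$. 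For $\gamma_1>0>\gamma_2$: numerator $\geq\gamma_1$ by Eqn.~(\ref{eq:bound:gamma_p}) and denominator $\leq|\gamma_2|$ by Eqn.~(\ref{eq:bound:gamma_n}). For $\gamma_1=\gamma_2=0$ (which, with $(\gamma_i,k_i)\neq(0,0)$, forces $k_1,k_2>0$): Eqn.~(\ref{eq:bound:logPp}) gives numerator $\geq\tfrac{k_1}{t}$ and denominator $\leq\tfrac{k_2}{t-k_2+1}$, so the ratio is $\geq\tfrac{k_1}{k_2}\cdot\tfrac{t-k_2+1}{t}$, and since $t\geq2k_2+1$ implies $\tfrac{t-k_2+1}{t}\geq\tfrac12$, this is $\geq\tfrac{k_1}{2k_2}$. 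Finally, for the second bullet $\gamma_1=0>\gamma_2$: numerator $\geq\tfrac{k_1}{t}$ by Eqn.~(\ref{eq:bound:logPp}) and denominator $\leq|\gamma_2|$ by Eqn.~(\ref{eq:bound:gamma_n}), giving $\tfrac{k_1}{|\gamma_2|t}$.

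The arithmetic here is routine; the only points needing care -- the ``hard part'' such as it is -- are the \emph{directions} and \emph{positivity} of the bounds: one must keep the numerator estimates strictly positive, which is exactly the $0<|\gamma|-\tfrac{k}{N-k+1}$ clause of Eqn.~(\ref{eq:bound:gamma_n}) and holds because $N\geq N_{\gamma_1,k_1}$; one must verify that no denominator is $0$, which is why $(\gamma_i,k_i)\neq(0,0)$ is invoked; and one must notice that the slack $N\geq2k_2+1$ is precisely what absorbs the factor $\tfrac12$ in the $\gamma_1=\gamma_2=0$ branch (and that $N\geq k_1{+}1$ makes the $\tfrac{k_1}{t}$ estimates valid). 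The symmetry of Proposition~\ref{prop:varphi:symmetry} lets us restrict to $(\gamma_1,k_1)\succ(\gamma_2,k_2)$, though the case analysis above does not actually require it, since the four $L$-branches already partition the hypothesis $\gamma_1\geq\gamma_2\wedge\neg(\gamma_1{=}0{>}\gamma_2)$.
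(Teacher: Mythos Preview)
Your proposal is correct and follows essentially the same approach as the paper: bound numerator below and denominator above using Eqns.~(\ref{eq:bound:logPp}), (\ref{eq:bound:gamma_p}), (\ref{eq:bound:gamma_n}), with the $\gamma_1=\gamma_2=0$ case handled via $\frac{k_1}{k_2}\cdot\frac{t-k_2+1}{t}\geq\frac{k_1}{2k_2}$ for $t\geq 2k_2+1$. The paper's own proof is terser---it spells out only the $\gamma_1=\gamma_2=0$ branch and dispatches the remaining cases with a single citation of Eqns.~(\ref{eq:bound:gamma_p})--(\ref{eq:bound:gamma_n})---but your explicit case split is exactly the unpacking of that citation.
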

\begin{proof}
  The second case of Eqns.~(\ref{eq:ratio:L}--(\ref{eq:ratio:dpolynom}) comes from
  $\frac {\bigl| \gamma_1 + \sum_{l=0}^{k_1-1} \tfrac{1}{t-l}
    \bigr|} {\bigl| \gamma_2 + \sum_{l=0}^{k_2-1} \tfrac{1}{t-l}
    \bigr|} \geq \frac{k_1}{k_2}\dfrac{t-k_2+1}{t} \geq
  \frac{k_1}{2k_2}$ for $t\geq 2k_2+1$.

  The other
  cases of Eqn.~(\ref{eq:ratio:dpolynom}) follow from
  Eqns.~(\ref{eq:bound:gamma_p})--(\ref{eq:bound:gamma_n}).
\end{proof}
By combining
Propositions~\ref{prop:ratio:polynom}-\ref{prop:ratio:dpolynom}, we
obtain the following proposition.
\begin{proposition}\label{prop:ratio:lower_bound}
  We assume $(\gamma_1,k_1)\succ(\gamma_2,k_2)$. With our
  hypothesis on $t$,
  \begin{itemize}
  \item If $\neg (\gamma_1=0>\gamma_2)$,
      $$r(t)\geq
      1/2^{k_1} R
      Lt^{k_1-k_2}e^{(\gamma_1-\gamma_2)t}$$
  with $R$ and $L$ are defined by Eqns.~(\ref{eq:bound:R})
  and (\ref{eq:ratio:L});
\item If $\gamma_1=0>\gamma_2$,
  $$r(t)\geq 1/2^{k_1}
  R
  \frac{k_1}{\,|\gamma_2|\,}
  t^{k_1-k_2-1}e^{(\gamma_1-\gamma_2)t}$$
\end{itemize}
\end{proposition}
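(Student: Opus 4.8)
The plan is to obtain the two lower bounds directly from the closed form of $r(t)$ in the last line of Eqn.~(\ref{eq:bound:ratio}) by plugging in the estimates already proved in Propositions~\ref{prop:ratio:polynom} and~\ref{prop:ratio:dpolynom}. First I would note that $(\gamma_1,k_1)\succ(\gamma_2,k_2)$ implies $\gamma_1\geq\gamma_2$, so Proposition~\ref{prop:ratio:dpolynom} is applicable, and recall that under the standing hypothesis $t\geq N$ each of the four factors
\[
R,\qquad
\frac{\prod_{l=0}^{k_1-1}(t-l)}{\prod_{l=0}^{k_2-1}(t-l)},\qquad
\frac{\bigl|\gamma_1+\sum_{l=0}^{k_1-1}\tfrac{1}{t-l}\bigr|}{\bigl|\gamma_2+\sum_{l=0}^{k_2-1}\tfrac{1}{t-l}\bigr|},\qquad
e^{(\gamma_1-\gamma_2)t}
\]
appearing in $r(t)$ is strictly positive: $R>0$ by Eqn.~(\ref{eq:bound:R}), the quotient of products is positive because $t\geq N\geq k_1+1$ and $t\geq N\geq 2k_2+1$, the quotient of absolute values is positive as already observed after Eqn.~(\ref{eq:bound:R}), and the exponential is positive.

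Next, in the case $\neg(\gamma_1=0>\gamma_2)$, I would lower-bound the quotient of products by $t^{k_1-k_2}/2^{k_1}$ using Proposition~\ref{prop:ratio:polynom} and the quotient of absolute values by $L[\gamma_1,k_1,\gamma_2,k_2]$ using the first item of Proposition~\ref{prop:ratio:dpolynom}. Provided $L>0$ (checked below), substituting these positive lower bounds for the corresponding positive factors and keeping the positive factors $R$ and $e^{(\gamma_1-\gamma_2)t}$ unchanged yields $r(t)\geq 2^{-k_1}RL\,t^{k_1-k_2}e^{(\gamma_1-\gamma_2)t}$, which is the first claim. In the case $\gamma_1=0>\gamma_2$ I would instead invoke the second item of Proposition~\ref{prop:ratio:dpolynom}, bounding the quotient of absolute values below by $k_1/(|\gamma_2|t)>0$; the same substitution then gives $r(t)\geq 2^{-k_1}R\,\frac{k_1}{|\gamma_2|}\,t^{k_1-k_2-1}e^{(\gamma_1-\gamma_2)t}$, the second claim.

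The only point needing attention — the main obstacle, modest as it is — is to confirm $L[\gamma_1,k_1,\gamma_2,k_2]>0$ in every branch of Eqn.~(\ref{eq:ratio:L}), since only then does replacing the positive quotient of absolute values by $L$ preserve the direction of the inequality. Three branches are immediate: $\gamma_1/(\gamma_2+\tfrac{k_2}{N-k_2+1})>0$ when $\gamma_1\geq\gamma_2\geq 0\wedge\gamma_1>0$; $k_1/(2k_2)>0$ when $\gamma_1=\gamma_2=0$ (here $k_1,k_2\geq 1$, since the case $(\gamma_i,k_i)=(0,0)$ was already eliminated, and $k_1>k_2$ by $\succ$); and $\gamma_1/|\gamma_2|>0$ when $\gamma_1>0>\gamma_2$. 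The remaining branch $0>\gamma_1\geq\gamma_2$ requires $|\gamma_1|>\tfrac{k_1}{N-k_1+1}$, which is precisely the strict inequality delivered by Eqn.~(\ref{eq:bound:gamma_n}) for $t\geq N\geq N_{\gamma_1,k_1}$. Once $L>0$ is settled, the argument reduces to multiplying inequalities between positive quantities, so no further difficulty arises.
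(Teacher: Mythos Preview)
Your proposal is correct and matches the paper's own argument, which is the one-line remark preceding the proposition: ``By combining Propositions~\ref{prop:ratio:polynom}--\ref{prop:ratio:dpolynom}, we obtain the following proposition.'' You have simply unpacked that combination carefully, and your additional verification that $L>0$ in each branch of Eqn.~(\ref{eq:ratio:L}) is a useful detail the paper leaves implicit.
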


We define for $M>0$ and $(\gamma_1,k_1)\succ(\gamma_2,k_2)$
\begin{multline}
    N_0=N_0[M,\mu,\gamma_1,k_1,\gamma_2,k_2]\defeq \\
    \begin{cases}
      \left\lceil
        T_0\left[
          \dfrac{1}{L}\dfrac{2^{k_1}M}{R},\;
          \gamma_1-\gamma_2,\;
          k_1-k_2
        \right]
      \right\rceil \\
      \quad\text{if } \neg(\gamma_1=0>\gamma_2) \\[2ex]
       \left\lceil
        T_0\left[
          \dfrac{|\gamma_2|}{k_1}\dfrac{2^{k_1}M}{R},\;
          -\gamma_2,\;
          k_1-k_2-1
        \right]
      \right\rceil\\
      \quad\text{if } \gamma_1=0>\gamma_2
    \end{cases}
  \end{multline}
  with $T_0$, $R$ and $L$ defined by Eqns.~(\ref{eq:bound:T0}),
  (\ref{eq:bound:R}) and (\ref{eq:ratio:L}).
\begin{proposition}[Properties of $r(t)$] \label{prop:r:prop} ~\\
  $\forall t\geq N_0[M,\mu,\gamma_1,k_1,\gamma_2,k_2]: r(t)\geq M$.
\end{proposition}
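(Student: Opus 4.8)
The plan is to derive Proposition~\ref{prop:r:prop} by feeding the explicit lower bounds on $r(t)$ from Proposition~\ref{prop:ratio:lower_bound} into the growth criterion of Proposition~\ref{prop:poly_frac_greater}. By the symmetry recorded in Proposition~\ref{prop:varphi:symmetry} we may assume $(\gamma_1,k_1)\succ(\gamma_2,k_2)$ — which is precisely the regime in which $N_0$ is defined — and we keep the standing hypothesis $t\geq N=\max(N_{\gamma_1,k_1},N_{\gamma_2,k_2},k_1+1,2k_2+1)$ under which Proposition~\ref{prop:ratio:lower_bound} is valid (so in the statement $N_0$ is implicitly taken to be at least $N$). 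Under these assumptions Proposition~\ref{prop:ratio:lower_bound} bounds $r(t)$ from below by a strictly positive constant times a function of the shape $t^k e^{\gamma t}$, and the whole proof amounts to choosing the threshold that makes this lower bound exceed $M$.

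First I would handle the case $\neg(\gamma_1=0>\gamma_2)$: here $r(t)\geq \tfrac{1}{2^{k_1}}RL\,t^{k_1-k_2}e^{(\gamma_1-\gamma_2)t}$, so $r(t)\geq M$ is implied by $f(t)\geq \tfrac{1}{L}\tfrac{2^{k_1}M}{R}$ for $f(t)=t^{k_1-k_2}e^{(\gamma_1-\gamma_2)t}$. A short case split on how $(\gamma_1,k_1)\succ(\gamma_2,k_2)$ is realized shows $(\gamma_1-\gamma_2,k_1-k_2)\succ(0,0)$ (either $\gamma_1>\gamma_2$, in which case the exponent is positive and the integer $k_1-k_2$ is unconstrained, or $\gamma_1=\gamma_2$ and $k_1>k_2$). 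Applying Proposition~\ref{prop:poly_frac_greater} with $\gamma=\gamma_1-\gamma_2$, $k=k_1-k_2$ and lower bound $\tfrac{1}{L}\tfrac{2^{k_1}M}{R}$ gives $f(t)\geq \tfrac{1}{L}\tfrac{2^{k_1}M}{R}$ for all $t\geq T_0[\tfrac{1}{L}\tfrac{2^{k_1}M}{R},\gamma_1-\gamma_2,k_1-k_2]$; taking the ceiling in the definition of $N_0$ ensures $N_0$ is at least this $T_0$, which yields the claim. The case $\gamma_1=0>\gamma_2$ is analogous: there $r(t)\geq \tfrac{1}{2^{k_1}}R\tfrac{k_1}{|\gamma_2|}\,t^{k_1-k_2-1}e^{(\gamma_1-\gamma_2)t}$, the exponent $\gamma_1-\gamma_2=-\gamma_2=|\gamma_2|>0$ is strictly positive so $(-\gamma_2,k_1-k_2-1)\succ(0,0)$, and Proposition~\ref{prop:poly_frac_greater} with $\gamma=-\gamma_2$, $k=k_1-k_2-1$ and lower bound $\tfrac{|\gamma_2|}{k_1}\tfrac{2^{k_1}M}{R}$ gives $r(t)\geq M$ for $t\geq T_0[\tfrac{|\gamma_2|}{k_1}\tfrac{2^{k_1}M}{R},-\gamma_2,k_1-k_2-1]$, matching the second branch of $N_0$.

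The only delicate point is the bookkeeping: one must verify in each subcase that the pair $(\gamma,k)$ passed to Proposition~\ref{prop:poly_frac_greater} lies in the admissible region $(\gamma,k)\succ(0,0)$, and in particular that whenever $k=k_1-k_2$ (resp.\ $k_1-k_2-1$) is negative the exponent $\gamma$ is strictly positive, so that the $k<0$ branch of $T_0$ in Eqn.~(\ref{eq:bound:T0}) is the one actually invoked. One should also keep the free parameter $p\geq 2$ occurring in $T_0$ fixed to the same value used when deriving the lower bound of Proposition~\ref{prop:ratio:lower_bound}. Once these consistency checks are in place, the proof is a direct substitution with no further computation.
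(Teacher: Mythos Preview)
Your proposal is correct and follows exactly the approach of the paper, which simply states ``We combine Proposition~\ref{prop:poly_frac_greater} and Proposition~\ref{prop:ratio:lower_bound}.'' You have spelled out in detail what that combination entails --- feeding the lower bounds from Proposition~\ref{prop:ratio:lower_bound} into the threshold computation of Proposition~\ref{prop:poly_frac_greater}, handling the two cases of $N_0$ separately, and checking that the pair $(\gamma,k)$ passed to Proposition~\ref{prop:poly_frac_greater} satisfies $(\gamma,k)\succ(0,0)$ --- which is more careful than the paper itself.
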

\begin{proof}
  We combine Proposition~\ref{prop:poly_frac_greater} and Proposition~\ref{prop:ratio:lower_bound}.
\end{proof}
\begin{theorem}[Properties of $\phi'(t)$] \label{thm:phip}
  By
  applying Proposition~\ref{prop:r:prop} with $M>1$, given
  $N=N[M,\mu_1,\gamma_1,k_1,\mu_2,\gamma_2,k_2]$ as defined there, for all $t\geq N$,
  $$
  \begin{array}{rcl}
    (\gamma_1,k_1)\succ(\gamma_2,k_2)
    &\implies&
    \left\{
      \begin{array}{rcl}
        \gamma_1\geq 0 &\implies \phi'(t)>0 \\
        \gamma_1< 0 &\implies \phi'(t)<0
      \end{array}
    \right. \\
    (\gamma_1,k_1)\prec(\gamma_2,k_2)
    &\implies&
    \left\{
      \begin{array}{rcl}
        \gamma_2\geq 0 &\implies& s\cdot \phi'(t)>0 \\
        \gamma_2<0 &\implies& s\cdot \phi'(t)<0
      \end{array}
    \right.
  \end{array}
$$
This allows us to apply Proposition~\ref{prop:ezero} to compute the
extrema of $\phi(n)$ on any interval of positive integers.
\end{theorem}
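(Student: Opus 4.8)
The plan is to read off the sign of $\phi'(t)=\mu_1\varphi'_{\gamma_1,k_1}(t)+s\mu_2\varphi'_{\gamma_2,k_2}(t)$ from the fact that, for $t$ large enough, the first summand strictly dominates the second in absolute value, so that $\phi'$ inherits the sign of $\mu_1\varphi'_{\gamma_1,k_1}$. First I would dispose of the degenerate cases already treated above ($\mu_2=0$, or one of the pairs equal to $(0,0)$), so that we may assume both $(\gamma_i,k_i)\neq(0,0)$. By the symmetry recorded in Proposition~\ref{prop:varphi:symmetry} it suffices to treat $(\gamma_1,k_1)\succ(\gamma_2,k_2)$: in the $\prec$ case one applies the $\succ$ case to $s\cdot\phi=\phi[\mu_2,\gamma_2,k_2,s,\mu_1,\gamma_1,k_1]$, whose associated ratio is $1/r$, and the stated sign conclusions for the $\prec$ case are exactly the images of the $\succ$ conclusions under this substitution.

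So assume $(\gamma_1,k_1)\succ(\gamma_2,k_2)$, fix $M>1$, and let $N$ be the threshold of Proposition~\ref{prop:r:prop}, i.e.\ the maximum of the running lower bound $\max(N_{\gamma_1,k_1},N_{\gamma_2,k_2},k_1+1,2k_2+1)$ and $N_0[M,\mu,\gamma_1,k_1,\gamma_2,k_2]$. For $t\geq N$ that proposition gives $r(t)\geq M>1$, which by the definition of $r(t)$ as the ratio of the absolute values of the two terms of $\phi'(t)$ means $|\mu_1\varphi'_{\gamma_1,k_1}(t)|>|\mu_2\varphi'_{\gamma_2,k_2}(t)|$; both absolute values are in fact strictly positive by Eqn.~(\ref{eq:bound:varphip_lim}) under the hypothesis $t\geq N$. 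Hence $\phi'(t)\neq 0$ and it has the same sign as its dominant term $\mu_1\varphi'_{\gamma_1,k_1}(t)$, which since $\mu_1>0$ is the sign of $\varphi'_{\gamma_1,k_1}(t)$. Invoking Eqn.~(\ref{eq:bound:varphip_lim}) once more: if $\gamma_1\geq0$ then, since $(\gamma_1,k_1)\succ(0,0)$ (because $(\gamma_1,k_1)\neq(0,0)$), $\varphi'_{\gamma_1,k_1}(t)>0$, so $\phi'(t)>0$; if $\gamma_1<0$ then $\varphi'_{\gamma_1,k_1}(t)<0$, so $\phi'(t)<0$. This is the claimed statement, and it is precisely hypothesis~(ii) of Proposition~\ref{prop:ezero}; hypothesis~(i) holds because each $\varphi_{\gamma_i,k_i}$ has a limit at $+\infty$ (finite or infinite) and the dominant term fixes $\phi(\infty)$, so Proposition~\ref{prop:ezero} applies and reduces $\inf$ and $\sup$ of $\phi$ over an integer interval to evaluation at finitely many points.

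The conceptual content is light; the main obstacle is the bookkeeping of constants. One must check that the single $N$ above really exceeds all the lower bounds under which Propositions~\ref{prop:poly_frac_greater}, \ref{prop:ratio:dpolynom}, \ref{prop:ratio:lower_bound} and \ref{prop:r:prop} were established, and verify that the borderline sub-case $\gamma_1=0>\gamma_2$ --- where $r(t)$ grows only like $t^{k_1-k_2-1}e^{(\gamma_1-\gamma_2)t}$, one power of $t$ short of the generic rate --- is still handled, which it is by the dedicated second branch in the definition of $N_0$.
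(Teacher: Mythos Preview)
Your proof is correct and follows exactly the paper's own argument: the ratio $r(t)$ measures which term of $\phi'(t)$ dominates in absolute value, and once $r(t)>1$ from rank $N$ onward the sign of $\phi'$ is that of the dominant term, read off from Eqn.~(\ref{eq:bound:varphip_lim}). You have simply made explicit what the paper compresses into three sentences --- the symmetry reduction via Proposition~\ref{prop:varphi:symmetry}, the invocation of Eqn.~(\ref{eq:bound:varphip_lim}) for the sign, and the verification that the hypotheses of Proposition~\ref{prop:ezero} are met --- so the approaches are identical.
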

\begin{proof}
  The function $r(t)$ indicates which of the two terms of the
  function $\phi'(t)$ dominate in absolute value. $\forall t\geq N:
  r(t)>1$ indicates that starting from the rank $N$, the first
  term dominates. We then have to look at the sign of this dominant
  term. The same holds for the case $\forall t\geq N: r(t)<1$.
\end{proof}

\subsection{Expressions $\phi(n)$ involving only real, non-zero eigenvalues}
\label{sec:bound:realnegative}

Coming back to Eqn~(\ref{eq:bound:phi}),
we consider now the subcase $\lambda_i>0, \theta_i\in\{0,\pi\}$ for
$i=1,2$, with at least one the $\theta_i$'s equal to $\pi$.

Let $s_i=\cos \theta_i\in\{-1,1\}$.
We define
\begin{align*}
  \phi_0(t) &= s_1^{k_1} \phi[\mu_1,\gamma_1,s s_1^{k_1}s_2^{k_2},\mu_2,\gamma_2](2t) \\
  \phi_1(t) &= s_1^{k_1+1} \phi[\mu_1,\gamma_1,s s_1^{k_1+1}s_2^{k_2+1},\mu_2,\gamma_2](2t)
\end{align*}
with $\phi[\cdot]$ defined by Eqn.~(\ref{eq:bound:phi}). The idea is to
separate the cases $n-k_i$ is odd and $n-k_i$ is even. We have
\begin{gather*}
  \phi(n) =
  \begin{cases}
    \phi_0(n/2) & \text{if } n\equiv 0\pmod 2 \\
    \phi_1(n/2) & \text{if } n\equiv 1\pmod 2
  \end{cases}
  \intertext{and}
  \begin{multlined}
    \inf_{N_\mathit{min}\leq n\leq N_\mathit{max}} \phi(n) = \\
    \min\left(\inf_{\vceil{\frac{N_\mathit{min}}{2}}\leq n\leq \vfloor{\frac{N_\mathit{max}}{2}}} \phi_0(n),
    \inf_{\vceil{\frac{N_\mathit{min}+1}{2}}\leq n\leq \vfloor{\frac{N_\mathit{max}+1}{2}}} \phi_1(n)\right)
  \end{multlined}\\
  \begin{multlined}
    \sup_{N_\mathit{min}\leq n\leq N_\mathit{max}} \phi(n) = \\
    \max\left(\sup_{\vceil{\frac{N_\mathit{min}}{2}}\leq n\leq \vfloor{\frac{N_\mathit{max}}{2}}} \phi_0(n),
    \sup_{\vceil{\frac{N_\mathit{min}+1}{2}}\leq n\leq \vfloor{\frac{N_\mathit{max}+1}{2}}} \phi_1(n)\right)
  \end{multlined}
\end{gather*}
We compute the integer extrema of $\phi_1$ and $\phi_2$ by
applying the technique of Section~\ref{sec:bound:realpositive}.

\subsection{Expressions $\phi(n)$ involving only complex, non-zero eigenvalues}
\label{sec:bound:complex}

Coming back to Eqn~(\ref{eq:bound:phi}), we consider now the
subcase $\lambda_i>0, \theta_i\in]0,\pi[$ for $i=1,2$.

In contrast to Sections~\ref{sec:bound:realpositive} and
\ref{sec:bound:realnegative}, we will mostly compute 
infimum and supremum on reals instead of on integers (which is an
approximation of the initial problem).  Moreover, 
we will give approximate bounds for the general case and more
precise ones only in special cases.

A trivial but sound bound for $N_\mathit{min}\leq n\leq N_\mathit{max}$ is
\begin{align*}
  \label{eq:bound:complex:simple}
  & -M\leq \phi(n)\leq M \\
  & \text{ with }
  \begin{multlined}[t]
    M=\sup_{N_\mathit{min}\leq n\leq N_\mathit{max}}
    \phi[\mu_1,\gamma_1,\theta_1=0,r_1=0,k_1,\\[-2ex]
    s=1,\mu_2,\gamma_2,\theta_2=0,k_2](n)
  \end{multlined}
\end{align*}
where $M$ is the upper bound computed in
Section~\ref{sec:bound:realpositive}.

In the sequel, we improve such bounds in some cases.

\subsubsection{Bounding single coefficients}
\label{sec:bound:complex:single}

We consider here $\mu=1$, which means that $\phi(t)=\varphi(t)$ as
defined by Eqn.~(\ref{eq:bound:varphi}). We have
\begin{align}
  &\varphi[\gamma,\theta,r,k](t)=\frac{P_k(t)}{k!}
  e^{(t-k)\gamma}\cos((t-k)\theta-r\tfrac{\pi}{2})
\end{align}
For bounding $\varphi(t)$ on some interval $[N_\mathit{min},N_\mathit{max}]$,
we define $\alpha=\tfrac{e^{-\gamma k}}{k!}$,
$\varphi=-\theta k+(1-r)\tfrac{\pi}{2}$. We have
$$
\varphi[\gamma,\theta,r,k](t)=\alpha P_k(t) e^{\gamma t}\sin(\theta t+\varphi)
$$
\begin{itemize}
\item If $k=0$, we apply Proposition~\ref{prop:bound:complex:Jp_k_zero};
\item If $k>0 \wedge \gamma\neq 0$, we apply Proposition~\ref{prop:bound:complex:Jp_k_pos};
\item If $k>0 \wedge \gamma=0$, we apply
  Proposition~\ref{prop:bound:complex:Jp_k_pos2}.
\end{itemize}
Each of these propositions assume a minimum lower (integer) bound $L$ on
$T_\mathit{min}$: in practice, we enumerate the values of
$\varphi(t)$ on the integers in $[N_\mathit{min},
\min(L-1,N_\mathit{max})]$, and we resort to the relevant proposition
for the interval $[L, N_\mathit{max}]$.

\subsubsection{Bounding some linear combination of two coefficients}
\label{sec:bound:complex:two}

There are some cases where we can do better than
Eqn.~(\ref{eq:bound:complex:simple}).

\begin{proposition}
  If $\theta_1=\theta_2$, $r_1=r_2$ and $k_1=k_2$, we have
  $$
  \phi(t)=\phi[\mu_1,\gamma_1,k_1,s,\mu_2,\gamma_2,k_2]
  \cos(\theta_1(t-k_1) + r_1\frac{\pi}{2})
  $$
  and we can take $\phi(n)\in I_1\cdot I_2$ with
  \begin{align*}
    I_1 &= [\inf V_1,\sup V_1] & \text{with }
    V_1 =\;& \{
    \phi[\mu_1,\gamma_1,k_1,s,\mu_2,\gamma_2,k_2](n) \;|\;\\
    &&&
    \quad n\in[N_\mathit{min},N_\mathit{max}] \} \\
    I_2 &= [\inf V_2,\sup V_2] & \text{with }
    V_1 =\;& \{ cos(\theta_1(t-k_1) + r_1\frac{\pi}{2}) \;|\; \\
    &&&
    \quad t \in [N_\mathit{min},N_\mathit{max}] \}
  \end{align*}
  which improves
  Eqn.~(\ref{eq:bound:complex:simple}) when $s=-1$ or $I_2\subsetneq[-1,1]$.
\end{proposition}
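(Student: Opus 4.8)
The plan is to prove the displayed identity by a direct factorisation, then bound the resulting product by the product of the two intervals, and finally compare the outcome with the trivial bound. First I would note that under the hypotheses $\theta_1=\theta_2=\theta$, $r_1=r_2=r$, $k_1=k_2=k$ the two summands of $\phi$ carry exactly the same oscillatory factor. Writing $\gamma_i=\log\lambda_i$ and recalling Eqn.~(\ref{eq:bound:varphi}), $\varphi[\gamma_i,\theta,r,k](t)=\tfrac{P_k(t)}{k!}e^{(t-k)\gamma_i}\,g(t)$ with the common cosine factor $g(t)=\cos\bigl((t-k)\theta-r\tfrac{\pi}{2}\bigr)$ independent of $i$. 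Pulling $g$ out of $\phi(t)=\mu_1\varphi[\gamma_1,\theta,r,k](t)+s\mu_2\varphi[\gamma_2,\theta,r,k](t)$ gives
\[
\phi(t) = \Bigl(\mu_1\tfrac{P_k(t)}{k!}e^{(t-k)\gamma_1}+s\mu_2\tfrac{P_k(t)}{k!}e^{(t-k)\gamma_2}\Bigr)\, g(t) = \phi[\mu_1,\gamma_1,k,s,\mu_2,\gamma_2,k](t)\cdot g(t),
\]
which is the stated identity and, crucially, exhibits the first factor as a function of the real–positive–eigenvalue kind already treated in \S\ref{sec:bound:realpositive}.

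Next, for any integer $n\in[N_\mathit{min},N_\mathit{max}]$ the first factor $\phi[\mu_1,\gamma_1,k,s,\mu_2,\gamma_2,k](n)$ is by definition a member of $V_1$, hence of $I_1$, while $g(n)$ is the value of the cosine at the real point $n\in[N_\mathit{min},N_\mathit{max}]$, hence a member of $V_2\subseteq I_2$. Therefore $\phi(n)$, being their product, lies in the interval product $I_1\cdot I_2$, which is exactly the claim. For the effective computation, $\inf V_1$ and $\sup V_1$ over the integers are obtained by the machinery of \S\ref{sec:bound:realpositive} (Prop.~\ref{prop:ezero} together with Thm.~\ref{thm:phip}), and $\inf V_2,\sup V_2$ is just the range of a cosine over an interval.

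For the ``improves'' claim I would compare $I_1\cdot I_2$ with the trivial bound $[-M,M]$ of Eqn.~(\ref{eq:bound:complex:simple}), where with $k_1=k_2=k$ one has $M=\sup_n\tfrac{P_k(n)}{k!}\bigl(\mu_1e^{(n-k)\gamma_1}+\mu_2e^{(n-k)\gamma_2}\bigr)$. Since $\mu_1,\mu_2\ge 0$ and $P_k(n)\ge 0$ on the relevant integers ($P_k$ vanishes for $0\le n\le k-1$ and is positive for $n\ge k$), the triangle inequality gives $\bigl|\phi[\mu_1,\gamma_1,k,s,\mu_2,\gamma_2,k](n)\bigr|\le\tfrac{P_k(n)}{k!}\bigl(\mu_1e^{(n-k)\gamma_1}+\mu_2e^{(n-k)\gamma_2}\bigr)$, so $I_1\subseteq[-M,M]$; together with $I_2\subseteq[-1,1]$ this yields $I_1\cdot I_2\subseteq[-M,M]$ in all cases. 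The inclusion becomes strict when $I_2\subsetneq[-1,1]$ (a proper sub-range of the cosine shrinks the product) and also when $s=-1$: at the integer $n^*$ realising $\sup|V_1|$ either both terms $\mu_ie^{(n^*-k)\gamma_i}$ are strictly positive, so $|a-b|<a+b$ makes the bound strict, or the whole expression vanishes — in either case $\sup|V_1|<M$, unless $M=0$, when the statement is vacuous.

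The proposition is essentially elementary once the factorisation is observed; I expect the only mildly delicate point to be this last strict-improvement argument, which requires keeping track of the precise definition of $M$ from \S\ref{sec:bound:realpositive}, the vanishing/non-negativity of $P_k$ at the small integers, and a short case split on where $\sup|V_1|$ is attained. No analytic tools beyond those of \S\ref{sec:bound:realpositive} are needed.
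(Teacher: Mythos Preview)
The paper states this proposition without proof, so there is no argument to compare against; your write-up supplies exactly the natural justification the authors presumably had in mind. The factorisation step is correct and is the whole content of the first display: under $\theta_1=\theta_2$, $r_1=r_2$, $k_1=k_2$ the oscillatory factor $\cos((t-k)\theta-r\tfrac{\pi}{2})$ is common to both summands of Eqn.~(\ref{eq:bound:phi}) and pulls out, leaving precisely the real-eigenvalue expression $\phi[\mu_1,\gamma_1,k,s,\mu_2,\gamma_2,k]$ of \S\ref{sec:bound:realpositive}. The containment $\phi(n)\in I_1\cdot I_2$ then follows immediately, and your remark that $I_1$ is computed via Thm.~\ref{thm:phip}/Prop.~\ref{prop:ezero} while $I_2$ is just the range of a cosine on an interval is exactly right.

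Your treatment of the ``improves'' clause is more careful than the paper's bare assertion, but note two small edge cases your dichotomy does not fully cover. First, the setting of Eqn.~(\ref{eq:bound:phin}) allows $\mu_2=0$, in which case $s=-1$ yields no strict gain; your case split ``either both terms are strictly positive \ldots\ or the whole expression vanishes'' tacitly assumes $\mu_2>0$. Second, strictness from $I_2\subsetneq[-1,1]$ also needs $I_1\neq\{0\}$ (equivalently $M>0$), which you mention for the $s=-1$ branch but not here. These are degenerate situations and the paper's informal ``improves'' is clearly meant generically, so this does not undermine your argument; it would simply be cleaner to add the standing hypothesis $\mu_2>0$ and $M>0$ when arguing strictness.
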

The above case is less likely to occur in practic than the following
one that corresponds to the linear combination of the real and
imaginary part of the same complex eigenvalue.
\begin{proposition}\label{prop:bound:complex:two:associated}
  We assume $\gamma_1=\gamma_2=\gamma$, $k_1=k_2=k$, and
  $\theta_1=\theta_2=\theta$ (which implies $r_1\neq r_2$). We
  have
  \begin{align*}
    &\phi(t)=\mu \frac{P_k(t)}{k!}e^{\gamma(t-k)} \sin(\theta t +
    \varphi) \\
    & \text{ with }
    \begin{aligned}[t]
      \mu &= \sqrt{\mu_1^2+\mu_2^2} \\
      \varphi &=
      \begin{cases}
        \begin{multlined}[t]
          \arctan\left(\dfrac{-\mu_1\sin(\theta k)+s\mu_2\cos(\theta k)}{\mu_1\cos(\theta k)+s\mu_2\sin(\theta k)}\right) \\[-2ex]
          +
          \begin{cases}
            0 &\text{if } \mu_1\cos(\theta k)+s\mu_2\sin(\theta k)\geq 0 \\
            \pi &\text{otherwise}
          \end{cases}
        \end{multlined}
        & \text{if } r_1=0 \\[2ex]
        \begin{multlined}[t]
          \arctan\left(\dfrac{\mu_1\cos(\theta k)-s\mu_2\sin(\theta k)}{\mu_1\sin(\theta k)+s\mu_2\cos(\theta k)}\right) \\[-2ex]
          +
          \begin{cases}
            0 &\text{if } \mu_1\sin(\theta k)+s\mu_2\cos(\theta k)\geq 0 \\
            \pi &\text{otherwise}
          \end{cases}
        \end{multlined}
        & \text{if } r_1=1 \\
      \end{cases}
    \end{aligned}
  \end{align*}
  We can then resort to Section~\ref{sec:bound:complex:single}
\end{proposition}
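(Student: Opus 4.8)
The plan is to collapse the two summands of $\phi(t)$ into a single damped sinusoid by elementary trigonometry, and then fall back on the single-coefficient bounds of \S\ref{sec:bound:complex:single}. First I would substitute the definition~(\ref{eq:bound:varphi}) of $\varphi[\cdot]$ into~(\ref{eq:bound:phi}): since $\gamma_1=\gamma_2=\gamma$, $k_1=k_2=k$ and $\theta_1=\theta_2=\theta$, the factor $\tfrac{P_k(t)}{k!}e^{\gamma(t-k)}$ is common and pulls out, leaving
\[
\phi(t)=\tfrac{P_k(t)}{k!}e^{\gamma(t-k)}\bigl(\mu_1\cos((t-k)\theta-r_1\tfrac{\pi}{2})+s\,\mu_2\cos((t-k)\theta-r_2\tfrac{\pi}{2})\bigr).
\]
The hypothesis $r_1\neq r_2$ forces $\{r_1,r_2\}=\{0,1\}$, so by $\cos(\alpha-\tfrac{\pi}{2})=\sin\alpha$ one of the two terms is a genuine cosine of the common angle $(t-k)\theta$ and the other a sine of it; the parenthesised expression is $\mu_1\cos((t-k)\theta)+s\mu_2\sin((t-k)\theta)$ when $r_1=0$, and the analogous expression with the roles of $\mu_1$ and $s\mu_2$ exchanged when $r_1=1$.

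Second, I would expand $(t-k)\theta=\theta t-\theta k$ via the angle-addition identities, rewriting $\cos((t-k)\theta)$ and $\sin((t-k)\theta)$ as linear combinations of $\cos\theta t$ and $\sin\theta t$. Collecting coefficients puts the parenthesised expression in the form $A\cos\theta t+B\sin\theta t$, where for $r_1=0$ one reads off $A=\mu_1\cos\theta k-s\mu_2\sin\theta k$ and $B=\mu_1\sin\theta k+s\mu_2\cos\theta k$ (and the two get modified correspondingly for $r_1=1$). Applying the first identity of Proposition~\ref{prop:bound:add_sin} to $B\sin\theta t+A\cos\theta t$ gives $\mu\sin(\theta t+\varphi)$ with $\mu=\sqrt{A^2+B^2}$ and $\varphi=\arctan(A/B)$ corrected by $0$ or $\pi$ according to the sign of $B$. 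A one-line computation using $s^2=1$ and $\cos^2\theta k+\sin^2\theta k=1$ yields $A^2+B^2=\mu_1^2+\mu_2^2$, hence the claimed $\mu$; substituting the explicit $A$ and $B$ into the $\arctan$ formula yields exactly the case-split expression for $\varphi$ in the statement. Re-attaching the common factor gives $\phi(t)=\mu\,\tfrac{P_k(t)}{k!}e^{\gamma(t-k)}\sin(\theta t+\varphi)$.

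Third, I would note that this is $\mu$ times an expression of the normal form treated in \S\ref{sec:bound:complex:single}, i.e.\ $\alpha' P_k(t)e^{\gamma t}\sin(\theta t+\varphi)$ with $\alpha'=\mu\,e^{-\gamma k}/k!$ --- the propositions there depend only on $\gamma$, $\theta$, $k$ and $\varphi$, not on where $\varphi$ came from. So, according to whether $k=0$, $k>0\wedge\gamma\neq 0$, or $k>0\wedge\gamma=0$, Proposition~\ref{prop:bound:complex:Jp_k_zero}, \ref{prop:bound:complex:Jp_k_pos} or~\ref{prop:bound:complex:Jp_k_pos2} respectively delivers the extrema of $\phi$ on any interval of positive integers, the positivity of $\mu$ merely fixing which extremum is the infimum and which the supremum.

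The step I expect to be the main obstacle is the bookkeeping in the second paragraph: one has to keep straight which summand carries the $-\tfrac{\pi}{2}$ phase under $r_1\neq r_2$ and then correctly resolve the $\arctan$ quadrant, since $\arctan$ recovers $\varphi$ only modulo $\pi$ and the $0$-versus-$\pi$ correction hinges on the sign of the $\sin\theta t$ coefficient; the $r_1=0$ and $r_1=1$ branches of the statement are the same computation with $\mu_1$ and $\mu_2$ interchanged, in line with the symmetry of Proposition~\ref{prop:varphi:symmetry}. Two harmless degeneracies must also be disposed of: if $\mu_2=0$ the conclusion is immediate by applying \S\ref{sec:bound:complex:single} directly to $\mu_1\varphi[\cdot]$, and if the coefficient $B$ vanishes then $\varphi=\pm\tfrac{\pi}{2}$ (or $\phi\equiv 0$ if also $A=0$) and the identity $A\cos\theta t+B\sin\theta t=\mu\sin(\theta t+\varphi)$ is checked directly.
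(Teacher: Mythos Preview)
Your proposal is correct and follows essentially the same route as the paper's own proof: factor out the common $\tfrac{P_k(t)}{k!}e^{\gamma(t-k)}$, collapse the two remaining trigonometric terms into a single $\mu\sin(\theta t+\varphi)$ via Proposition~\ref{prop:bound:add_sin}, and then hand off to \S\ref{sec:bound:complex:single}. The only cosmetic difference is that you expand $(t-k)\theta$ by angle addition and then apply the \emph{first} identity of Proposition~\ref{prop:bound:add_sin} to $B\sin\theta t+A\cos\theta t$, whereas the paper keeps the phases $-\theta k+r_i\tfrac{\pi}{2}$ intact and applies the \emph{second} identity directly.

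One bookkeeping remark: carrying your computation through, the pair $(A,B)$ you obtain for $r_1=0$ produces $\arctan(A/B)=\arctan\bigl(\tfrac{\mu_1\cos\theta k-s\mu_2\sin\theta k}{\mu_1\sin\theta k+s\mu_2\cos\theta k}\bigr)$, which is the displayed formula in the statement under the label $r_1=1$, and conversely. This is not a flaw in your argument; the paper's proof converts $\cos((t-k)\theta-r\tfrac{\pi}{2})$ to $\sin(\theta t-\theta k+r\tfrac{\pi}{2})$ rather than to $\sin(\theta t-\theta k+(1-r)\tfrac{\pi}{2})$, which silently swaps the two cases since $\{r_1,r_2\}=\{0,1\}$. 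Your derivation recovers the correct pairing.
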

\begin{proof}
\begin{align*}
  (k!) \phi(t)&=
  P_{k}(t)e^{\gamma(t-k)}
  \begin{multlined}[t]
    (\mu_1\sin(\theta t - \theta k + r_1\tfrac{\pi}{2})+\\
    s\mu_2\sin(\theta t - \theta k + r_2\tfrac{\pi}{2}))
  \end{multlined}
  \\
  &= P_{k}(t)e^{\gamma(t-k)} \mu \sin(\theta t + \varphi)
  \intertext{with}
  \mu &= \sqrt{\mu_1^2+\mu_2^2+2s\mu_1\mu_2\cos((r_1-r_2)\tfrac{\pi}{2})} \\
  &= \sqrt{\mu_1^2+\mu_2^2} \\
  \varphi &=
  \arctan\left(\dfrac{\mu_1\sin(-\theta k\sadd r_1\frac{\pi}{2})\sadd s\mu_2\sin(-\theta k\sadd r_2\frac{\pi}{2})}{\mu_1\cos(-\theta k\sadd r_1\frac{\pi}{2})\sadd s\mu_2\cos(-\theta k\sadd r_2\frac{\pi}{2})}\right) \\
  & +
  \begin{cases}
    0 &\text{if } \mu_1\cos(-\theta k\sadd r_1\frac{\pi}{2})\sadd s\mu_2\cos(-\theta k\sadd r_2\frac{\pi}{2}) \geq 0 \\
    \pi &\text{otherwise}
  \end{cases}
\end{align*}
If $r_1=0$ (which implies $r_2=1$),
\begin{align*}
  & \dfrac{\mu_1\sin(-\theta k\sadd r_1\frac{\pi}{2})\sadd s\mu_2\sin(-\theta k\sadd r_2\frac{\pi}{2})}{\mu_1\cos(-\theta k\sadd r_1\frac{\pi}{2})\sadd s\mu_2\cos(-\theta k\sadd r_2\frac{\pi}{2})} \\
  = &
  \dfrac{\mu_1\sin(-\theta k)+s\mu_2\cos(-\theta k)}{\mu_1\cos(-\theta k)-s\mu_2\sin(-\theta k)} \\
  = &
  \dfrac{-\mu_1\sin(\theta k)+s\mu_2\cos(\theta k)}{\mu_1\cos(\theta k)+s\mu_2\sin(\theta k)}
\end{align*}
If $r_1=1$ (which implies $r_2=0$),
\begin{align*}
  & \dfrac{\mu_1\sin(-\theta k\sadd r_1\frac{\pi}{2})\sadd s\mu_2\sin(-\theta k\sadd r_2\frac{\pi}{2})}{\mu_1\cos(-\theta k\sadd r_1\frac{\pi}{2})\sadd s\mu_2\cos(-\theta k\sadd r_2\frac{\pi}{2})} \\
  = &
  \dfrac{\mu_1\cos(-\theta k)+s\mu_2\sin(-\theta k)}{-\mu_1\sin(-\theta k)+s\mu_2\cos(-\theta k)} \\
  = &
  \dfrac{\mu_1\cos(\theta k)-s\mu_2\sin(-\theta k)}{\mu_1\sin(\theta k)+s\mu_2\cos(\theta k)}
\end{align*}
\end{proof}

\subsubsection{Other cases}
\label{sec:bound:other}

As if $\lambda=0$, $\varphi_{\lambda,k}(n)$ is
constant starting from rank $n>k$, if $\lambda_1=\lambda_2=0$,
computing the extremum is trivial. If only one of the
$\lambda_i$s is zero, we can reduce the problem to one of the previous cases.

\section{Computing a maximum number of iterations}
\label{sec:maxiter}

The goal of this section is to compute
a variant of Eqn.~(\ref{eq:minNumberIt}), rewritten as
\begin{equation}\label{eq:maxiter:phi}
\min \{ n\geq N_0 \;|\; \phi(n) > e_0 \}
\end{equation}
with $\phi(n)$ defined as in Eqn.~(\ref{eq:bound:phin}).  We
generalize Eqn.~(\ref{eq:minNumberIt}) by computing the minimum
starting from some rank $N_0$ instead of $0$, but we restrict the
type of functions to $\phi(n)$.

\subsection{Expressions $\phi(n)$ involving only real, positive eigenvalues}
\label{sec:maxiter:realpositive}

We know from Section~\ref{sec:bound:realpositive} that starting
from some rank $N$, $\phi(t)$ is monotone.

We first compute by enumeration the finite set $S=\{ n \;|\; N_0\leq n\leq N \wedge
  \phi(n)> e_0 \}$.
\begin{itemize}
\item If $S\neq\emptyset$, we select its minimum
  and we have the solution to Eqn.~(\ref{eq:maxiter:phi}).
\item Otherwise we define $N'=\max(N_0,N)$. We know that $\phi(N')\leq e_0$.
  \begin{itemize}
  \item If $\phi(t)$ is increasing for $t\geq N'$ and
    $\phi(\infty)>e_0$, then there exists a unique solution $t_0$
    to the equation $\phi(t)=e_0$, that we can compute by
    the Newton-Raphson method, and $n_0 =
    \begin{cases}
      t_0+1 &\text{ if } t_0\in\mathbb{N} \\
      \ceil{t_0} &\text{ otherwise}
    \end{cases}
    $ is the solution to our problem.
  \item Otherwise, if either $\phi(t)$ is strictly decreasing for $t\leq N'$
    or $\phi(\infty)\leq e_0$, then the solution is $\min\emptyset=\infty$.
  \end{itemize}
\end{itemize}

\subsection{Expressions $\phi(n)$ involving only complex, non-zero eigenvalues}
\label{sec:maxiter:complex}

\begin{figure*}[t]
  \centering

  \parbox[t]{0.3\linewidth}{
    \vspace*{1ex}

    \psset{xunit=0.33em,yunit=4em,plotpoints=50,Dx=100,Dy=100}
    \begin{pspicture}(-10,-1)(40,1.3)
      \psaxes(0,0)(-10,0)(40,0)
      \psplot{-10}{40}{x dup 11 sub 14 mul sin exch 0.985 exch exp mul
        0.95 mul}
      \psline[linestyle=dashed](-10,-0.3)(40,-0.3)
      \rput[b]{0}(-7,-0.28){$e_0$}
      \psline[linestyle=dotted](0,0.7)(0,0)
      \rput[b]{0}(0,0.7){$N_0$}
      \psline[linestyle=dotted](11,1)(11,0)
      \rput[l]{0}(12,0.7){$t_0$}
      \psline[linestyle=dotted](9,1)(9,-0.3)
      \rput[r]{0}(8,0.7){$t'$}
      \psline[linestyle=dotted](13,0)(12,-0.7)
      \rput[t]{0}(13,-0.7){$n_0$}
      \psline[linestyle=dotted](24,1)(24,0)
      \psline[arrows=<->,linestyle=solid](11,1)(24,1)
      \rput[b]{0}(17,1.02){$\frac{\pi}{\theta}>1$}
    \end{pspicture}
    \caption{Case $e_0<0$}
    \label{fig:maxiter:solve:case1}
  }\hfill
  \parbox[t]{0.3\linewidth}{
    \vspace*{1ex}

    \psset{xunit=0.33em,yunit=4em,plotpoints=50,Dx=100,Dy=100}
    \begin{pspicture}(-10,-1)(40,1.3)
      \psaxes(0,0)(-10,0)(40,0)
      \psplot{-10}{40}{x dup 5 sub 14 mul sin exch 0.985 exch exp mul
        0.95 mul}
      \psline[linestyle=dashed](-10,0.5)(40,0.5)
      \rput[b]{0}(-7,0.52){$e_0$}
      \psline[linestyle=dotted](0,0.7)(0,0)
      \rput[b]{0}(0,0.7){$N_0$}
      \psline[linestyle=dotted](5,0)(5,-0.7)
      \rput[t]{0}(5,-0.72){$t_0$}
      \psline[linestyle=dotted](31,0)(31,-0.7)
      \rput[t]{0}(31,-0.72){$t_1$}
      \psline[linestyle=dotted](7.5,0.5)(7.5,-0.7)
      \rput[t]{0}(7.5,-0.72){$t'$}
      \psline[linestyle=dotted](12,0.7)(12,-0.7)
      \rput[t]{0}(12,-0.72){$n_0$}
    \end{pspicture}
    \caption{Case $e_0\geq 0\wedge \lambda\leq 1$}
    \label{fig:maxiter:solve:case2}
  }\hfill
  \parbox[t]{0.3\linewidth}{
    \vspace*{1ex}

    \psset{xunit=0.33em,yunit=4em,plotpoints=50,Dx=100,Dy=100}
    \begin{pspicture}(-10,-1)(40,1.3)
      \psaxes(0,0)(-10,0)(40,0)
      \psplot{-10}{40}{x dup 5 sub 14 mul sin exch 1.02 exch exp mul
        0.5 mul}
      \psplot[linestyle=dashed]{-10}{40}{1.02 x exp 0.5 mul}
      \psline[linestyle=dashed](-10,0.5)(40,0.5)
      \rput[b]{0}(-7,0.52){$e_0$}
      \psline[linestyle=dotted](-4,0.7)(-4,0)
      \rput[b]{0}(-4,0.7){$N_0$}
      \psline[linestyle=dotted](0,0.7)(0,0)
      \rput[l]{0}(-1,0.8){$t_\mathit{hull}$}
      \psline[linestyle=dotted](5,0)(5,-0.7)
      \rput[t]{0}(5,-0.72){$t_0$}
      \psline[linestyle=dotted](31,0)(31,-0.7)
      \rput[t]{0}(31,-0.72){$t_1$}
      \psline[linestyle=dotted](9.5,0.7)(9.5,-0.7)
      \rput[t]{0}(9.5,-0.72){$t'$}
      \psline[linestyle=dotted](17,0.7)(17,-0.7)
      \rput[t]{0}(17,-0.72){$n_0$}
    \end{pspicture}
    \caption{Case $e_0\geq 0\wedge \lambda>1$}
    \label{fig:maxiter:solve:case3}
  }\null
\end{figure*}

\subsubsection{Bounding single coefficients}
\label{sec:maxiter:complex:single}

We consider here the case
$\phi(n)=\mu\varphi[\lambda,\theta,r,k=0](n)$ where
$\varphi[\lambda,\theta,r,k](t)$ is
defined by Eqn.~(\ref{eq:bound:varphi}). This case
$k=0$, which can be generalized
to
$$
\phi(t) = \alpha e^{\gamma t}\sin(\theta t+\varphi)
$$
with $\alpha>0$, $\theta\in]0,\pi[$ and $\varphi\in[-\tfrac{3\pi}{2},\tfrac{\pi}{2}]$.
We have
$$
\phi'(t)=\alpha  e^{\gamma t}\bigl(\gamma\sin(\theta t+\varphi)+\theta\cos(\theta t+\varphi)\bigr)
$$

We look for the first $t_0\geq N_0$ such that $f(t_0)=0$ and
$f'(t_0)>0$. This implies that $\theta t_0+\varphi=2\pi k_0\geq
\theta N_0+\varphi$. Hence we define
\begin{align*}
  k_0 & = \vceil{\frac{\theta N_0 + \varphi}{2\pi}} &
  t_0 &= \frac{2\pi k_0-\varphi}{\theta}
\end{align*}
\begin{enumerate}
\item Case $e_0\leq 0$ (see Fig.~\ref{fig:maxiter:solve:case1}):
  \begin{itemize}
  \item If $\phi(N_0)>e_0$ then the solution is $N_0$.
  \item Otherwise, we solve $\phi(t)=e_0/\alpha$ with the
    Newton-Raphson method, starting from $t_0$, and we obtain a
    solution
    $t'\in[t_0-\frac{1}{\theta}\tfrac{\pi}{2},t_0]$ which
    necessarily satisfies $t'\geq N_0$. We then define
    $n_0 =
    \begin{cases}
      t'+1 &\text{ if } t'\in\mathbb{N} \\
      \ceil{t'} &\text{ otherwise}
    \end{cases}
    $
    which is necessarily the solution, because
    $\theta<\pi$ and $n_0-t_1<1$, hence $\theta(n_0-t_1)<\pi$.
  \end{itemize}
\item Case $e_0>0 \wedge \lambda\leq 1$ (see Fig.~\ref{fig:maxiter:solve:case2}):
  As we need successive tries, we will iterate on an integer $i$
  from $0$ to some bound. For each iteration, we define
  \begin{align*}
    k_i &= k_0+i &
    t_i &= t_0+2\pi i=\frac{2\pi k_i-\varphi}{\theta}
  \end{align*}
  \begin{itemize}
  \item If $\phi(t_i+\frac{1}{\theta}\tfrac{\pi}{2})\leq e_0$,
    then the solution is $\min \emptyset=+\infty$.
  \item Otherwise we solve $\phi(t)=e_0/\alpha$ with the
    Newton-Raphson method, starting from $t_i$, and we obtain a
    solution
    $t'\in[t_i,t_i+\frac{1}{\theta}\tfrac{\pi}{2}]$. We then consider
    $n_0 = \begin{cases}
      t'+1 &\text{ if } t'\in\mathbb{N} \\
      \ceil{t'} &\text{ otherwise}
    \end{cases}
    $ that satisfies $n_0-t_1<\frac{\pi}{\theta}$. If
    $\phi(n_0)>e_0$ then it is the solution, otherwise we try the
    next $i$ up to some arbitrary bound.
  \end{itemize}
\item Case $e_0>0 \wedge \lambda> 1$ (see Fig.~\ref{fig:maxiter:solve:case3}): \\
  As we need successive tries, we again iterate on an integer $i$
  from $0$ to some bound.  We proceed however slightly differently
  for this last case. We first define
  \begin{align*}
    t_\mathit{hull}&=\frac{\log (e_0/\alpha)}{\gamma}
  \end{align*}
  which is the rank from which the expanding sinusoid
  may be greater than $e_0$.  We now look for the first $t_0\geq
  N_0$, $t_0\geq t_\mathit{hull}$ such that $f(t_0)=0$ and
  $f'(t_0)>0$. This implies that $\theta t_0+\varphi=2\pi k_0\geq
  \theta \max(N_0,t_\mathit{hull})+\varphi$. Adding the
  iteration $i$, we define
  \begin{align*}
    k_i & = \vceil{\frac{\theta \max(N_0,t_\mathit{hull}) + \varphi}{2\pi}} + i &
    t_i &= \frac{2\pi k_i-\varphi}{\theta}
  \end{align*}
  We solve $\phi(t)=e_0$ starting from $t_i$ and we obtain a
  solution
  $t'\in[t_i,t_i+\frac{1}{\theta}\tfrac{\pi}{2}]$. We then consider
  $n_0 = \begin{cases}
    t'+1 &\text{ if } t'\in\mathbb{N} \\
    \ceil{t'} &\text{ otherwise}
  \end{cases}
  $ that satisfies $n_0-t_1<\frac{\pi}{\theta}$. If
  $\phi(n_0)>e_0$ then it is the solution, otherwise we try the
  next $i$ up to some arbitrary bound.
\end{enumerate}

\subsubsection{Bounding some linear combination of two coefficients}
\label{sec:maxiter:complex:two}

We handle only the case $\gamma_1=\gamma_2$, $\theta_1=\theta_2$,
$k_1=k_2=0$ and $r_1\neq r_2$. In this case we exploit
Proposition~\ref{prop:bound:complex:two:associated} to reduce the
problem to the one considered in Section~\ref{sec:maxiter:complex:single}.

\end{document}